\documentclass[11pt,a4paper]{article}
\usepackage[margin=25mm]{geometry}
\usepackage[T1]{fontenc}
\usepackage{lmodern}
\usepackage{amsmath,amssymb,amsthm,mathtools}
\usepackage{microtype,needspace,cite,etoolbox}
\apptocmd{\thebibliography}{\setlength{\itemsep}{0pt}}{}{}
\usepackage[dvipsnames]{xcolor}
\usepackage[colorlinks=true,linkcolor=MidnightBlue,citecolor=MidnightBlue,urlcolor=MidnightBlue]{hyperref}
\usepackage[capitalise,nameinlink,noabbrev]{cleveref}
\usepackage{booktabs,array,pgfplots,placeins}
\usepackage{authblk}
\pgfplotsset{compat=1.18}
\newtheorem{theorem}{Theorem}[section]
\newtheorem{lemma}[theorem]{Lemma}
\newtheorem{proposition}[theorem]{Proposition}
\newtheorem{corollary}[theorem]{Corollary}
\theoremstyle{remark}

\numberwithin{equation}{section}
\newcommand{\cA}{\mathcal A}
\newcommand{\cN}{\mathcal N}
\newcommand{\cM}{\mathcal M}
\newcommand{\cL}{\mathcal L}
\newcommand{\cD}{\mathcal D}
\newcommand{\cS}{\mathcal S}
\newcommand{\cF}{\mathcal F}
\newcommand{\Hraw}{H_0}
\newcommand{\Tr}{\operatorname{Tr}}
\newcommand{\rank}{\operatorname{rank}}
\newcommand{\supp}{\operatorname{supp}}
\newcommand{\diag}{\operatorname{diag}}
\newcommand{\id}{\operatorname{id}}
\newcommand{\ran}{\operatorname{range}}
\newcommand{\ket}[1]{\lvert #1\rangle}
\newcommand{\bra}[1]{\langle #1\rvert}
\newcommand{\proj}[1]{\lvert #1\rangle\!\langle #1\rvert}
\newcommand{\normF}[1]{\lVert #1\rVert_F}
\newcommand{\normop}[1]{\lVert #1\rVert_{\mathrm{op}}}
\newcommand{\EF}{E_{F}}
\newcommand{\EC}{E_{C}}
\newcommand{\Ech}{E_{F}^{\mathrm{ch}}}
\newcommand{\Epar}{E_{C}^{\mathrm{par}}}

\newcommand{\roof}[1]{\widehat S_{#1}}

\title{Classical Capacity and Entanglement Cost of the Amplitude Damping Channel}

\author[1]{Ziao Tang}
\author[2]{Chengkai Zhu}
\author[1,*]{Ge Bai}
\author[1,$\dagger$]{Xin Wang}

\affil[1]{The Hong Kong University of Science and Technology (Guangzhou),
Guangdong 511453, China}

\affil[2]{QudeLeap Research, Shanghai 200030, China}

\begin{document}
\maketitle
\begingroup
\renewcommand{\thefootnote}{\fnsymbol{footnote}}
\footnotetext[1]{\href{mailto:gebai@hkust-gz.edu.cn}
{gebai@hkust-gz.edu.cn}}

\footnotetext[2]{\href{mailto:felixxinwang@hkust-gz.edu.cn}
{felixxinwang@hkust-gz.edu.cn}}
\endgroup
\begin{abstract}
Determining a noisy quantum channel's classical capacity and entanglement cost generally requires regularization
over many channel uses. We remove both regularizations for every
qubit-to-qubit channel admitting a pure output. For each such
channel, Holevo information, channel entanglement of formation,
and parallel entanglement cost are additive with those of any
finite-dimensional partner channel. This class includes all
qubit-to-qubit channels of Kraus rank at most two. For the amplitude damping channel with damping
probability $p$, the unassisted classical capacity equals the known
single-use Holevo information, attained by a binary pure-state ensemble with collective decoding, and the entanglement cost is
$h_2((1+\sqrt p)/2)$ ebits per use. The common mechanism is a support criterion for strong superadditivity of entanglement of
formation: one marginal has no support on the sector in which both local systems are orthogonal to fixed distinguished vectors. We
prove this criterion in arbitrary finite dimensions using a triangular block-matrix entropy inequality and decompositions preserving two expectations. For amplitude damping, we also derive an exact finite-block Holevo deficit, identify the unique optimal average input for $p<1$, and construct a binary Kraus representation attaining the uniform formation bound.
\end{abstract}

\setcounter{tocdepth}{1}
\tableofcontents

\section{Introduction}\label{sec:intro}
A noisy quantum channel can be studied through the information
it transmits and the resources required to simulate it. These
questions have different operational meanings, but share a
mathematical obstruction: correlations across channel uses can
change the asymptotic rate. For a finite-dimensional memoryless
channel $\cN$, the Holevo--Schumacher--Westmoreland theorem gives
the unassisted classical capacity as
\begin{equation}\label{eq:hsw}
 C(\cN)=\lim_{n\to\infty}\frac1n\chi(\cN^{\otimes n}),
\end{equation}
where $\chi(\cN)$ is the optimized Holevo information
\cite{holevo,schumacher-westmoreland}. Product-state encodings
achieve $\chi(\cN)$; identifying this rate with the capacity
requires a converse allowing entangled codewords. The parallel
entanglement cost of simulation by local operations and classical
communication (LOCC) is likewise a regularized quantity, with
channel entanglement of formation replacing Holevo information
\cite{op:BBCW}. Additivity removes these regularizations. Since
it fails for general channels \cite{hastings}, a single-use
characterization needs a structural reason specific to the channel.

The qubit amplitude-damping channel models relaxation from an
excited state to a ground state and provides a basic nonunital
test case. Its quantum and entanglement-assisted classical
capacities have single-letter expressions
\cite{giovannetti-fazio,khatri,op:DS,op:BSST}. Giovannetti and
Fazio evaluated the single-use Holevo information and its
optimal binary pure-state ensemble \cite{giovannetti-fazio};
the same binary-ensemble optimization appears in the earlier
analysis of Bennett et al.~\cite[Sec.~III.B]{op:BSST}. The distinction between this achievable
rate and the unassisted capacity is the possible advantage of
entangled encodings. Additivity for entanglement-breaking
channels \cite{shor-eb} and unital qubit channels \cite{king}
does not settle the nontrivial amplitude-damping regime
$0<p<1$.

Several approaches bound this possible advantage. Brand\~ao
et al. subtract an output--environment correlation penalty from
an output-entropy bound \cite{brandao}. Semidefinite-programming
converses \cite{wang-xie-duan}, approximate-additivity estimates
\cite{leditzky}, and symmetry-reduced hierarchies \cite{fawzi}
provide other ways to control the regularized capacity. Khatri,
Sharma, and Wilde compare capacity bounds for generalized amplitude damping \cite{khatri}. More recently, Fang makes a correlation-based converse computable for that family and obtains close numerical agreement with the Holevo information at positive temperature~\cite{fang2026}. However, it is still unclear whether its known single-use rate is exact even with arbitrary entangled encodings.

For simulation, a related gap occurs between two state quantities.
If $J_{\cN}$ is the normalized Choi state, every qubit-to-qubit
channel satisfies
\begin{equation}\label{eq:intro-sandwich}
 \EC(J_{\cN})\le\Epar(\cN)\le\EF(J_{\cN}),
\end{equation}
where $\EF$ and $\EC$ are state entanglement of formation and
state entanglement cost, and $\Epar$ is parallel channel
entanglement cost \cite{op:BBCW,op:Konrad}. The upper bound
is computable from a single two-qubit state, but the lower bound
is regularized. General lower bounds on channel entanglement
cost provide complementary estimates
\cite{lami-regula2023,wang-jing-zhu2025}. To make
\eqref{eq:intro-sandwich} exact, it suffices to establish
$\EC(J_{\cN})=\EF(J_{\cN})$. Thus both operational questions
lead to additivity, evaluated on different state families.

In this work, we remove both regularizations for every
qubit-to-qubit channel admitting a pure output. The common
mechanism is a support constraint: two local systems cannot both
be orthogonal to fixed distinguished vectors. This constraint
implies strong superadditivity of entanglement of formation,
including for correlated extensions with arbitrary
finite-dimensional auxiliary systems. It can be applied either
to output--environment states of a Stinespring dilation or to
the reference--output Choi state.

First, we prove the support criterion in arbitrary finite
dimensions (\cref{thm:support}). For a pure extension, the
excluded joint sector produces a triangular coefficient matrix.
A matrix entropy inequality (\cref{thm:triangle}) separates
its entanglement into a scalar contribution and weighted block
entropies. Decompositions preserving two expectations
\cite{leka-petz,leka} bound the first marginal's formation by
the scalar term; local monotonicity and convexity bound the
second marginal's formation by the block terms. This proves
the required inequality without evaluating either mixed-state
convex roof. The constraint concerns support and allows
coherence between the permitted sectors.

Second, we derive the two channel consequences. The
Matsumoto--Shimono--Winter correspondence \cite{msw} converts
the support criterion for a Stinespring image into strong
Holevo additivity (\cref{cor:stinespring}). For qubit-to-qubit
channels, this support condition is equivalent to admitting
a pure output (\cref{lem:pure-output-stinespring}) and to the
existence of a product vector in the kernel of the Choi state
(\cref{sec:simulation-criterion}). The latter condition makes
the bounds in \eqref{eq:intro-sandwich} coincide. Consequently,
Holevo information, channel entanglement of formation, and
parallel entanglement cost are additive with arbitrary partner
channels throughout the pure-output class
(\cref{thm:two-kraus-unified,cor:pure-output}). This class
includes every qubit-to-qubit channel of Kraus rank at most
two (\cref{cor:two-kraus}). These applications use the relation
between channel and state additivity developed in
\cite{msw,shor-equivalence}; their support condition is
different from an entanglement-breaking-subspace condition
\cite{zhao-chen}.

For amplitude damping, the results identify the known
single-use Holevo optimization with the unassisted classical
capacity and give the parallel simulation cost
$h_2((1+\sqrt p)/2)$ ebits per use
(\cref{thm:damping-capacity,thm:ad-cost}). An exact finite-block
Holevo deficit identifies the unique optimal average input
for $p<1$ (\cref{prop:deficit}). A binary Kraus representation
also attains a formation upper bound uniformly over all
block inputs (\cref{prop:binary-instrument}). The new
operational conclusions come from the matching many-use
converses, rather than from a new single-use capacity formula
or a new two-qubit formation formula.

The support criterion explains why these regularizations
collapse and gives a concrete route to look for further
additive channels: identify the excluded sector in the
appropriate bipartition. Extending the argument to finite
temperature would require controlling a nonzero weight in
that sector. Throughout this work, simulation approximates
tensor-power channels in diamond norm, with classical
communication free. Simulation against adaptive tests is
a distinct task \cite{op:Wilde2018,lami-regula2023}.

\paragraph{Organization.}
\Cref{sec:setup,sec:results} specify the operational models
and summarize the channel conclusions. \Cref{sec:support-proof}
proves the support theorem; \cref{sec:capacity,sec:simulation}
derive its communication and simulation consequences.
The appendices give the matrix proof, concurrence calculations,
uniform Kraus argument, and comparisons with established rates.

\section{Preliminaries}\label{sec:setup}
\subsection{Notation}
All Hilbert spaces are finite dimensional. We write $\cD(X)$
and $\cL(X)$ for the density operators and linear operators on
$X$, respectively. A quantum channel is a completely positive,
trace-preserving linear map. Its Kraus rank is the minimum number
of operators in a Kraus representation. A channel admits a pure
output if $\cN(\proj\psi)=\proj\phi$ for some unit vectors
$\psi,\phi$. When a pure vector appears as a
state or a channel argument, it denotes its rank-one projector.
Entanglement across a bipartition $A:B$ is indicated by a colon.

The von Neumann entropy is $S(\rho)=-\Tr\rho\log_2\rho$,
and the binary entropy is defined by
$h_2(x)=-x\log_2x-(1-x)\log_2(1-x)$, with $0\log_2 0=0$.
We use base-two logarithms unless otherwise specified; $\ln$
denotes the natural logarithm. The quantum relative entropy is
\begin{equation}
 D(\rho\Vert\sigma)=\Tr\rho(\log_2\rho-\log_2\sigma),
\end{equation}
with value $+\infty$ when $\supp\rho\nsubseteq\supp\sigma$.
For a matrix $M$, the symbols $M^\dagger$ and $M^*$ denote the adjoint and entrywise conjugate. Its Frobenius and operator
norms are denoted by $\normF M=(\Tr M^\dagger M)^{1/2}$
and $\normop M$. For later use, define the scalar function
\begin{equation}\label{eq:E-function}
 e(c)=h_2\!\left(\frac{1+\sqrt{1-c^2}}2\right),
 \qquad 0\le c\le1.
\end{equation}
For a two-qubit state, this is the function relating concurrence
to entanglement of formation in Wootters' formula~\cite{wootters}. It is increasing and convex on $[0,1]$, as recalled in
\cref{app:scalar}.

The qubit amplitude-damping channel with damping probability
$p\in[0,1]$ is defined by
\begin{equation}\label{eq:ad-kraus}
 \cA_p(\tau)=K_0\tau K_0^\dagger+K_1\tau K_1^\dagger,
 \qquad
 K_0=\begin{pmatrix}1&0\\0&\sqrt{1-p}\end{pmatrix},\quad
 K_1=\begin{pmatrix}0&\sqrt p\\0&0\end{pmatrix}.
\end{equation}
An Stinespring isometry $V_p:X\to B\otimes E$, with qubit
input $X$, output $B$, and environment $E$, is
\begin{equation}\label{eq:dilation}
 V_p\ket0=\ket{0_B0_E},\qquad
 V_p\ket1=\sqrt{1-p}\ket{1_B0_E}+\sqrt p\ket{0_B1_E}.
\end{equation}
Its image is orthogonal to $\ket{1_B1_E}$, a support constraint
that will imply strong Holevo additivity.

\subsection{Unassisted classical communication}
A length-$n$ code for $\cN:X\to B$ consists of input states
$\rho_m\in\cD(X^{\otimes n})$ and a decoding POVM
$\{\Lambda_m\}_{m=1}^{M_n}$ on $B^{\otimes n}$. For
equiprobable messages, its average error is
\[
 \epsilon_n=1-\frac1{M_n}\sum_{m=1}^{M_n}
       \Tr[\Lambda_m\cN^{\otimes n}(\rho_m)].
\]
The unassisted classical capacity is the supremum of rates
achievable by code sequences for which $\epsilon_n\to0$, with
rate measured as $n^{-1}\log_2M_n$ bits per use. Encodings may
be entangled across channel uses, and decoding may be collective.
No preshared entanglement or auxiliary classical communication
is available.

At a fixed average input $\tau$, define
\begin{equation}\label{eq:output-roof}
 \roof{\cN}(\tau)=
 \min_{\tau=\sum_jw_j\proj{\psi_j}}
       \sum_jw_jS(\cN(\psi_j)),\qquad
 \chi_{\cN}(\tau)=S(\cN(\tau))-\roof{\cN}(\tau).
\end{equation}
The minimum is over finite ensembles of normalized pure states,
with $w_j\ge0$ and $\sum_jw_j=1$. Thus $\roof{\cN}$ is
the convex roof of the output entropy, and $\chi_{\cN}(\tau)$
is the largest Holevo information of an ensemble averaging to
$\tau$. The former is convex and the latter concave in $\tau$.
Optimizing over the average input gives
$\chi(\cN)=\max_\tau\chi_{\cN}(\tau)$.

\subsection{Parallel channel simulation}
Alice holds the input, and Bob is required to receive the output.
They have access to LOCC and a shared maximally entangled state
$\Phi_{M_n}$, where $\Phi_M$ is the projector onto
$M^{-1/2}\sum_{j=1}^M\ket j\ket j$. An LOCC simulation
protocol $\Lambda_n$ induces a channel satisfying
\begin{equation}\label{eq:simulation-model}
 \widetilde\cN_n(\xi)=\Lambda_n(\xi\otimes\Phi_{M_n}),
 \qquad
 \frac12\|\widetilde\cN_n-\cN^{\otimes n}\|_\diamond\longrightarrow0.
\end{equation}
The parallel entanglement cost $\Epar(\cN)$ is the infimum,
over such protocol sequences, of
$\limsup_{n\to\infty}n^{-1}\log_2M_n$
\cite{op:BBCW,op:Wilde2018}. It is measured in ebits per use,
with classical communication free. Diamond-norm convergence
requires the simulator to approximate the target uniformly over
all inputs, including those correlated across uses and with an
arbitrary reference. The target is the tensor-power channel;
the definition does not include adaptive interleaving of channel
calls.

For a bipartite state $\rho_{AB}$, the entanglement cost $\EC(\rho)$ is the minimum asymptotic number of shared ebits per copy needed to prepare $\rho_{AB}^{\otimes n}$ by LOCC
with vanishing trace-distance error. Define its entanglement of formation by
\begin{equation}\label{eq:state-roof}
 \EF(\rho_{AB})=
 \min_{\rho=\sum_jw_j\proj{\psi_j}}
       \sum_jw_j S(\Tr_B\proj{\psi_j}).
\end{equation}
Its regularization characterizes the state entanglement cost~\cite{hht}:
\begin{equation}\label{eq:state-cost}
\EC(\rho)=\lim_{n\to\infty}\frac1n\EF(\rho^{\otimes n}).
\end{equation}
For $d_X=\dim X$, the normalized Choi state of $\cN$ is
$J_{\cN}=(\id\otimes\cN)(\Phi_{d_X})$. The channel
entanglement of formation is~\cite{op:BBCW}
\begin{equation}\label{eq:channel-formation}
 \Ech(\cN)=\max_{\psi_{RX}\ \mathrm{pure}}
       \EF\!\left((\id_R\otimes\cN)(\psi_{RX})\right),
 \qquad R\cong X.
\end{equation}
Here, the entanglement is evaluated across $R:B$. The
regularization of $\Ech$ characterizes $\Epar$; we state
the precise relation in \eqref{eq:berta-regularization}.
The regularized limits used here exist: Holevo information
is superadditive under tensor products, while state entanglement
of formation is subadditive. Channel entanglement of formation
is also subadditive, as shown in \cref{lem:channel-subadditivity}.
These quantities have linear dimension bounds, so Fekete's
lemma applies to their tensor-power sequences.

\section{Main results}\label{sec:results}
We summarize the channel conclusions before proving them in
\cref{sec:capacity,sec:simulation}. The first result identifies a common class for classical communication and channel simulation. The amplitude-damping channel then admits explicit
evaluations of both operational rates. All partner channels
below have arbitrary finite input and output dimensions.

\paragraph{Main result 1: qubit channels admitting a pure output
(see \cref{cor:pure-output}).}
For every qubit-to-qubit channel $\cN$ admitting a pure output,
\[
 C(\cN)=\chi(\cN),\qquad
 \Epar(\cN)=\EF(J_{\cN}).
\]
Moreover, for every finite-dimensional channel $\cM$,
\begin{align*}
 \chi(\cN\otimes\cM)&=\chi(\cN)+\chi(\cM),\\
 C(\cN\otimes\cM)&=C(\cN)+C(\cM),\\
 \Ech(\cN\otimes\cM)&=\Ech(\cN)+\Ech(\cM),\\
 \Epar(\cN\otimes\cM)&=\Epar(\cN)+\Epar(\cM).
\end{align*}
The result also identifies the Choi-state cost and the channel
formation quantity with the same value and gives the corresponding
tensor-power identities. It removes the many-use regularizations;
the single-use ensemble optimization defining $\chi(\cN)$ remains
for a general channel in this class. The communication statement
follows from \cref{lem:pure-output-stinespring,cor:stinespring}.
\Cref{thm:two-kraus-unified} collects the communication,
formation, and simulation conclusions for Kraus rank at most
two; \cref{cor:pure-output} extends all of them to the
pure-output class using the equivalent support and Choi-state
criteria.

\paragraph{Main result 2: classical capacity of amplitude damping
(see \cref{thm:damping-capacity}).}
For every $p\in[0,1]$,
\[
 C(\cA_p)=\max_{0\le q\le1}
       \left[h_2((1-p)q)-h_2\left(\frac{1}{2}+\frac{\sqrt{1-4p(1-p)q^2}}{2}\right)\right].
\]
For an optimizing $q_*$, the equiprobable states
$\sqrt{1-q_*}\ket0\pm\sqrt{q_*}\ket1$ yield optimal
product encodings with collective decoding. This scalar optimization and binary alphabet appear in
\cite[Sec.~III.B]{op:BSST} and
\cite[Eqs.~(42)--(43)]{giovannetti-fazio}. The new converse identifies
their rate with the unassisted capacity even when codewords
are entangled across uses. The finite-block refinement in
\cref{prop:deficit} determines the unique optimal average input
at every blocklength for $p<1$.

\paragraph{Main result 3: entanglement cost of amplitude damping
(see \cref{thm:ad-cost}).}
Under the parallel LOCC simulation model of
\eqref{eq:simulation-model},
\[
 \Epar(\cA_p)=h_2\!\left(\frac{1+\sqrt p}{2}\right).
\]
This is also the entanglement of formation and entanglement
cost of $J_{\cA_p}$. The single-copy formation value and the
channel-simulation upper bound are established results
\cite{wootters,op:BBCW}; strong Choi-state additivity supplies
the matching lower bound. \Cref{thm:ad-cost} also gives exact
finite-copy channel formation and arbitrary-partner additivity.
\Cref{prop:binary-instrument} provides an explicit two-outcome
Kraus representation attaining the required uniform bound.

\Cref{fig:ad-rates} compares the two rates determined here
with the established quantum capacity $Q$ and
entanglement-assisted classical capacity $C_E$
\cite{giovannetti-fazio,khatri,op:BSST,op:DS}. 

\begin{figure}[tbp]
\centering
\pgfplotstableread[row sep=\\]{
p C Q cost CE\\
0 1 1 1 2\\
0.005 0.986268799202 0.974843103377 0.996390250756 1.97480637424\\
0.01 0.975165289823 0.954716150941 0.992774453988 1.95461339555\\
0.015 0.965080752987 0.936474477566 0.989152573042 1.93629211381\\
0.02 0.955662546339 0.919444872114 0.985524570864 1.91917500928\\
0.025 0.946740512156 0.903306846731 0.981890409996 1.90294468961\\
0.03 0.938213251762 0.887871445905 0.978250052568 1.88741413916\\
0.035 0.930013151273 0.873013787887 0.974603460292 1.87245979142\\
0.04 0.922091803669 0.858645144252 0.970950594455 1.85799385611\\
0.045 0.914412801042 0.844699221601 0.967291415912 1.84395073313\\
0.05 0.906947747394 0.83112461607 0.96362588508 1.83027954444\\
0.055 0.899673868627 0.817880314113 0.959953961929 1.81693968321\\
0.06 0.892572490493 0.804932841164 0.956275605978 1.80389799424\\
0.065 0.885628022216 0.792254366236 0.952590776282 1.79112690027\\
0.07 0.878827251343 0.779821392497 0.948899431429 1.77860310789\\
0.075 0.872158838865 0.767613823488 0.945201529531 1.76630668482\\
0.08 0.865612948071 0.75561427925 0.941497028217 1.75422038426\\
0.085 0.85918096549 0.743807583962 0.93778588462 1.74232913875\\
0.09 0.852855286987 0.732180374524 0.934068055375 1.73061967347\\
0.095 0.846629150994 0.720720796362 0.930343496608 1.71908020579\\
0.1 0.840496506564 0.709418263474 0.926612163925 1.70770020818\\
0.105 0.834451907584 0.698263266606 0.922874012406 1.69647021874\\
0.11 0.828490426965 0.687247218076 0.919128996595 1.68538168788\\
0.115 0.822607586304 0.676362324907 0.915377070489 1.67442685298\\
0.12 0.816799297669 0.665601484088 0.911618187532 1.66359863484\\
0.125 0.811061815012 0.654958195359 0.907852300602 1.65289055157\\
0.13 0.805391693285 0.644426488004 0.904079362002 1.64229664614\\
0.135 0.7997857538 0.634000858941 0.900299323449 1.63181142523\\
0.14 0.794241054679 0.623676220029 0.896512136067 1.62142980713\\
0.145 0.78875486549 0.613447852932 0.892717750371 1.61114707705\\
0.15 0.783324645365 0.603311370237 0.88891611626 1.60095884871\\
0.155 0.777948024001 0.593262681778 0.885107183003 1.59086103101\\
0.16 0.772622785109 0.583297965333 0.881290899231 1.58084979902\\
0.165 0.767346851906 0.573413640994 0.877467212919 1.57092156862\\
0.17 0.762118274368 0.563606348666 0.873636071383 1.56107297422\\
0.175 0.756935217958 0.553872928228 0.869797421258 1.55130084911\\
0.18 0.751795953647 0.544210401964 0.865951208492 1.54160220807\\
0.185 0.746698849033 0.534615958974 0.86209737833 1.53197423189\\
0.19 0.741642360411 0.525086941262 0.858235875302 1.52241425365\\
0.195 0.736625025681 0.51562083131 0.854366643207 1.51291974632\\
0.2 0.731645457976 0.506215240927 0.850489625102 1.50348831177\\
0.205 0.726702339919 0.496867901221 0.846604763287 1.49411767078\\
0.21 0.721794418448 0.487576653555 0.842711999287 1.48480565408\\
0.215 0.716920500124 0.47833944137 0.83881127384 1.47555019425\\
0.22 0.712079446872 0.469154302771 0.834902526883 1.46634931835\\
0.225 0.707270172116 0.460019363791 0.830985697531 1.45720114121\\
0.23 0.702491637248 0.450932832255 0.827060724064 1.44810385938\\
0.235 0.697742848404 0.441892992183 0.823127543911 1.43905574551\\
0.24 0.693022853518 0.432898198669 0.819186093629 1.43005514334\\
0.245 0.688330739609 0.423946873186 0.81523630889 1.42110046293\\
0.25 0.683665630305 0.415037499279 0.811278124459 1.41219017647\\
0.255 0.679026683545 0.406168618602 0.807311474177 1.40332281429\\
0.26 0.674413089476 0.397338827262 0.803336290943 1.39449696123\\
0.265 0.669824068501 0.38854677245 0.799352506689 1.38571125327\\
0.27 0.665258869475 0.379791149321 0.795360052368 1.37696437447\\
0.275 0.660716768036 0.371070698105 0.791358857927 1.36825505405\\
0.28 0.656197065056 0.362384201432 0.787348852285 1.35958206372\\
0.285 0.651699085196 0.353730481835 0.783329963317 1.35094421522\\
0.29 0.647222175566 0.345108399441 0.779302117825 1.34234035801\\
0.295 0.64276570448 0.336516849809 0.775265241518 1.3337693771\\
0.3 0.638329060282 0.327954761914 0.771219258988 1.32523019104\\
0.305 0.633911650259 0.319421096268 0.767164093685 1.31672175006\\
0.31 0.629512899624 0.310914843155 0.763099667888 1.30824303429\\
0.315 0.625132250553 0.302435020983 0.759025902685 1.29979305212\\
0.32 0.620769161295 0.293980674729 0.754942717943 1.29137083866\\
0.325 0.616423105325 0.28555087449 0.750850032279 1.28297545425\\
0.33 0.612093570556 0.277144714111 0.746747763034 1.27460598313\\
0.335 0.607780058587 0.26876130989 0.742635826242 1.26626153212\\
0.34 0.603482084007 0.260399799367 0.738514136599 1.25794122943\\
0.345 0.599199173726 0.252059340172 0.734382607435 1.24964422347\\
0.35 0.594930866348 0.243739108936 0.730241150676 1.24136968179\\
0.355 0.590676711582 0.235438300261 0.726089676819 1.23311679004\\
0.36 0.586436269674 0.227156125746 0.721928094887 1.22488475097\\
0.365 0.582209110879 0.218891813059 0.717756312405 1.21667278355\\
0.37 0.577994814951 0.210644605052 0.713574235355 1.208480122\\
0.375 0.573792970666 0.202413758925 0.709381768141 1.20030601505\\
0.38 0.569603175361 0.194198545425 0.705178813552 1.19214972503\\
0.385 0.565425034498 0.185998248081 0.700965272719 1.18401052718\\
0.39 0.561258161249 0.177812162475 0.696741045073 1.17588770888\\
0.395 0.557102176099 0.169639595543 0.692506028308 1.16778056896\\
0.4 0.552956706463 0.161479864901 0.688260118327 1.15968841698\\
0.405 0.548821386325 0.153332298206 0.684003209205 1.15161057266\\
0.41 0.544695855889 0.145196232536 0.679735193135 1.14354636518\\
0.415 0.540579761239 0.137071013791 0.675455960385 1.1354951326\\
0.42 0.536472754023 0.128955996122 0.671165399241 1.12745622131\\
0.425 0.532374491138 0.120850541372 0.666863395958 1.1194289854\\
0.43 0.528284634436 0.112754018542 0.662549834707 1.11141278619\\
0.435 0.524202850427 0.104665803266 0.658224597514 1.10340699165\\
0.44 0.520128810007 0.0965852773109 0.653887564205 1.09541097589\\
0.445 0.516062188186 0.0885118280767 0.649538612346 1.08742411868\\
0.45 0.512002663823 0.0804448481232 0.645177617176 1.07944580496\\
0.455 0.507949919372 0.0723837346993 0.640804451548 1.07147542431\\
0.46 0.503903640635 0.0643278892857 0.636418985855 1.06351237058\\
0.465 0.499863516521 0.056276717146 0.632021087967 1.05555604131\\
0.47 0.495829238805 0.048229626887 0.627610623152 1.0476058374\\
0.475 0.491800501903 0.0401860300245 0.623187454005 1.03966116257\\
0.48 0.487777002641 0.0321453405574 0.618751440365 1.03172142297\\
0.485 0.483758440037 0.0241069745448 0.61430243924 1.02378602671\\
0.49 0.479744515077 0.0160703496897 0.609840304716 1.01585438348\\
0.495 0.475734930502 0.00803488492458 0.605364887874 1.00792590408\\
0.5 0.471729390599 0 0.600876036693 1\\
0.505 0.467727600982 0 0.596373595961 0.992076082999\\
0.51 0.463729268391 0 0.591857407171 0.984153564687\\
0.515 0.459734100481 0 0.58732730842 0.976231856089\\
0.52 0.455741805615 0 0.5827831343 0.968310367218\\
0.525 0.451752092659 0 0.578224715789 0.960388506649\\
0.53 0.447764670776 0 0.573651880129 0.952465681078\\
0.535 0.443779249219 0 0.569064450707 0.944541294887\\
0.54 0.439795537125 0 0.56446224693 0.9366147497\\
0.545 0.435813243305 0 0.559845084086 0.928685443927\\
0.55 0.431832076035 0 0.555212773212 0.920752772312\\
0.555 0.427851742841 0 0.550565120942 0.912816125459\\
0.56 0.423871950288 0 0.545901929363 0.904874889358\\
0.565 0.41989240376 0 0.541222995848 0.896928444898\\
0.57 0.415912807236 0 0.536528112896 0.888976167364\\
0.575 0.411932863068 0 0.531817067957 0.881017425922\\
0.58 0.407952271748 0 0.527089643247 0.873051583097\\
0.585 0.40397073167 0 0.52234561556 0.865077994223\\
0.59 0.399987938894 0 0.517584756066 0.857096006882\\
0.595 0.39600358689 0 0.512806830103 0.849104960323\\
0.6 0.39201736629 0 0.508011596952 0.841104184863\\
0.605 0.388028964619 0 0.503198809609 0.833093001258\\
0.61 0.384038066027 0 0.49836821454 0.825070720059\\
0.615 0.380044351004 0 0.493519551425 0.817036640933\\
0.62 0.376047496092 0 0.488652552884 0.808990051961\\
0.625 0.372047173579 0 0.483766944201 0.800930228905\\
0.63 0.368043051187 0 0.478862443017 0.792856434436\\
0.635 0.364034791744 0 0.473938759015 0.784767917336\\
0.64 0.36002205284 0 0.468995593589 0.776663911651\\
0.645 0.356004486471 0 0.464032639492 0.768543635807\\
0.65 0.351981738667 0 0.459049580459 0.760406291683\\
0.655 0.347953449095 0 0.454046090818 0.75225106363\\
0.66 0.343919250653 0 0.44902183507 0.744077117439\\
0.665 0.339878769034 0 0.443976467448 0.735883599252\\
0.67 0.335831622272 0 0.438909631448 0.727669634414\\
0.675 0.33177742026 0 0.433820959334 0.719434326251\\
0.68 0.327715764244 0 0.428710071607 0.711176754783\\
0.685 0.323646246288 0 0.423576576443 0.702895975354\\
0.69 0.319568448702 0 0.418420069099 0.694591017176\\
0.695 0.315481943444 0 0.413240131273 0.686260881783\\
0.7 0.311386291475 0 0.408036330427 0.677904541385\\
0.705 0.307281042084 0 0.402808219061 0.669520937104\\
0.71 0.303165732158 0 0.397555333942 0.661108977103\\
0.715 0.299039885411 0 0.392277195272 0.652667534574\\
0.72 0.294903011558 0 0.386973305804 0.64419544559\\
0.725 0.290754605428 0 0.381643149889 0.635691506797\\
0.73 0.286594146023 0 0.376286192461 0.627154472935\\
0.735 0.282421095492 0 0.370901877931 0.618583054178\\
0.74 0.278234898049 0 0.365489629015 0.609975913261\\
0.745 0.27403497879 0 0.360048845459 0.601331662384\\
0.75 0.269820742425 0 0.354578902665 0.592648859867\\
0.755 0.265591571913 0 0.349079150208 0.583926006527\\
0.76 0.261346826977 0 0.343548910233 0.575161541753\\
0.765 0.257085842497 0 0.337987475708 0.566353839245\\
0.77 0.252807926771 0 0.332394108541 0.557501202373\\
0.775 0.248512359614 0 0.326768037521 0.548601859123\\
0.78 0.244198390297 0 0.321108456075 0.539653956577\\
0.785 0.239865235285 0 0.315414519825 0.530655554875\\
0.79 0.235512075772 0 0.309685343915 0.521604620594\\
0.795 0.231138054971 0 0.303920000072 0.512499019472\\
0.8 0.226742275142 0 0.298117513395 0.503336508396\\
0.805 0.22232379431 0 0.2922768588 0.494114726564\\
0.81 0.217881622651 0 0.286396957116 0.4848311857\\
0.815 0.213414718485 0 0.280476670756 0.475483259204\\
0.82 0.208921983827 0 0.27451479892 0.466068170085\\
0.825 0.204402259442 0 0.268510072257 0.456582977498\\
0.83 0.199854319319 0 0.26246114692 0.447024561692\\
0.835 0.195276864487 0 0.25636659791 0.437389607113\\
0.84 0.190668516065 0 0.250224911611 0.427674583388\\
0.845 0.186027807432 0 0.244034477382 0.417875723834\\
0.85 0.181353175367 0 0.237793578058 0.407989001103\\
0.855 0.176642949983 0 0.231500379179 0.398010099447\\
0.86 0.171895343253 0 0.225152916717 0.387934383023\\
0.865 0.167108435871 0 0.218749083043 0.377756859501\\
0.87 0.162280162121 0 0.2122866108 0.367472138088\\
0.875 0.15740829239 0 0.205763054276 0.357074380857\\
0.88 0.152490412829 0 0.199175767779 0.34655724601\\
0.885 0.14752390157 0 0.192521880389 0.33591382135\\
0.89 0.142505900734 0 0.185798266282 0.325136545775\\
0.895 0.137433283273 0 0.179001509633 0.314217115999\\
0.9 0.132302613384 0 0.172127862784 0.303146374895\\
0.905 0.127110098886 0 0.165173195997 0.291914176732\\
0.91 0.121851533411 0 0.15813293656 0.280509223086\\
0.915 0.116522225549 0 0.151001994273 0.268918861051\\
0.92 0.111116911054 0 0.143774669286 0.257128832382\\
0.925 0.105629642778 0 0.13644453674 0.245122957844\\
0.93 0.100053650797 0 0.129004300441 0.232882734584\\
0.935 0.0943811619592 0 0.121445604408 0.22038681469\\
0.94 0.0886031630599 0 0.113758786015 0.207610317983\\
0.945 0.0827090838406 0 0.105932546156 0.194523908146\\
0.95 0.0766863628629 0 0.0979534984195 0.181092521625\\
0.955 0.0705198368345 0 0.0898055361458 0.167273570714\\
0.96 0.0641908537032 0 0.0814689150144 0.153014319904\\
0.965 0.057675933708 0 0.0729188708759 0.13824790209\\
0.97 0.0509446485535 0 0.0641234350979 0.122886968586\\
0.975 0.0439560499956 0 0.055039763728 0.106812921072\\
0.98 0.0366521475437 0 0.0456074489974 0.0898560913475\\
0.985 0.0289445676163 0 0.0357348717764 0.0717548305326\\
0.99 0.020682137338 0 0.0252661277271 0.0520550037332\\
0.995 0.0115438588076 0 0.0138721635339 0.0297732646289\\
1 0 0 0 0\\
}\adRateData

\begin{tikzpicture}
\begin{axis}[
 width=0.96\linewidth,height=8cm,
 xmin=0,xmax=1,ymin=0,ymax=2.04,
 xlabel={Damping probability $p$},
 ylabel={Numerical rate (units in caption)},
 xtick={0,0.1,0.2,0.3,0.4,0.5,0.6,0.7,0.8,0.9,1},
 ytick={0,0.25,0.5,0.75,1,1.25,1.5,1.75,2},
 grid=major,grid style={opacity=0.18},
 tick label style={font=\small},
 legend style={font=\footnotesize,at={(0.98,0.98)},anchor=north east,
               draw=none,fill=white},
 legend cell align=left]
\addplot+[thick,mark=none,solid] table[x=p,y=C]{\adRateData};
\addlegendentry{$C(\cA_p)=\chi(\cA_p)$}
\addplot+[thick,mark=none,dashed] table[x=p,y=Q]{\adRateData};
\addlegendentry{$Q(\cA_p)$}
\addplot+[thick,mark=none,dashdotted] table[x=p,y=cost]{\adRateData};
\addlegendentry{$\Epar(\cA_p)$}
\addplot+[thick,mark=none,dotted] table[x=p,y=CE]{\adRateData};
\addlegendentry{$C_E(\cA_p)$}
\node[font=\small,anchor=south] at (axis cs:0.70,0.06)
 {$Q=0$ for $p\ge1/2$};
\end{axis}
\end{tikzpicture}
\caption{Operational rates of the qubit amplitude-damping channel.
The solid curve is the unassisted classical capacity
(\cref{thm:damping-capacity}), evaluated from the single-use
expression of \cite{giovannetti-fazio}. The dash-dotted curve is
$\Epar(\cA_p)$, equal to $h_2((1+\sqrt p)/2)$ and also
to the channel entanglement of formation and the Choi-state
entanglement of formation and cost (\cref{thm:ad-cost}).
The dashed and dotted curves give the established quantum and
entanglement-assisted classical capacities
\cite{khatri,op:BSST,op:DS}; the latter also equals the
entanglement-assisted classical simulation cost specified in
\cref{app:operational}. Rates are measured in bits per use
for $C,C_E$, qubits per use for $Q$, ebits per use for
$\Epar$, and ebits per copy for $\EC(J_{\cA_p})$.
The resource conventions differ between curves.}
\label{fig:ad-rates}
\end{figure}

\FloatBarrier

\section{Strong superadditivity of entanglement of formation}\label{sec:support-proof}
The channel arguments rely on a state-level inequality: a
constraint on one marginal prevents arbitrary auxiliary
correlations from reducing the sum of the two formation costs.
We first state this support criterion, then identify the two
estimates used in its proof.

Choose unit vectors $\ket{0_A},\ket{0_B}$ and decompose
$A=\mathbb C\ket{0_A}\oplus A_1$ and
$B=\mathbb C\ket{0_B}\oplus B_1$. We write 
\begin{equation}\label{eq:allowed-support}
 \cS(A,B)=\mathbb C\ket{00}
       \oplus(A_1\otimes\ket{0_B})
       \oplus(\ket{0_A}\otimes B_1)
\end{equation}
as the orthogonal complement of $A_1\otimes B_1$.
The support constraint allows arbitrary coherence between its
three summands and imposes no superselection rule.

\begin{theorem}[Strong superadditivity of entanglement of formation]
\label[theorem]{thm:support}
Let $A,B,A',B'$ be arbitrary finite-dimensional systems. If
$\Omega_{ABA'B'}$ has marginal
$\rho_{AB}=\Tr_{A'B'}\Omega$ with
$\supp\rho\subseteq\cS(A,B)$, then, writing
$\sigma_{A'B'}=\Tr_{AB}\Omega$,
\begin{equation}\label{eq:strong-state}
 \EF(\Omega_{AA':BB'})\ge
       \EF(\rho_{A:B})+\EF(\sigma_{A':B'}).
\end{equation}
In particular, for every such $\rho$ and every finite-dimensional
bipartite state $\sigma$,
\begin{equation}\label{eq:strong-state-add}
 \EF(\rho\otimes\sigma)=\EF(\rho)+\EF(\sigma),
 \qquad \EC(\rho)=\EF(\rho).
\end{equation}
\end{theorem}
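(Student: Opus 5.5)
By convexity of $\EF$ it suffices to handle pure decompositions. Take an optimal decomposition $\Omega=\sum_k w_k\proj{\Psi_k}$ for $\EF(\Omega_{AA':BB'})$. Each $\Psi_k$ has $AB$-marginal supported in $\cS(A,B)$, since $\supp\rho\subseteq\cS(A,B)$ and $\rho=\sum_kw_k\Tr_{A'B'}\Psi_k$. Moreover $\sum_kw_k\rho^{(k)}=\rho$ and $\sum_kw_k\sigma^{(k)}=\sigma$, and $\EF$ is convex. It is therefore enough to prove \eqref{eq:strong-state} for a single pure $\Psi_{ABA'B'}$ whose $AB$-marginal lies in $\cS(A,B)$:
\[
S(\Tr_{BB'}\Psi)\ge \EF(\rho^{\Psi}_{A:B})+\EF(\sigma^{\Psi}_{A':B'}).
\]

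For such a $\Psi$, I will expand in the splitting $A=\mathbb C\ket{0_A}\oplus A_1$, $B=\mathbb C\ket{0_B}\oplus B_1$:
\[
\ket\Psi=\ket{0_A0_B}\ket{x}+\sum_i\ket{a_i}\ket{0_B}\ket{y_i}+\sum_j\ket{0_A}\ket{b_j}\ket{z_j},
\]
with $x,y_i,z_j\in A'\otimes B'$ and orthonormal bases $\{a_i\}$ of $A_1$ and $\{b_j\}$ of $B_1$. Reshaping across $AA':BB'$ gives a coefficient operator from $BB'$ to $AA'$. In the block decomposition indexed by $(0_A,A_1)\times(0_B,B_1)$, the $A_1B_1$ block vanishes, so the matrix is block-triangular,
\[
M=\begin{pmatrix} X & Z\\ Y & 0\end{pmatrix},
\]
where $X,Y,Z$ are built from the reshaped $A'B'$ vectors. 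I then invoke the triangular block-matrix entropy inequality (\cref{thm:triangle}) to bound the entanglement entropy $S(MM^\dagger)$ below by a scalar term plus weighted block terms. The scalar term should be the binary-type entropy $e(\cdot)$ of a concurrence-like quantity fixed by two expectation values: the weight on $\ket{0_A}$ and the weight on $\ket{0_B}$, or an overlap such as $\Tr X^\dagger X$ together with the off-diagonal coherence. The weighted block terms should be entanglement entropies across $A':B'$ of the normalized states $x$, $y_i$, $z_j$.

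Next I bound each marginal. For $\rho^\Psi_{AB}$, I use decompositions preserving the two expectations \cite{leka-petz,leka}. These produce a pure-state decomposition of $\rho^\Psi$ whose average $A{:}B$ entanglement is at most the scalar term, so $\EF(\rho^\Psi)$ is at most that term. Intuitively, a state in $\cS(A,B)$ acts like an effective two-qubit state, with only the "$0$ versus not $0$" entanglement counting, and its formation depends only on those two expectations. For $\sigma^\Psi_{A'B'}$, projecting $AB$ onto the sectors $\ket{00}$, $a_i0_B$ and $0_Ab_j$ is a local measurement (by Alice on $A$, by Bob on $B$) that yields $A'B'$ pure states. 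By LOCC monotonicity and convexity, $\EF(\sigma^\Psi)$ is at most the weighted average of their entanglements, which is the block term. The subtle point is that the coherence between sectors must be discarded consistently; local dephasing of $A$ and $B$ in these bases does this and leaves $\sigma$ unchanged.

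The additivity statement follows by taking $\Omega=\rho\otimes\sigma$ and combining the result with subadditivity of $\EF$. Iterating on $\rho^{\otimes n}=\rho\otimes\rho^{\otimes(n-1)}$, which is allowed because $\sigma$ is arbitrary, gives $\EF(\rho^{\otimes n})=n\EF(\rho)$, and \eqref{eq:state-cost} then gives $\EC=\EF$.

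The main obstacle is matching the scalar term of \cref{thm:triangle} exactly with the formation bound for $\rho$ obtained from the two-expectation decomposition. I would first try choosing the scalar as $e(c)$ with $c=2|\langle\text{coherence}\rangle|$ evaluated on the effective qubit pair, and then use convexity and monotonicity of $e$ from \cref{app:scalar}.
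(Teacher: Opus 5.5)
Your architecture is the paper's: convexity reduces to pure extensions, the excluded sector makes the coefficient matrix block-triangular, \cref{thm:triangle} splits its entropy, a two-expectation decomposition controls $\rho$, a sector-resolved decomposition controls $\sigma$, and \eqref{eq:state-cost} gives $\EC=\EF$. The gap is the step you flag as the main obstacle, and the fix you propose points the wrong way.

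The scalar term is not yours to choose. After moving the $A_1$ row block to the top, \cref{thm:triangle} fixes it as $e(2\sqrt{\alpha\gamma})$, with $\alpha=\normF{Y}^2=\Tr[(P_{A_1}\otimes I)\rho]$ and $\gamma=\normF{Z}^2=\Tr[(I\otimes P_{B_1})\rho]$. The missing observation is a short computation. Take a unit vector $c\ket{00}+\ket a\ket{0_B}+\ket{0_A}\ket d\in\cS(A,B)$ with $a\in A_1$ and $d\in B_1$. In the bases $\{\ket{0_A},a/\|a\|\}$ and $\{\ket{0_B},d/\|d\|\}$ its coefficient matrix is $\left(\begin{smallmatrix}c&\|d\|\\ \|a\|&0\end{smallmatrix}\right)$. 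Its Schmidt probabilities therefore have product $\|a\|^2\|d\|^2$, so its entanglement is exactly $e(2\|a\|\|d\|)$, whatever $c$ is. Now apply \cref{lem:two-moments} with $F=P_{A_1}\otimes I$ and $G=I\otimes P_{B_1}$. Every component of the resulting decomposition has entanglement equal to the scalar term, so $\EF(\rho)\le e(2\sqrt{\alpha\gamma})$. Nothing needs to be matched exactly.

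Your alternative uses $e$ at twice the $A_10_B$--$0_AB_1$ coherence of $\rho$. Positivity bounds that coherence by $\sqrt{\alpha\gamma}$, so this quantity is smaller than the scalar term. For two qubits it is the exact value of $\EF(\rho)$ by Wootters' formula. In higher dimensions, however, the two-expectation lemma gives no bound of $\EF(\rho)$ by it. Pursuing it would mean evaluating a mixed-state convex roof, which the argument is built to avoid. Only an upper bound is needed, and it need not be tight: $\tfrac12\proj{10}+\tfrac12\proj{01}$ has $\EF=0$ but scalar term $1$. The same example shows that your heuristic fails for mixed states. Formation on $\cS(A,B)$ depends only on the two expectations for pure states, not for mixed ones.

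Two smaller corrections. First, your decomposition $\sigma=\proj{x}+\sum_i\proj{y_i}+\sum_j\proj{z_j}$ is valid, but its average entanglement is at most the block term of \cref{thm:triangle}, not equal to it. Write $Y_i$ for the $A'\times B'$ coefficient matrix of $y_i$, so $Y^\dagger Y=\sum_iY_i^\dagger Y_i$. Then $S(YY^\dagger/\alpha)=S(\sum_iY_i^\dagger Y_i/\alpha)$. Concavity of entropy gives $\alpha S(YY^\dagger/\alpha)\ge\sum_i\|y_i\|^2S(Y_i^\dagger Y_i/\|y_i\|^2)$, and the same holds for $Z$. With ``equals'' replaced by ``is at most'', this step works. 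The paper instead discards $A_1$ and $B_1$ from whole branches and uses monotonicity and convexity. Second, \cref{thm:triangle} requires nonzero block dimensions, so treat $A_1=0$ or $B_1=0$ separately. In that case $\rho$ is separable, and discarding $A$ and $B$ locally gives the claim.
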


Only the marginal $\rho_{AB}$ is constrained. The auxiliary
systems have arbitrary finite dimensions, and $\Omega$ may
be mixed and correlated. For two qubits, the excluded sector
is one product direction. In higher dimensions it is the
entire subspace $A_1\otimes B_1$, not merely a single vector.

The proof first treats a pure extension. Its missing joint
sector produces a triangular coefficient matrix. The scalar
term in the matrix entropy bound will dominate $\EF(\rho)$,
while the weighted block entropies will dominate $\EF(\sigma)$.
These two comparisons require only upper bounds on the
marginal formation costs, not exact convex-roof formulas.
Convexity then extends the result to mixed extensions.

\paragraph{Entropy bound for triangular coefficient matrices.}
For a bipartite vector $\ket v=\sum_{ij}M_{ij}\ket i\ket j$,
the coefficient matrix obeys
$\langle v|v\rangle=\normF M^2$ and
$\Tr_2\proj v=MM^\dagger$. If $v$ is normalized, its
entanglement entropy is therefore $S(MM^\dagger)$.

\Needspace{14\baselineskip}
\begin{theorem}[Triangular block entropy inequality]\label[theorem]{thm:triangle}
Let $A\in\mathbb C^{r_1\times c_1}$,
$B\in\mathbb C^{r_2\times c_1}$, and
$D\in\mathbb C^{r_2\times c_2}$, with arbitrary positive integer block
dimensions. Set
\begin{equation}\label{eq:triangle-setup}
 T=\begin{pmatrix}A&0\\B&D\end{pmatrix},\qquad
 \alpha=\normF A^2,\quad\beta=\normF B^2,\quad\gamma=\normF D^2,
 \qquad\alpha+\beta+\gamma=1.
\end{equation}
Then
\begin{equation}\label{eq:triangle-bound}
 S(TT^\dagger)\ge{}e(2\sqrt{\alpha\gamma})
       +\alpha S(AA^\dagger/\alpha)
       +\beta S(BB^\dagger/\beta)
       +\gamma S(DD^\dagger/\gamma).
\end{equation}
A weighted entropy is zero when its weight vanishes.
\end{theorem}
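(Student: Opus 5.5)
The plan is to split $TT^\dagger$ into orthogonal pieces, peel off the block entropies by concavity, and then reduce the remaining scalar term to a two-qubit concurrence computation. I first view $T$ as a bipartite vector. Write the row space as $R\otimes\cdot$ with a row flag $r\in\{1,2\}$ selecting the block, and the column space similarly with a column flag $c\in\{1,2\}$. Then
\[
\ket v=\ket{1}_r\ket{1}_c\otimes\ket a+\ket{2}_r\ket{1}_c\otimes\ket b+\ket{2}_r\ket{2}_c\otimes\ket d ,
\]
where $a,b,d$ are the vectorizations of $A,B,D$. The left side of \eqref{eq:triangle-bound} is the entanglement entropy of this normalized vector across the row:column cut. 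The excluded sector is the pair $(r,c)=(1,2)$.

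The first structural observation is that $T=T_1+T_2$, with
\[
T_1=\begin{pmatrix}A&0\\B&0\end{pmatrix},\qquad T_2=\begin{pmatrix}0&0\\0&D\end{pmatrix}.
\]
These have disjoint column supports, so $T_1T_2^\dagger=0$ and $TT^\dagger=T_1T_1^\dagger+T_2T_2^\dagger$. Dually, $T=T_3+T_4$ splits by rows, giving $T^\dagger T=T_3^\dagger T_3+T_4^\dagger T_4$ with $T_3^\dagger T_3=A^\dagger A$. Writing either mixture as "Holevo term $+$ average entropy" gives two exact identities. The first is $S(TT^\dagger)=\chi_1+(\alpha+\beta)S(T_1T_1^\dagger/(\alpha+\beta))+\gamma S(DD^\dagger/\gamma)$. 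The second is the analogous identity with the roles of the $(A)$ block and the $(B,D)$ block interchanged. Concavity of entropy applied to $T_1^\dagger T_1=A^\dagger A+B^\dagger B$ gives $(\alpha+\beta)S(T_1T_1^\dagger/(\alpha+\beta))\ge\alpha S(AA^\dagger/\alpha)+\beta S(BB^\dagger/\beta)$. So the weighted block entropies in \eqref{eq:triangle-bound} come out automatically. It then remains to show that the leftover Holevo-type term dominates $e(2\sqrt{\alpha\gamma})$.

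That last step is the main obstacle. A quick check shows that crude coarse-graining fails. Pinching the ensemble $\{T_1T_1^\dagger,T_2T_2^\dagger\}$ onto the row flag loses too much: for $\alpha=\gamma\to0$ it gives order $\alpha\gamma\log(1/\alpha)$ instead of $\alpha\gamma\log(1/(\alpha\gamma))$. Hence I will not separate the scalar term from the block terms so early. Instead I will define the deficit
\[
\Delta(A,B,D)=S(TT^\dagger)-\alpha S(AA^\dagger/\alpha)-\beta S(BB^\dagger/\beta)-\gamma S(DD^\dagger/\gamma)
\]
and reduce to the case where each block is rank one. For this I expand each block in its singular value decomposition, $A=\sum_k\sqrt{\alpha_k}\,u_kx_k^\dagger$ and similarly for $B$ and $D$. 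I then try to show that $\Delta$ is bounded below by an average, weighted as in the decomposition, of the deficits of rank-one triangular matrices. The tool would be the row/column mixture identities above applied recursively. The extra entropy created by the singular-value weights is exactly cancelled by the weighted block entropies, while the cross terms only add Holevo information. If this reduction holds, convexity and monotonicity of $e$ (\cref{app:scalar}) let me pass from the averaged rank-one concurrences to $e(2\sqrt{\alpha\gamma})$.

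In the rank-one case $A=\sqrt\alpha\,u x^\dagger$, $B=\sqrt\beta\,w x'^\dagger$, $D=\sqrt\gamma\,w' y^\dagger$, the vector $v$ lives in an effectively $3\times3$ space, or $2\times2$ after grouping. Its reduced state is explicit, so I compute its entanglement directly. Equivalently, I apply Wootters' formula to the two-qubit state obtained by the local map that keeps the flags and a two-dimensional span. The concurrence of the skeleton coefficient matrix $\begin{pmatrix}\sqrt\alpha&0\\ \sqrt\beta\,\langle\cdot\rangle&\sqrt\gamma\end{pmatrix}$ equals $2|\det|=2\sqrt{\alpha\gamma}$, independently of the overlaps appearing in the lower-left entry. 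This is exactly where triangularity is used. Because local operations cannot increase entanglement and pure-state entropy equals $e(\text{concurrence})$ for two qubits, I obtain $S(TT^\dagger)\ge e(2\sqrt{\alpha\gamma})$ plus whatever block entropy remains, which is the claim in this case. If the general reduction to rank one proves too lossy, my fallback is to prove the $2\times2$-block case by direct computation of $\Tr(TT^\dagger)^2$ and a Schur-concavity argument, and then lift it by the mixture identities.
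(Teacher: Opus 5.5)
There is a genuine gap. The step that carries all the difficulty, bounding $\Delta(A,B,D)$ below by an average of rank-one deficits, is only announced (``if this reduction holds''). It breaks however you set up the weights.

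\emph{Global weights.} Suppose each piece $T_{klm}$, built from the $k$th, $l$th, $m$th singular components of $A,B,D$, keeps the global weights $(\alpha,\beta,\gamma)$ and has probability $\alpha_k\beta_l\gamma_m/(\alpha\beta\gamma)$. Then the reduction is already false in the equality case $A=\sqrt\alpha R$, $B=\sqrt\beta R$, $D=\sqrt\gamma R$ with $R$ of rank two. There $\Delta=e(2\sqrt{\alpha\gamma})$. The mismatched triples, however, have deficits equal to the Shannon entropy $H(\alpha,\beta,\gamma)$, to $h_2(\alpha)$, or to $h_2(\gamma)$, each strictly larger when $\alpha,\beta,\gamma>0$. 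So the average exceeds $\Delta$.

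\emph{Own weights.} Suppose instead the pieces carry their own normalized weights $(\alpha_j,\beta_j,\gamma_j)$. Then your closing step fails. By Cauchy--Schwarz, $\sum_j\mu_j\sqrt{\alpha_j\gamma_j}\le\sqrt{\alpha\gamma}$, so convexity and monotonicity of $e$ point the wrong way. For example, pieces $(\tfrac12,0,\tfrac12)$ and $(1,0,0)$ with probability $\tfrac12$ each average to $(\tfrac34,0,\tfrac14)$, yet $\tfrac12[e(1)+e(0)]=\tfrac12<e(\sqrt3/2)=h_2(\tfrac34)$. The Shannon entropy of the weights must be retained.

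\emph{The route you set aside.} This is not merely lossy; its target inequality is false. Take $T=\left(\begin{smallmatrix}\sqrt\alpha&0&0\\0&\sqrt\beta&\sqrt\gamma\end{smallmatrix}\right)$ with column blocks of sizes $2,1$ and $\alpha=\beta=\gamma=\tfrac13$. The leftover Holevo term is $h_2(\tfrac13)-\tfrac23\approx0.25$, which is below $e(\tfrac23)\approx0.55$. The reason is that concavity on $A^\dagger A+B^\dagger B$ throws away exactly the mixing entropy the scalar term needs.

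\emph{The fallback.} This does not help either. Purity of $TT^\dagger$ does not determine majorization once the rank exceeds two, so Schur concavity cannot be invoked.

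The missing ingredient is a mechanism that handles unaligned singular vectors and mismatched spectra at the same time. The paper obtains it from convexity rather than from a decomposition. After polar decompositions, $(I\oplus W)T^\dagger T(I\oplus W^\dagger)=Z(P,Q,R)$ with $P=A^\dagger A$, $Q=B^\dagger B$, $R=WD^\dagger DW^\dagger$. The deficit $\cF(P,Q,R)$ is an integral of log-determinants. A Schur complement rewrites the integrand as $\ln|\det(G_s+J_s)|-\ln\det G_s$ minus a term depending only on $Q$, where $G_s$ is affine in $(P,R)$ and $J_s$ is skew-Hermitian. \Cref{lem:logdet} then makes $\cF$ convex in $(P,R)$ at fixed $Q$. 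Averaging over sign unitaries diagonal in an eigenbasis of $Q$ dephases $P$ and $R$, with no coupling to choose. Homogeneity and convexity of $\phi(x,z)=f(x,1,z)$ then collapse $\sum_jf(p_j,q_j,r_j)$ to $f(\alpha,\beta,\gamma)$, and this scalar function keeps the weight entropies your final step drops.

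Your rank-one computation reaches the correct bound but misdescribes it: after discarding the internal registers the two-qubit state is mixed, not the pure ``skeleton''. A clean version applies flag-controlled local unitaries aligning $u$ with $w'$ and $x$ with $y$, traces out the internal registers, and combines \eqref{eq:qubit-kernel} (after relabelling the flags) with LOCC monotonicity. For general blocks the same trick yields only $e\bigl(2\sum_k s_k(A)s_k(D)\bigr)$, with $s_k$ the decreasingly ordered singular values. That falls short of $e(2\sqrt{\alpha\gamma})$ unless the spectra are proportional, and it carries no block entropies. So a local-operation argument alone does not reach the theorem.
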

No relation between the singular vectors of the blocks is
assumed. Replacing each block with its Frobenius norm gives
the scalar coefficient matrix
\begin{equation}\label{eq:scalar-coefficient}
 T_{\mathrm{sc}}=
 \begin{pmatrix}\sqrt\alpha&0\\\sqrt\beta&\sqrt\gamma\end{pmatrix},
 \qquad S(T_{\mathrm{sc}}T_{\mathrm{sc}}^\dagger)
     =e(2\sqrt{\alpha\gamma}).
\end{equation}
Thus \eqref{eq:triangle-bound} separates the entropy of $T$
into a scalar contribution and weighted within-block entropies.
The scalar contribution retains the coherence between the
three sectors; it is not their Shannon entropy. The bound
is attained by $A=\sqrt\alpha R$, $B=\sqrt\beta R$, and
$D=\sqrt\gamma R$ for any $R$ with $\normF R=1$, since
then $T=T_{\mathrm{sc}}\otimes R$.

The proof is given in \cref{app:triangle}. Polar decomposition
reduces the problem to an entropy difference $\cF(P,Q,R)$.
A log-determinant representation establishes convexity in
$(P,R)$ at fixed $Q$; dephasing in an eigenbasis of $Q$
and scalar convexity then reduce the bound to the three
traces. \Cref{prop:triangle-equality} gives the equality
conditions for square blocks of the same size when $B$
is invertible.

\paragraph{A marginal bound from fixed-expectation decompositions.}
To compare the scalar term with the first marginal, we need
a decomposition with the same two sector weights in every
component. The following result provides it
\cite{leka-petz,leka}. We include a proof by rank reduction.

\begin{lemma}[Pure-state decomposition preserving two expectations]
\label[lemma]{lem:two-moments}
Let $\rho$ be a finite-dimensional state and let $F,G$ be Hermitian operators.
Then there exist normalized vectors $\ket{\phi_j}\in\supp\rho$ and probabilities
$w_j>0$, $\sum_j w_j=1$, such that
\[
    \rho=\sum_j w_j\proj{\phi_j},
\]
and
\begin{equation}
\label{eq:fixed-moments}
    \bra{\phi_j}F\ket{\phi_j}=\Tr(F\rho),
    \qquad
    \bra{\phi_j}G\ket{\phi_j}=\Tr(G\rho)
\end{equation}
for every $j$.
\end{lemma}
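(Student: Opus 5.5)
We prove \cref{lem:two-moments} by reducing the rank of the state, inducting on $r=\rank\rho$. First we normalize the two operators. Replacing $F$ by $F-\Tr(F\rho)\,I$ and $G$ by $G-\Tr(G\rho)\,I$ does not change the claim, so we may assume $\Tr(F\rho)=\Tr(G\rho)=0$. The goal is then a pure-state decomposition of $\rho$ into vectors $\phi$ in $\supp\rho$ with $\bra\phi F\ket\phi=\bra\phi G\ket\phi=0$.

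Next we transport the problem to the support. Write $\rho=XX^\dagger$ with $X:\mathbb C^r\to\supp\rho$ invertible onto its range, for example $X=\rho^{1/2}$ restricted to the support. Every vector in $\supp\rho$ has the form $X\ket u$. A decomposition $\rho=\sum_j\lambda_j X\proj{u_j}X^\dagger$ corresponds to a decomposition of the identity, $I_r=\sum_j\lambda_j\proj{u_j}$, where $\lambda_j\ge0$ and the $u_j$ are unnormalized. Set $\tilde F=X^\dagger FX$ and $\tilde G=X^\dagger GX$. These are Hermitian $r\times r$ matrices with $\Tr\tilde F=\Tr\tilde G=0$. The task becomes: write $I_r$ as a positive combination of rank-one projectors $\proj u$, each with $\bra u\tilde F\ket u=\bra u\tilde G\ket u=0$. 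Normalizing each $X\ket{u_j}$ and absorbing its norm into the weight then gives the $\phi_j$ and $w_j$.

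The key step is that $\tilde I=\mathrm{span}$ of the isotropic projectors contains the identity. Consider the joint numerical range $W=\{(\bra u\tilde F\ket u,\bra u\tilde G\ket u):\|u\|=1\}\subset\mathbb R^2$. For $r\ge3$, $W$ is convex by Au-Yeung--Poon. For $r=2$, $W$ is an ellipse boundary or a segment, which is still convex in the relevant sense once we allow combinations. The uniform average over the unit sphere equals $(\Tr\tilde F/r,\Tr\tilde G/r)=(0,0)$. This only shows that some ensemble averages to zero, which is weaker than what we need: every component must vanish.

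So I would argue more directly, by induction on $r$, using the trace-zero unitary orbit. Choose a unit vector $u$ with $\bra u\tilde F\ket u=\bra u\tilde G\ket u=0$. Then find $t>0$ with $I_r-t\proj u\ge0$ of rank $r-1$, which forces $t=1$. Write $I_r=\proj u+P_{u^\perp}$ and apply induction to $P_{u^\perp}$ with the compressed operators. Their traces are $\Tr\tilde F-0=0$, so the hypothesis is preserved. The base case $r=1$ is immediate: $\tilde F$ and $\tilde G$ are zero scalars.

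Hence everything reduces to one claim: a pair of traceless Hermitian matrices of size $r\ge2$ has a common isotropic unit vector. For $r\ge3$, this follows because $0\in W$, since $W$ is convex and contains the average $(0,0)$. For $r=2$, write $\tilde F=\vec f\cdot\vec\sigma$ and $\tilde G=\vec g\cdot\vec\sigma$, so that $W=\{(\vec f\cdot n,\vec g\cdot n):n\in S^2\}$. Taking $n$ orthogonal to both $\vec f$ and $\vec g$ gives $(0,0)$.

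Establishing this common-isotropic-vector claim, especially citing or proving convexity of the joint numerical range for $r\ge3$, is the main obstacle. If we want to avoid citing Au-Yeung--Poon, I would instead reduce to the $r=2$ case. First pick a two-dimensional subspace on which the compression of $\tilde F$ is traceless, which is possible by choosing two eigenvectors of $\tilde F$ with opposite-sign eigenvalues and mixing them. That alone is insufficient for $\tilde G$, however, so the convexity citation is the route I would commit to.
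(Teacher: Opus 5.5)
Your argument is correct, but it takes a different route from the paper. The paper stays in the state space. It picks a nonzero Hermitian $H$ supported on $\supp\rho$ with $\Tr H=\Tr FH=\Tr GH=0$, which exists because $r^2>3$. It then slides along $\rho+tH$ to both boundary points of the positive cone, obtaining two lower-rank states with the same expectations, and recurses on each. You instead whiten the support, setting $\tilde F=X^\dagger(F-\Tr(F\rho))X$ and $\tilde G=X^\dagger(G-\Tr(G\rho))X$, and remove one common isotropic vector at a time. Because each compression stays traceless, the recursion closes. It yields an orthonormal basis $u_1,\dots,u_r$ of $\supp\rho$ in which both $\tilde F$ and $\tilde G$ have zero diagonal. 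Equivalently, you are using the classical fact that the traceless matrix $\tilde F+i\tilde G$ is unitarily similar to a zero-diagonal matrix. Your route gives a decomposition $\rho=\sum_k\rho^{1/2}\proj{u_k}\rho^{1/2}$ with exactly $\rank\rho$ components and weights $w_k=\bra{u_k}\rho\ket{u_k}>0$. The paper's binary splitting can produce up to $2^{r-1}$ components. In exchange, the paper's route is self-contained and needs only a dimension count, with no appeal to numerical-range theory.

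One correction to your numerical-range discussion. The joint numerical range of two Hermitian matrices is the classical numerical range of $\tilde F+i\tilde G$. By the Toeplitz--Hausdorff theorem it is convex in every dimension. For $r=2$ it is a filled, possibly degenerate, ellipse, not an ellipse boundary. The ellipsoid-surface phenomenon, and the Au-Yeung--Poon convexity theorem for $r\ge3$, concern \emph{three} Hermitian matrices. So the step you call the main obstacle is a standard result. Your own Bloch-vector computation, extended to non-traceless $2\times2$ matrices, shows that the affine image of $S^2$ in $\mathbb R^2$ is a filled ellipse. Combined with compression to the span of any two unit vectors, this already proves convexity in all dimensions. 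Your separate $r=2$ case is therefore correct but unnecessary.
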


\begin{proof}
We proceed by induction on $r=\rank\rho$. The case $r=1$ is immediate. Assume $r=\rank\rho\ge2$.  Choose a basis in which
\[
    \rho=
    \begin{pmatrix}
        \rho_r & 0\\
        0 & 0
    \end{pmatrix},
    \qquad
    \rho_r>0,
\]
where $\rho_r$ is an $r\times r$ matrix.  We seek a Hermitian
perturbation of the form
\[
    H=
    \begin{pmatrix}
        H_r & 0\\
        0 & 0
    \end{pmatrix},
\]
so that $\rho+tH$ remains within the support of $\rho$.

An $r\times r$ Hermitian matrix $H_r$ has $r^2$ real degrees of
freedom.  Requiring $\rho+tH$ to preserve its trace and its
$F$- and $G$-expectation values imposes the three homogeneous linear
conditions
\begin{equation}\label{eq:two-moment-H}
    \Tr H=0,\qquad
    \Tr(FH)=0,\qquad
    \Tr(GH)=0.
\end{equation}
Since $r\ge2$, we have $r^2>3$.  Hence these three linear conditions
cannot force $H_r=0$: there exists a nonzero Hermitian $H$ satisfying
all three. Consider
\[
    I_H:=\{t\in\mathbb R:\rho+tH\succeq0\}.
\]
Since $\rho$ is strictly positive on its support, $I_H$ contains an
open neighborhood of $0$. Moreover, $I_H$ is closed and convex, hence
an interval. Because $H\neq0$ and $\Tr H=0$, the Hermitian operator
$H$ has both a positive and a negative eigenvalue. Therefore
$\rho+tH$ fails to be positive semidefinite for sufficiently large
positive $t$ and also for sufficiently large negative $t$. Thus we can write $I_H$ as a closed set $I_H=[t_-,t_+]$ for finite numbers $t_-<0<t_+$.

Define
\[
    \rho_-:=\rho+t_-H,
    \qquad
    \rho_+:=\rho+t_+H.
\]
By \eqref{eq:two-moment-H}, both are normalized states and preserve the
two expectation values:
\[
    \Tr(F\rho_\pm)=\Tr(F\rho),
    \qquad
    \Tr(G\rho_\pm)=\Tr(G\rho).
\]
They are also supported on $\supp\rho$. Since $t_-$ and $t_+$ are
boundary points of $I_H$, neither $\rho_-$ nor $\rho_+$ can be strictly
positive on $\supp\rho$; otherwise positivity would persist beyond the
corresponding endpoint. Hence
\[
    \rank\rho_-<r,
    \qquad
    \rank\rho_+<r.
\]
Finally,
\begin{equation}
\label{eq:rank-reduction-decomposition}
    \rho
    =
    \frac{t_+}{t_+-t_-}\rho_-
    +
    \frac{-t_-}{t_+-t_-}\rho_+ .
\end{equation}
The two coefficients are strictly positive and sum to one. By the
induction hypothesis, each of $\rho_-$ and $\rho_+$ admits a finite
pure-state decomposition inside $\supp\rho$ whose components preserve
the same $F$- and $G$-expectation values. Combining these two
decompositions using \eqref{eq:rank-reduction-decomposition} gives the
desired decomposition of $\rho$.
\end{proof}

The restriction to two expectations is essential in general:
the three Pauli expectations determine a qubit state, so
a mixed qubit state has no pure-state decomposition preserving
all three. Our application requires only two support projectors.

To apply the lemma, consider a normalized pure state in
$\cS(A,B)$. It has the form
\begin{equation}\label{eq:single-pure}
 \ket\phi=c\ket{00}+\ket a\ket0+\ket0\ket d,
 \qquad a\in A_1,\ d\in B_1,
 \quad\alpha=\|a\|^2,\ \gamma=\|d\|^2.
\end{equation}
Its local supports have dimension at most two. In suitable
bases, its coefficient matrix is
$\left(\begin{smallmatrix}c&\sqrt\gamma\\
\sqrt\alpha&0\end{smallmatrix}\right)$, so its two Schmidt
probabilities have sum one and product $\alpha\gamma$.
This includes a zero probability when the Schmidt rank is one.
Consequently,
\begin{equation}\label{eq:pure-support-entropy}
 S(\phi_A)=e(2\sqrt{\alpha\gamma}).
\end{equation}
For a mixed state $\rho$ supported on $\cS(A,B)$, apply
\cref{lem:two-moments} with $F=P_{A_1}\otimes I$ and
$G=I\otimes P_{B_1}$. Every component then has the same
$\alpha,\gamma$ and hence the same entropy in
\eqref{eq:pure-support-entropy}. It follows that
\begin{equation}\label{eq:marginal-upper}
 \EF(\rho)\le e(2\sqrt{\alpha\gamma}),\qquad
 \alpha=\Tr(F\rho),\quad\gamma=\Tr(G\rho).
\end{equation}
This upper bound need not be tight: for
$\rho=\tfrac12\proj{10}+\tfrac12\proj{01}$, its two
sides are zero and one, respectively. What matters is that
the bound uses precisely the weights in
\eqref{eq:triangle-bound}.

\begin{proof}[Proof of \cref{thm:support}]
If $A_1$ or $B_1$ is zero, $\rho$ is separable and
\eqref{eq:strong-state} follows from monotonicity under
local discarding. We may therefore assume both are nonzero.

\emph{Pure extensions.} Suppose $\Omega=\proj\Psi$.
Positivity and the support hypothesis give
$(P_{A_1}\otimes P_{B_1}\otimes I)\ket\Psi=0$.
Across the bipartition $AA':BB'$, order rows by
$A_1A',0_AA'$ and columns by $0_BB',B_1B'$. The
coefficient matrix is then
\begin{equation}\label{eq:extension-blocks}
 T=\begin{pmatrix}A&0\\B&D\end{pmatrix},\qquad
 \begin{array}{ll}
 A:&(A_1A'):B',\\
 B:&A':B',\\
 D:&A':(B_1B').
 \end{array}
\end{equation}
The squared block norms $\alpha,\beta,\gamma$ are those
of \eqref{eq:triangle-setup}. In particular,
$\alpha,\gamma$ are the expectations of $F,G$ in $\rho$.
Equation \eqref{eq:marginal-upper} thus bounds $\EF(\rho)$
by the scalar term in \eqref{eq:triangle-bound}.

To bound $\EF(\sigma)$, normalize each nonzero branch,
discard $A_1$ from the first branch and $B_1$ from the
third, and average with weights $\alpha,\beta,\gamma$.
This yields $\sigma_{A'B'}$: tracing out $AB$ removes
all coherences between distinct sectors. Local monotonicity
of entanglement of formation bounds each reduced branch
by its original pure-state entropy. Convexity then gives
\begin{equation}\label{eq:auxiliary-bound}
 \EF(\sigma)\le
 \alpha S(AA^\dagger/\alpha)+\beta S(BB^\dagger/\beta)
                         +\gamma S(DD^\dagger/\gamma).
\end{equation}
Combining the two marginal bounds with
\cref{thm:triangle} proves
$S(\Psi_{AA'})\ge\EF(\rho)+\EF(\sigma)$.

\emph{Mixed extensions.} Let
$\Omega=\sum_jw_j\proj{\Psi_j}$ be any pure-state
decomposition. The support hypothesis implies
\[
 0=\Tr[(P_{A_1}\otimes P_{B_1}\otimes I)\Omega]
   =\sum_jw_j\|(P_{A_1}\otimes P_{B_1}\otimes I)\Psi_j\|^2.
\]
Each positive-weight component therefore satisfies the same
constraint. Writing its marginals as $\rho_j,\sigma_j$,
the pure-extension bound and convexity yield
\[
 \sum_jw_jS((\Psi_j)_{AA'})
 \ge\sum_jw_j\EF(\rho_j)+\sum_jw_j\EF(\sigma_j)
 \ge\EF(\rho)+\EF(\sigma).
\]
Minimizing over decompositions of $\Omega$ proves
\eqref{eq:strong-state}.

\emph{Additivity and state cost.} For $\Omega=\rho\otimes\sigma$,
the preceding inequality gives the lower bound in
\eqref{eq:strong-state-add}; products of optimal decompositions
give the upper bound. Iteration yields
$\EF(\rho^{\otimes n})=n\EF(\rho)$ for every $n\ge1$.
The regularization theorem \eqref{eq:state-cost} then gives
$\EC(\rho)=\EF(\rho)$.
\end{proof}
Only upper bounds on the marginal convex roofs enter the
argument. Neither the fixed-expectation decomposition nor
the reduced branch decompositions need be optimal.

\begin{corollary}[Two-qubit states with a product vector in the kernel]
\label[corollary]{cor:state-examples}
For two qubits, \eqref{eq:strong-state} and
\eqref{eq:strong-state-add} hold whenever $\ker\rho$ contains a nonzero
product vector. Every two-qubit state of rank at most two satisfies
this condition.
\end{corollary}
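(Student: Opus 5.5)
The corollary reduces to \cref{thm:support} once the qubit bases are chosen so that the excluded sector $A_1\otimes B_1$ is spanned by the kernel vector. Two things need to be shown: the reduction itself, and the fact that every rank-at-most-two two-qubit state has a product vector in its kernel.

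\emph{Reduction.} Suppose $\ket{x}\otimes\ket{y}\in\ker\rho$ with $x,y$ nonzero, and normalize both factors. For qubits, set $A_1=\mathbb C\ket{x}$ and $B_1=\mathbb C\ket{y}$. Let $\ket{0_A}$ be the unit vector orthogonal to $x$ and $\ket{0_B}$ the unit vector orthogonal to $y$; any phase choice works. Then $A_1\otimes B_1=\mathbb C\ket{x}\ket{y}$. Since $\rho\succeq0$, its support is the orthogonal complement of its kernel, so $\supp\rho\perp\ket{x}\ket{y}$. This gives $\supp\rho\subseteq\cS(A,B)$ by \eqref{eq:allowed-support}. \Cref{thm:support} now yields \eqref{eq:strong-state} and \eqref{eq:strong-state-add} directly, with arbitrary auxiliary systems $A',B'$ and an arbitrary partner $\sigma$.

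\emph{Rank at most two.} Here $\ker\rho$ has dimension at least two in $\mathbb C^2\otimes\mathbb C^2$, so it suffices to show that every two-dimensional subspace $K\subseteq\mathbb C^2\otimes\mathbb C^2$ contains a nonzero product vector. Identify a vector $v$ with its $2\times2$ coefficient matrix $M_v$. Then $v$ is a product vector exactly when $\rank M_v\le1$, i.e.\ when $\det M_v=0$. Take a basis $u,w$ of $K$ and consider $f(s,t)=\det(sM_u+tM_w)$. This is a homogeneous quadratic form in $(s,t)\in\mathbb C^2$. Over $\mathbb C$, such a form always has a nontrivial zero: if $f(1,0)=0$ use $(1,0)$; otherwise $f(s,1)$ is a polynomial in $s$ of degree exactly two, which has a root. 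The corresponding $sM_u+tM_w$ is nonzero because $u,w$ are linearly independent and $(s,t)\neq0$. Its rank is therefore exactly one, giving a product vector in $K\subseteq\ker\rho$.

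The only step with any content is this last one, and it is the point to check carefully: the algebraic closedness of $\mathbb C$ is what is used, and linear independence of $u,w$ is what prevents the zero from being the trivial one. Everything else is a direct instantiation of \cref{thm:support}.
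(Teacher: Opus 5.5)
Your proposal is correct and follows the paper's proof. For the first claim, the paper moves the product kernel vector to $\ket{11}$ by a local unitary; this is equivalent to your choice of distinguished vectors orthogonal to its factors, giving $\supp\rho\subseteq\cS(A,B)$ so that \cref{thm:support} applies. For the rank-two claim, the paper uses the same determinant argument on $\det(uL_0+vL_1)$ for two independent kernel vectors; your case split on $f(1,0)$ only makes explicit why this quadratic form has a nontrivial complex zero.
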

\begin{proof}
A local unitary takes a product kernel vector to $\ket{11}$,
whose orthogonal complement is $\cS(A,B)$. The first
claim is therefore an instance of \cref{thm:support}.

If $\rank\rho\le2$, choose two independent kernel
vectors with coefficient matrices $L_0,L_1\in\mathbb C^{2\times2}$.
The homogeneous polynomial $\det(uL_0+vL_1)$ either
vanishes identically or has a zero on the complex
projective line. At such a nonzero pair $(u,v)$,
linear independence ensures $uL_0+vL_1\ne0$.
The resulting matrix is singular and has rank one,
so it represents a product vector in $\ker\rho$.
\end{proof}
In dimensions $d_A,d_B$, the allowed subspace has dimension
$d_A+d_B-1$; for example, it accommodates rank-five states
in $3\times3$. The higher-dimensional condition excludes
the whole subspace $A_1\otimes B_1$, of dimension
$(d_A-1)(d_B-1)$. If several marginals satisfy the
hypothesis, iteration gives
\[
 \EF(\Omega_{A_1\cdots A_m:B_1\cdots B_m})
       \ge\sum_{i=1}^m\EF(\Omega_{A_i:B_i})
\]
for any correlated global state, allowing a different pair
of distinguished local vectors for each marginal.

For two-qubit calculations, we shall also use
\begin{equation}\label{eq:qubit-kernel}
 \rho\ket{01}=0\quad\Longrightarrow\quad
 \EF(\rho)=e(2|\bra{00}\rho\ket{11}|).
\end{equation}
This identity follows from Wootters' formula
\cite{wootters}; a derivation is given in \cref{app:qubit}.
It is used only for explicit evaluations, not in the
support theorem.

The two channel applications use the support theorem across
different bipartitions: \cref{sec:capacity} applies it to
output--environment states for every input, whereas
\cref{sec:simulation} applies it to the reference--output
Choi state. For a qubit-to-qubit channel $\cN$ and a
Stinespring isometry $V:X\to B\otimes E$, the corresponding
conditions are equivalent:
\begin{equation}\label{eq:pure-output-equivalence}
 \begin{aligned}
  \cN\text{ admits a pure output}
  &\iff \ran V\subseteq\cS(B,E)
       \text{ for suitable distinguished vectors}\\
  &\iff \ker J_{\cN}\text{ contains a nonzero product vector}.
 \end{aligned}
\end{equation}
The first equivalence is proved in
\cref{lem:pure-output-stinespring}; the second follows from
\eqref{eq:choi-product}. Thus the two applications cover the
same pure-output class, which includes every qubit-to-qubit
channel of Kraus rank at most two.

\section{Classical capacity and finite-block optimality}\label{sec:capacity}
We now prove the communication statements previewed in
\cref{sec:results}. The argument has three steps. The
Stinespring correspondence converts \cref{thm:support}
into a lower bound on the output-entropy convex roof.
Entropy subadditivity then gives a converse for correlated
inputs. For amplitude damping, a fixed-input calculation
and an explicit binary ensemble evaluate and attain the
resulting capacity. We conclude with a finite-block
refinement and the optimal average input.

\subsection{Strong Holevo additivity from Stinespring support}
Let $V:X\to B\otimes E$ be a Stinespring isometry for
$\cN$. The Matsumoto--Shimono--Winter identity
\cite{msw} reads
\begin{equation}\label{eq:msw}
 \roof{\cN}(\tau)=\EF((V\tau V^\dagger)_{B:E}).
\end{equation}
Indeed, $V$ gives a bijection between pure-state
decompositions of $\tau$ and $V\tau V^\dagger$.
Every component of the latter lies in $\ran V$ and
pulls back under $V^\dagger$ to a normalized component
of $\tau$, with the same output entropy.

\begin{corollary}[Stinespring criterion for strong Holevo additivity]\label[corollary]{cor:stinespring}
Suppose $\cN:X\to B$ has a dilation $V:X\to B\otimes E$ with
$\ran V\subseteq\cS(B,E)$. For every finite-dimensional partner
$\cM:Y\to B'$ and every joint input $\tau_{XY}$,
\begin{equation}\label{eq:roof-superadd}
 \roof{\cN\otimes\cM}(\tau_{XY})
 \ge\roof{\cN}(\tau_X)+\roof{\cM}(\tau_Y).
\end{equation}
Consequently,
\begin{equation}\label{eq:stinespring-add}
 \chi(\cN\otimes\cM)=\chi(\cN)+\chi(\cM),\qquad
 C(\cN)=\chi(\cN),\qquad
 C(\cN\otimes\cM)=C(\cN)+C(\cM).
\end{equation}
\end{corollary}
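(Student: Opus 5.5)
The plan is to convert the output-entropy convex roof of the product channel into an entanglement of formation via the Matsumoto--Shimono--Winter identity \eqref{eq:msw}, and then apply \cref{thm:support}.

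\emph{Step 1: reduce the roof inequality to \cref{thm:support}.} Let $W:Y\to B'\otimes E'$ be any Stinespring isometry for $\cM$. Then $V\otimes W$ is a dilation of $\cN\otimes\cM$. By \eqref{eq:msw},
\[
\roof{\cN\otimes\cM}(\tau_{XY})=\EF\bigl(\Omega_{BB':EE'}\bigr),\qquad \Omega=(V\otimes W)\tau_{XY}(V\otimes W)^\dagger .
\]
Its marginal on $BE$ is $\rho_{BE}=V\tau_XV^\dagger$, obtained by tracing $B'E'$ through the partial isometry. This marginal is supported in $\ran V\subseteq\cS(B,E)$. Similarly, $\sigma_{B'E'}=W\tau_YW^\dagger$. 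Applying \cref{thm:support} with $A=B$, $B=E$, $A'=B'$, $B'=E'$ gives
\[
\EF(\Omega)\ge\EF(\rho)+\EF(\sigma).
\]
Applying \eqref{eq:msw} once more to each term yields \eqref{eq:roof-superadd}. The only care needed is to keep the bipartition output:environment consistent across the two factors.

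\emph{Step 2: strong Holevo additivity.} For any $\tau_{XY}$, subadditivity of von Neumann entropy gives
\[
S((\cN\otimes\cM)(\tau))\le S(\cN(\tau_X))+S(\cM(\tau_Y)).
\]
Combined with \eqref{eq:roof-superadd}, this shows $\chi_{\cN\otimes\cM}(\tau_{XY})\le\chi_{\cN}(\tau_X)+\chi_{\cM}(\tau_Y)\le\chi(\cN)+\chi(\cM)$. Maximizing over $\tau_{XY}$ gives $\le$. The reverse inequality comes from product ensembles, so $\chi(\cN\otimes\cM)=\chi(\cN)+\chi(\cM)$.

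\emph{Step 3: capacities.} Take $\cM=\cN^{\otimes(n-1)}$; this partner is arbitrary, so no hypothesis on it is needed. Induction gives $\chi(\cN^{\otimes n})=n\chi(\cN)$, and \eqref{eq:hsw} then yields $C(\cN)=\chi(\cN)$. For $C(\cN\otimes\cM)$, apply the additivity with partner $\cN^{\otimes(n-1)}\otimes\cM^{\otimes n}$ repeatedly to peel off copies of $\cN$:
\[
\chi\bigl((\cN\otimes\cM)^{\otimes n}\bigr)=n\chi(\cN)+\chi(\cM^{\otimes n}).
\]
Dividing by $n$ and taking the limit gives $C(\cN\otimes\cM)=C(\cN)+C(\cM)$, since $C(\cM)$ is the regularized limit of $\chi(\cM^{\otimes n})/n$.

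I expect no serious obstacle; the substance is entirely in \cref{thm:support}. The only delicate point is Step 1. There one must check that \eqref{eq:msw} applies to the product dilation for a correlated input $\tau_{XY}$, which holds by the same bijection argument. One must also check that the support hypothesis concerns only the $BE$ marginal, which holds because that marginal is $V\tau_XV^\dagger$ regardless of the correlations in $\tau_{XY}$.
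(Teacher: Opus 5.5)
Your proposal is correct and follows the paper's proof essentially step for step. The shared steps are the product dilation $V\otimes W$, \cref{thm:support} applied across $BB':EE'$ together with the Matsumoto--Shimono--Winter identity, output-entropy subadditivity for strong Holevo additivity, and peeling off the $\cN$ factors of $\cN^{\otimes n}\otimes\cM^{\otimes n}$ before regularizing (the only slip is calling $W$ a partial isometry: it is an isometry, which is why tracing out $B'E'$ leaves $V\tau_XV^\dagger$).
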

\begin{proof}
Choose a Stinespring isometry $W:Y\to B'\otimes E'$
for $\cM$. The $BE$ marginal of
$(V\otimes W)\tau(V\otimes W)^\dagger$ is supported
on $\cS(B,E)$. Applying \cref{thm:support} across
$BB':EE'$ and then \eqref{eq:msw} proves
\eqref{eq:roof-superadd}. Subadditivity of output
entropy now gives
\begin{equation}\label{eq:fixed-holevo-converse}
 \chi_{\cN\otimes\cM}(\tau)
 \le\chi_{\cN}(\tau_X)+\chi_{\cM}(\tau_Y)
 \le\chi(\cN)+\chi(\cM).
\end{equation}
Maximize over $\tau$ and use independent optimal
ensembles for the reverse inequality. This proves
strong Holevo additivity. Iteration and \eqref{eq:hsw}
give $C(\cN)=\chi(\cN)$. To obtain additivity of
the capacities, apply the Holevo identity successively
to the $\cN$ factors in
$\cN^{\otimes n}\otimes\cM^{\otimes n}$ and
regularize. This step requires no additivity assumption
on $\cM$.
\end{proof}

\begin{lemma}[Pure outputs and Stinespring support]
\label[lemma]{lem:pure-output-stinespring}
Let $\cN$ be a qubit-to-qubit channel, and let
$V:X\to B\otimes E$ be a Stinespring isometry.
Then $\cN$ admits a pure output if and only if
$\ran V\subseteq\cS(B,E)$ for suitable distinguished unit
vectors in $B$ and $E$. Every such channel therefore satisfies
\eqref{eq:roof-superadd} and \eqref{eq:stinespring-add}.
\end{lemma}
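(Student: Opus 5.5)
We prove both directions by working directly with the two-dimensional image $\ran V\subseteq B\otimes E$. The final sentence of the statement then follows from \cref{cor:stinespring}.

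\emph{Support implies pure output.} Suppose $\ran V\subseteq\cS(B,E)$ with distinguished vectors $\ket{0_B},\ket{0_E}$. Here $B$ is a qubit, so $B_1=\mathbb C\ket{1_B}$. Any vector in $\cS(B,E)$ therefore has the form $\ket{0_B}\ket{x}+\ket{1_B}\ket{0_E}\,c$ with $x\in E$. Consider the linear functional $\ell(\xi)=\langle 1_B0_E|V\xi\rangle$ on the two-dimensional input space $X$. It has a nonzero kernel vector $\psi$. Then $V\psi\in\cS(B,E)$ has no component along $\ket{1_B}\otimes E$, since $\ket{1_B}\otimes E_1$ is excluded by the support condition and the $\ket{1_B0_E}$ component vanishes by choice of $\psi$. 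Hence $V\psi=\ket{0_B}\ket{x}$ is a product vector and $\cN(\psi)=\proj{0_B}$ is pure.

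\emph{Pure output implies support.} Suppose $\cN(\psi)=\proj\phi$. Then $V\psi=\ket\phi\ket{e}$ for some unit $e\in E$; this is the standard fact that a pure marginal forces a product vector. Set $\ket{0_B}=\ket\phi$ and $\ket{0_E}=\ket e$. Let $\psi^\perp$ be the orthogonal input and write
\[
V\psi^\perp=\ket{0_B}\ket{y}+\ket{1_B}\ket{z},
\qquad y,z\in E.
\]
Isometry gives $\langle V\psi|V\psi^\perp\rangle=\langle e|y\rangle=0$, but it places no constraint on $z$. The image is $\mathrm{span}\{\ket{0_B0_E},V\psi^\perp\}$, and we need this span to meet $\ket{1_B}\otimes E_1$ trivially for suitable choices. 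The obvious choice does not work in general, because $z$ may have a component orthogonal to $e$. We therefore re-choose $\ket{0_E}$, keeping $\ket{0_B}=\phi$. The condition $\ran V\subseteq\cS(B,E)$ requires the $\ket{1_B}$-component of every image vector to be proportional to $\ket{0_E}$. Since $V\psi$ has zero $\ket{1_B}$-component, every image vector has $\ket{1_B}$-component proportional to $z$. If $z\neq0$, set $\ket{0_E}=z/\|z\|$; if $z=0$, keep $\ket{0_E}=\ket e$. The remaining components have the form $\ket{0_B}\otimes(\cdot)$, which $\cS(B,E)$ allows for any vector in $E$ because it contains $\ket{0_B}\otimes E$. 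So $\ran V\subseteq\cS(B,E)$.

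The main point to verify is this re-choice of the environment's distinguished vector. The vector $\ket{0_E}$ need not be the environment partner of the pure output; what matters is only that the $\ket{1_B}$-sector of $\ran V$ is one-dimensional in $E$. With that settled, both directions are elementary linear algebra, and \cref{cor:stinespring} supplies \eqref{eq:roof-superadd} and \eqref{eq:stinespring-add}.
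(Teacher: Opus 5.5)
Your proposal is correct and follows essentially the same route as the paper's proof. For the converse, you use the same linear functional $\ell(\xi)=\langle 1_B0_E|V\xi\rangle$ and take a unit vector in its kernel. For the forward direction, you re-choose the environment's distinguished vector as the normalized $\ket{1_B}$-component of the image of the orthogonal input, keeping an arbitrary choice when that component vanishes.
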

\begin{proof}
Suppose $\cN(\proj{\psi_0})=\proj{b_0}$.
Purity of the reduced output implies
$V\ket{\psi_0}=\ket{b_0}\ket u$ for a unit vector $u\in E$.
Complete $\psi_0$ and $b_0$ to orthonormal input and output
bases, respectively. Since $B$ is two-dimensional, there
are vectors $v,w\in E$ such that
\[
 V\ket{\psi_1}=\ket{b_0}\ket v+\ket{b_1}\ket w.
\]
If $w\ne0$, choose $e_0=w/\|w\|$; if $w=0$, choose any
unit vector $e_0\in E$. In either case, linearity gives
\[
 \ran V\subseteq
 (\ket{b_0}\otimes E)\oplus\operatorname{span}\{\ket{b_1e_0}\}
 =\cS(B,E).
\]
No restriction on the finite dimension of $E$ is needed.

Conversely, under this support condition, write
\[
 V\ket\psi=\ket{b_0}\ket{x_\psi}
                  +\ell(\psi)\ket{b_1e_0},
 \qquad
 \ell(\psi)=\bra{b_1e_0}V\ket\psi.
\]
The scalar linear functional $\ell:X\to\mathbb C$ has a
nonzero kernel because $\dim X=2$. A unit vector $\psi$
in this kernel has a product image under $V$, and hence
$\cN(\proj\psi)=\proj{b_0}$. The communication conclusions
now follow from \cref{cor:stinespring}.
\end{proof}

\begin{corollary}[Qubit channels with at most two Kraus operators]
\label[corollary]{cor:two-kraus}
Every qubit-to-qubit channel of Kraus rank at most two admits
a pure output and satisfies \eqref{eq:roof-superadd} and
\eqref{eq:stinespring-add}.
\end{corollary}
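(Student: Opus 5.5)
By \cref{lem:pure-output-stinespring}, it suffices to show that every qubit-to-qubit channel $\cN$ of Kraus rank at most two admits a pure output. Once that is established, the support inclusion $\ran V\subseteq\cS(B,E)$ and hence \eqref{eq:roof-superadd} and \eqref{eq:stinespring-add} follow from that lemma and \cref{cor:stinespring}.

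The case of Kraus rank one is immediate: $\cN$ is a unitary conjugation and maps every pure state to a pure state. For Kraus rank two, I will take the minimal dilation with $E\cong\mathbb C^2$, namely $V\ket\psi=K_0\ket\psi\ket0+K_1\ket\psi\ket1$. Then $\cN(\proj\psi)$ is pure exactly when $V\ket\psi\in B\otimes E$ is a product vector. In the basis $\{\ket0,\ket1\}$ of $E$, the coefficient matrix of $V\ket\psi$ is the $2\times2$ matrix $M(\psi)=\bigl(K_0\ket\psi,\ K_1\ket\psi\bigr)$, with columns $K_0\psi$ and $K_1\psi$. The output is pure if and only if $\det M(\psi)=0$ and $M(\psi)\ne0$.

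The map $\psi\mapsto\det M(\psi)$ is a homogeneous quadratic form in the two coordinates of $\psi$. Over $\mathbb C$, such a form either vanishes identically or has a zero on the projective line $\mathbb{CP}^1$, by the fundamental theorem of algebra applied to the dehomogenized polynomial, with the point at infinity covering the degenerate case. In either case there is a nonzero $\psi$ with $\det M(\psi)=0$. Because $V$ is an isometry, $\|M(\psi)\|_F=\|\psi\|\ne0$, so $M(\psi)$ has rank exactly one. Normalizing $\psi$, the vector $V\ket\psi$ is therefore a product vector $\ket{b_0}\ket u$, and $\cN(\proj\psi)=\proj{b_0}$ is pure. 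This is the same projective-line argument used in \cref{cor:state-examples}.

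No step here is a real obstacle. The only points needing care are the identically vanishing case and the check that the chosen $\psi$ gives a nonzero coefficient matrix, and the isometry property settles the latter.
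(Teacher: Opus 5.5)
Your proof is correct and is essentially the paper's argument: both apply the determinant (projective-line) argument of \cref{cor:state-examples} to a two-dimensional space of $2\times2$ coefficient matrices, then invoke \cref{lem:pure-output-stinespring} and \cref{cor:stinespring}. The only difference is that you apply it to the Stinespring image itself, which gives the pure output directly and then uses the forward direction of the lemma; the paper applies it to the orthogonal complement of $\ran V$, which gives the support inclusion $\ran V\subseteq\cS(B,E)$ directly and then uses the converse direction.
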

\begin{proof}
Such a channel admits a qubit environment, embedding
a one-dimensional environment if necessary. Its
Stinespring image is two-dimensional, as is its
orthogonal complement in $B\otimes E$. The determinant
argument from \cref{cor:state-examples} gives a product
vector in this complement. After local unitaries, the image lies in $\cS(B,E)$.
An environment unitary leaves the channel unchanged, and an
output unitary preserves both output entropy and Holevo
information, including with a partner channel. Hence
\cref{lem:pure-output-stinespring,cor:stinespring} apply
also to the original channel.
\end{proof}

\subsection{Classical capacity of amplitude damping}
The Stinespring criterion removes the many-use optimization.
It remains to evaluate the single-use expression. At a fixed
average input, coherence can affect the output entropy,
but the output-entropy convex roof depends only on the
excitation probability. For an input with excitation probability $q$, define
\begin{align}
 g_p(q)&=e\!\left(2\sqrt{p(1-p)}\,q\right),\label{eq:g}\\
 F_p(q)&=h_2((1-p)q)-g_p(q).\label{eq:F}
\end{align}
The function $g_p(q)$ is the output entropy of a pure input
with excitation $q$, whereas $h_2((1-p)q)$ is the output
entropy of the diagonal input with the same excitation.

\begin{lemma}[Fixed-input output-entropy roof]\label[lemma]{lem:flat-roof}
For $p,q\in[0,1]$ and $|z|^2\le q(1-q)$, let
$\tau=\left(\begin{smallmatrix}1-q&z\\z^*&q\end{smallmatrix}\right)$.
Then
\begin{equation}\label{eq:flat-roof}
 \roof{\cA_p}(\tau)=g_p(q).
\end{equation}
\end{lemma}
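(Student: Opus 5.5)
We prove \eqref{eq:flat-roof} by establishing the two inequalities separately. The lower bound comes from the Matsumoto--Shimono--Winter identity \eqref{eq:msw} together with the fixed-expectation bound. The upper bound comes from an explicit pure-state decomposition of $\tau$ in which every component has excitation probability exactly $q$.

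\emph{Output entropy of a pure input.} First compute the output entropy of a pure input $\psi=\sqrt{1-q}\ket0+e^{i\theta}\sqrt q\ket1$. By \eqref{eq:dilation},
\[
V_p\psi=\sqrt{1-q}\ket{00}+e^{i\theta}\sqrt{q(1-p)}\ket{10}+e^{i\theta}\sqrt{qp}\ket{01}.
\]
This vector has the form \eqref{eq:single-pure} with respect to $B:E$, with weights $\alpha=(1-p)q$ and $\gamma=pq$. Hence \eqref{eq:pure-support-entropy} gives $S(\cA_p(\psi))=e(2\sqrt{p(1-p)}\,q)=g_p(q)$. In particular, the output entropy of a pure input depends only on its excitation probability. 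The same computation applies to every pure state of excitation $q$, whatever its phase.

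\emph{Upper bound.} Apply \cref{lem:two-moments} directly to $\tau$ on $X$ with $F=\proj1$ and $G=0$. This yields a pure-state decomposition of $\tau$ whose every component has excitation exactly $q$. By the computation above, each component contributes output entropy $g_p(q)$, so the average is $g_p(q)$. Therefore $\roof{\cA_p}(\tau)\le g_p(q)$.

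An explicit alternative also works. Mix the phase-rotated states $\sqrt{1-q}\ket0+e^{i\theta}\sqrt q\ket1$ with suitable weights, namely two phases whose average coherence equals $z$. Since $|z|\le\sqrt{q(1-q)}$, the target coherence lies in the disk of radius $\sqrt{q(1-q)}$, and two points on its boundary circle suffice. The lemma is simply cleaner.

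\emph{Lower bound.} By \eqref{eq:msw}, $\roof{\cA_p}(\tau)=\EF((V_p\tau V_p^\dagger)_{B:E})$. Take any pure-state decomposition $\tau=\sum_jw_j\psi_j$ and let $q_j$ denote the excitation of $\psi_j$, so that $\sum_jw_jq_j=q$. Each component contributes $g_p(q_j)$ by the pure-input computation. The function $g_p$ is convex in $q$, because $e$ is increasing and convex (\cref{app:scalar}) and its argument is linear in $q$. Jensen's inequality then gives $\sum_jw_jg_p(q_j)\ge g_p(q)$. Minimizing over decompositions proves $\roof{\cA_p}(\tau)\ge g_p(q)$.

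The only delicate point is this convexity step. It relies on the composition of an increasing convex function with a linear map being convex, which needs nothing beyond the properties of $e$ recalled in \cref{app:scalar}. The edge cases $p\in\{0,1\}$ and $q\in\{0,1\}$ are covered because $g_p$ is then identically zero or the decomposition is trivial.
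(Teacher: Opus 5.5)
Your proof is correct and follows the paper's argument: Jensen's inequality with convexity of $g_p$ gives the lower bound (the appeal to \eqref{eq:msw} there is unnecessary, and your opening sentence misattributes the fixed-expectation lemma to the lower bound, whereas you actually use it for the upper bound), and a decomposition into pure components of excitation exactly $q$ gives the upper bound. The differences are minor: the paper writes that decomposition explicitly as $\sqrt{1-q}\ket0\pm e^{-i\theta}\sqrt q\ket1$ with weights $(1\pm r)/2$, where $z=r\sqrt{q(1-q)}e^{i\theta}$, rather than invoking \cref{lem:two-moments}, and it obtains $g_p(q)$ from the output determinant $p(1-p)q^2$ rather than from \eqref{eq:pure-support-entropy}.
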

\begin{proof}
A pure input with excitation $q$ has output determinant
$p(1-p)q^2$ and hence output entropy $g_p(q)$.
Every pure-state decomposition of $\tau$ has mean
excitation $q$. Since $g_p$ is convex, Jensen's
inequality gives $\roof{\cA_p}(\tau)\ge g_p(q)$.
For the reverse inequality, suppose $0<q<1$ and write
$z=r\sqrt{q(1-q)}e^{i\theta}$, with $0\le r\le1$
and arbitrary $\theta$ if $z=0$. The states
$\ket{\phi_\pm}=\sqrt{1-q}\ket0
\pm e^{-i\theta}\sqrt q\ket1$, taken with probabilities
$(1\pm r)/2$, average to $\tau$. Both have excitation
$q$ and output entropy $g_p(q)$. Thus this explicit
fixed-excitation decomposition attains the lower bound.
The cases $q=0,1$ are pure inputs; the same identity also
holds at $p=0,1$, when $g_p$ vanishes.
\end{proof}
Dephasing the average output cannot decrease its entropy, so
\begin{equation}\label{eq:single-output-entropy}
 S(\cA_p(\tau))\le h_2((1-p)q).
\end{equation}
Together with \cref{lem:flat-roof}, this yields
$\chi_{\cA_p}(\tau)\le F_p(q)$. The following theorem
combines this bound with strong additivity and the binary
ensemble attaining equality.

\begin{theorem}[Classical capacity of amplitude damping]
\label[theorem]{thm:damping-capacity}
For every $p\in[0,1]$,
\begin{equation}\label{eq:main-capacity}
 \begin{split}
 C(\cA_p)=\chi(\cA_p)
 &=\max_{0\le q\le1}\left\{
 h_2((1-p)q)
 -g_p(q)
 \right\}.
 \end{split}
\end{equation}
For any maximizing $q_*$, the equiprobable pure states
\begin{equation}\label{eq:optimal-signals}
 \ket{\psi_\pm(q_*)}=\sqrt{1-q_*}\ket0\pm\sqrt{q_*}\ket1
\end{equation}
attain the single-use optimum. Product codewords from this ensemble,
together with collective decoding, achieve every rate below
$C(\cA_p)$. The arbitrary-partner additivity identities in
\eqref{eq:stinespring-add} hold with $\cN=\cA_p$.
\end{theorem}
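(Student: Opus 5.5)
The dilation \eqref{eq:dilation} already has image orthogonal to $\ket{1_B1_E}$, so I will take $\ket{0_B},\ket{0_E}$ as the distinguished vectors. With these choices $\ran V_p\subseteq\cS(B,E)$ for every $p\in[0,1]$: $V_p\ket0=\ket{00}$, and $V_p\ket1$ lies in $(B_1\otimes\ket{0_E})\oplus(\ket{0_B}\otimes E_1)$. \Cref{cor:stinespring} then gives $C(\cA_p)=\chi(\cA_p)$ and all the arbitrary-partner identities in \eqref{eq:stinespring-add}. Alternatively, \cref{cor:two-kraus} applies, since $\cA_p$ has two Kraus operators. What remains is the single-use evaluation of $\chi(\cA_p)$ and the claim that the binary ensemble is optimal.

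\emph{Upper bound.} Write a general qubit input as $\tau=\left(\begin{smallmatrix}1-q&z\\z^*&q\end{smallmatrix}\right)$. \Cref{lem:flat-roof} gives $\roof{\cA_p}(\tau)=g_p(q)$. The output $\cA_p(\tau)$ has diagonal $(1-(1-p)q,(1-p)q)$, and dephasing does not decrease entropy, so \eqref{eq:single-output-entropy} gives $S(\cA_p(\tau))\le h_2((1-p)q)$. Together these give $\chi_{\cA_p}(\tau)\le F_p(q)$. Maximizing over $\tau$ yields $\chi(\cA_p)\le\max_q F_p(q)$, and the maximum is attained because $F_p$ is continuous on $[0,1]$.

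\emph{Attainment.} Fix a maximizer $q_*$ and take the equiprobable ensemble $\ket{\psi_\pm(q_*)}$. Its average is the diagonal state $\diag(1-q_*,q_*)$, and $\cA_p$ maps diagonal states to diagonal states. The average output is therefore $\diag(1-(1-p)q_*,(1-p)q_*)$, with entropy exactly $h_2((1-p)q_*)$. Each signal is pure with excitation $q_*$. Its output determinant is $p(1-p)q_*^2$, which I would check directly from the Kraus operators: the output has entries $1-(1-p)q$, $(1-p)q$ and off-diagonal $\pm\sqrt{1-p}\sqrt{q(1-q)}$, so the determinant is $(1-p)q-(1-p)q^2\cdot(1-p)\ldots$, which simplifies to $p(1-p)q^2$. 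Hence each signal's output entropy is $g_p(q_*)$, as in the proof of \cref{lem:flat-roof}. The Holevo information of this ensemble is then $F_p(q_*)$, so $\chi(\cA_p)=\max_q F_p(q)$ and the ensemble is optimal.

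\emph{Achievability with collective decoding.} The HSW theorem \cite{holevo,schumacher-westmoreland} applied to this fixed product ensemble shows that i.i.d.\ product codewords drawn from $\{\psi_\pm(q_*)\}$, decoded by a collective measurement, achieve every rate below $F_p(q_*)=C(\cA_p)$. The edge cases are routine. At $p=1$ both sides vanish. At $p=0$ the optimum $q_*=1/2$ gives $1$, which is consistent because $g_0\equiv0$.

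I see no real obstacle, since the additivity work was done in \cref{thm:support,cor:stinespring}. The only points that need care are confirming that $\ran V_p$ lies in $\cS(B,E)$ with the stated distinguished vectors, and the short determinant computation showing that the pure-signal output entropy equals $g_p(q)$ regardless of phase.
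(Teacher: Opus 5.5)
Your proposal is correct and follows the paper's proof essentially step for step: $\ran V_p\subseteq\cS(B,E)$ feeds \cref{cor:stinespring} for the converse and the additivity identities, \cref{lem:flat-roof} plus the dephasing bound gives $\chi(\cA_p)\le\max_q F_p(q)$, and the equiprobable binary ensemble with diagonal average attains $F_p(q_*)$, with HSW supplying achievability. The only blemish is the garbled determinant line; the full expansion is $(1-(1-p)q)(1-p)q-(1-p)q(1-q)=p(1-p)q^2$, which agrees with the value you state.
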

\begin{proof}
\emph{Strong additivity and the converse.}
The dilation \eqref{eq:dilation} satisfies
$\ran V_p\subseteq\cS(B,E)$. Hence \cref{cor:stinespring}
applies: both the fixed-input bound \eqref{eq:roof-superadd}
and the channel identities \eqref{eq:stinespring-add}
hold with $\cN=\cA_p$. In particular,
$C(\cA_p)=\chi(\cA_p)$. By \cref{lem:flat-roof} and
\eqref{eq:single-output-entropy}, this capacity is at most
$\max_{0\le q\le1}F_p(q)$.

\emph{Evaluation and achievability.}
For a fixed $q$, the equiprobable signals
$\psi_\pm(q)$ in \eqref{eq:optimal-signals} have average
\begin{equation}\label{eq:tauq}
 \tau_q=(1-q)\proj0+q\proj1.
\end{equation}
Their individual output entropies are $g_p(q)$, and the
entropy of their average output is $h_2((1-p)q)$.
Their Holevo information is therefore exactly $F_p(q)$.
Choosing a maximizing $q_*$ proves \eqref{eq:main-capacity}.
The classical--quantum coding theorem supplies product
codewords and collective decoders achieving every smaller
rate \cite{holevo,schumacher-westmoreland}.
At $p=0,1$, the channel is respectively the identity or a
constant channel, giving $C(\cA_0)=1$ and $C(\cA_1)=0$.
\end{proof}
The scalar optimization and optimal binary ensemble are
those of \cite[Sec.~III.B]{op:BSST} and
\cite[Eqs.~(42)--(43)]{giovannetti-fazio}. Strong Holevo
additivity supplies the missing many-use converse and gives
$\chi(\cA_p^{\otimes n})=n\chi(\cA_p)$ at every blocklength.

\subsection{Finite-block converse and the optimal average input}
For $\tau\in\cD(X^{\otimes n})$, write
$q_i=\bra1\tau_{X_i}\ket1$. Iterating
\eqref{eq:roof-superadd} with $\cN=\cA_p$, using
\cref{lem:flat-roof}, and applying output entropy
subadditivity gives
\begin{equation}\label{eq:finite-block}
 \chi_{\cA_p^{\otimes n}}(\tau)\le\sum_{i=1}^nF_p(q_i).
\end{equation}
This bound holds for arbitrary correlations in $\tau$.
For $0\le p<1$, the function $q\mapsto h_2((1-p)q)$
is strictly concave and $g_p$ is convex. Thus $F_p$ is
strictly concave on $(0,1)$; since $F_p(0)=F_p(1)=0$,
its maximizer $q_*\in(0,1)$ is unique.

The finite-block converse admits an exact decomposition
into three nonnegative terms. Fix $0\le p<1$ and set
\begin{equation}\label{eq:deficit-setup}
 \tau_*=\tau_{q_*},\qquad\sigma_*=\cA_p(\tau_*),\qquad
 b_p(q)=g_p(q)-g_p(q_*)-g_p'(q_*)(q-q_*).
\end{equation}
The state $\sigma_*$ is full rank. Convexity gives
$b_p(q)\ge0$, while stationarity at the interior
maximizer gives $(1-p)h_2'((1-p)q_*)=g_p'(q_*)$.

\begin{proposition}[Finite-block Holevo deficit]\label[proposition]{prop:deficit}
Let $0\le p<1$, $\tau\in\cD(X^{\otimes n})$, $\sigma=\cA_p^{\otimes n}(\tau)$,
and $q_i=\bra1\tau_{X_i}\ket1$. Set
\begin{equation}\label{eq:roof-excess}
 R_n(\tau)=\roof{\cA_p^{\otimes n}}(\tau)-\sum_{i=1}^n g_p(q_i)\ge0.
\end{equation}
Then
\begin{equation}\label{eq:exact-deficit}
 nC(\cA_p)-\chi_{\cA_p^{\otimes n}}(\tau)
   =D(\sigma\Vert\sigma_*^{\otimes n})+R_n(\tau)+\sum_{i=1}^n b_p(q_i).
\end{equation}
In particular, for every input $\rho$,
\begin{equation}\label{eq:radius-bound}
 D\!\left(\cA_p^{\otimes n}(\rho)\middle\Vert\sigma_*^{\otimes n}\right)
       \le nC(\cA_p).
\end{equation}
The unique average input attaining $nC(\cA_p)$ is
$\tau_*^{\otimes n}$.
\end{proposition}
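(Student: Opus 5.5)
The plan is to write $\chi_{\cA_p^{\otimes n}}(\tau)=S(\sigma)-\roof{\cA_p^{\otimes n}}(\tau)$ and expand $S(\sigma)$ against the reference product state $\sigma_*^{\otimes n}$. The first step is the identity
\[
 S(\sigma)=-D(\sigma\Vert\sigma_*^{\otimes n})-\Tr\sigma\log_2\sigma_*^{\otimes n}.
\]
This holds because $\sigma_*$ is full rank for $p<1$, so the logarithm is finite. Since $\sigma_*=\diag(1-(1-p)q_*,(1-p)q_*)$, the operator $\log_2\sigma_*^{\otimes n}$ is a sum of single-site diagonal terms. Hence $-\Tr\sigma\log_2\sigma_*^{\otimes n}=\sum_i L(s_i)$, where $s_i=(1-p)q_i$ is the excited population of the $i$th output marginal (amplitude damping maps excitation $q_i$ to $(1-p)q_i$ regardless of coherences). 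Here
\[
 L(s)=-(1-s)\log_2(1-s_*)-s\log_2 s_*
\]
is the tangent line of $h_2$ at $s_*=(1-p)q_*$. Since $h_2'(s_*)=\log_2((1-s_*)/s_*)$, this gives $L(s)=h_2(s_*)+h_2'(s_*)(s-s_*)$.

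Next I substitute $\roof{\cA_p^{\otimes n}}(\tau)=R_n(\tau)+\sum_i g_p(q_i)$, which is the definition of $R_n$. Its nonnegativity follows from iterating \eqref{eq:roof-superadd} together with \cref{lem:flat-roof}. I then write $g_p(q_i)=g_p(q_*)+g_p'(q_*)(q_i-q_*)+b_p(q_i)$. Collecting terms, each site contributes
\[
 h_2(s_*)-g_p(q_*)+\bigl[(1-p)h_2'(s_*)-g_p'(q_*)\bigr](q_i-q_*)-b_p(q_i).
\]
The bracket vanishes by the stationarity condition. Since $h_2(s_*)-g_p(q_*)=F_p(q_*)=C(\cA_p)$ by \cref{thm:damping-capacity}, we get
\[
 \chi_{\cA_p^{\otimes n}}(\tau)=nC(\cA_p)-D(\sigma\Vert\sigma_*^{\otimes n})-R_n(\tau)-\sum_ib_p(q_i),
\]
which is \eqref{eq:exact-deficit}.

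For \eqref{eq:radius-bound}, take $\tau=\rho$. All three terms on the right of \eqref{eq:exact-deficit} are nonnegative and $\chi\ge0$, so $D(\sigma\Vert\sigma_*^{\otimes n})\le nC(\cA_p)-\chi\le nC(\cA_p)$.

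For uniqueness, suppose $\chi_{\cA_p^{\otimes n}}(\tau)=nC(\cA_p)$. Then every term vanishes, and in particular $D(\sigma\Vert\sigma_*^{\otimes n})=0$, so $\cA_p^{\otimes n}(\tau)=\sigma_*^{\otimes n}$. To recover $\tau$ itself I expect to need more than vanishing divergence, because the output determines the input only if the channel tensor power is injective. For $p<1$ the map $\cA_p$ is injective on operators: the matrix units transform triangularly with nonzero diagonal coefficients, so $\proj1\mapsto(1-p)\proj1+p\proj0$, off-diagonals are scaled by $\sqrt{1-p}$, and $\proj0\mapsto\proj0$. Injectivity persists under tensor powers, and $\cA_p(\tau_*)=\sigma_*$, so $\tau=\tau_*^{\otimes n}$.

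Conversely, $\tau_*^{\otimes n}$ attains $nC(\cA_p)$ by product use of the optimal binary ensemble of \cref{thm:damping-capacity}. I expect the main delicate point to be this injectivity step together with the tangent-line bookkeeping. The equality $F_p(q_*)=C$ requires $q_*$ to be the interior maximizer, which holds for $p<1$; the case $p=0$ is fine because $g_0\equiv0$ and $q_*=1/2$.
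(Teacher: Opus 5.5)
Your proposal is correct and follows the paper's proof. It uses the same relative-entropy expansion against the full-rank diagonal state $\sigma_*^{\otimes n}$, the same tangent-line bookkeeping cancelled by the stationarity relation, and the same linear injectivity of $\cA_p$ for $p<1$ to recover $\tau=\tau_*^{\otimes n}$. The only deviation is in the radius bound. You apply $\chi\ge0$ directly to an arbitrary input, whereas the paper treats pure inputs and then invokes convexity of relative entropy. Both work, and yours gives the slightly sharper $D(\cA_p^{\otimes n}(\rho)\Vert\sigma_*^{\otimes n})\le nC(\cA_p)-\chi_{\cA_p^{\otimes n}}(\rho)$.
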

\begin{proof}
Equation \eqref{eq:roof-superadd}, applied successively
with $\cN=\cA_p$, and \eqref{eq:flat-roof} imply
$R_n(\tau)\ge0$.
The eigenvalues of $\sigma_*$ are $1-(1-p)q_*$ and
$(1-p)q_*$. Its logarithm is therefore affine in
the excitation projector, which gives
\begin{equation}\label{eq:relative-entropy-expansion}
 \begin{split}
 D(\sigma\Vert\sigma_*^{\otimes n})
   ={}&n h_2((1-p)q_*)-S(\sigma)\\
     &+(1-p)h_2'((1-p)q_*)\sum_i(q_i-q_*).
 \end{split}
\end{equation}
Substituting the stationarity relation
$(1-p)h_2'((1-p)q_*)=g_p'(q_*)$ and the definitions
of $R_n,b_p$ proves \eqref{eq:exact-deficit}.
For pure $\tau$, the fixed-average-input Holevo
information is zero. Dropping the other nonnegative
terms proves \eqref{eq:radius-bound} for pure inputs;
convexity of relative entropy extends it to all inputs.

Equality $\chi_{\cA_p^{\otimes n}}(\tau)=nC(\cA_p)$
forces $\sigma=\sigma_*^{\otimes n}$. For $p<1$,
amplitude damping is invertible as a linear map: it
rescales the excited population by $1-p$ and the
coherence by $\sqrt{1-p}$, while adding $p\tau_{11}$
to the ground population. Hence $\tau=\tau_*^{\otimes n}$.
Conversely, the product binary ensemble attains the
optimum at this average input.
\end{proof}
The terms in \eqref{eq:exact-deficit} have distinct
origins: displacement of the output from the optimal
product state, excess in the output-entropy convex
roof, and local convexity remainders. The first further
decomposes as
\begin{equation}\label{eq:correlation-decomposition}
 D(\sigma\Vert\sigma_*^{\otimes n})
 =\sum_i S(\sigma_i)-S(\sigma)+\sum_i D(\sigma_i\Vert\sigma_*).
\end{equation}
Here $\sum_i S(\sigma_i)-S(\sigma)$ is the total output
correlation. Equation \eqref{eq:radius-bound} identifies
$\sigma_*^{\otimes n}$ as an information-radius center at every blocklength, in the relative-entropy characterization of Holevo information
\cite{schumacher-optimal}. The uniqueness conclusion
concerns the average input; it allows different optimal
ensembles and collective decoders. The identity is a
finite-block Holevo statement and does not by itself
give a strong converse for nonvanishing error.

\section{Entanglement cost}\label{sec:simulation}
We establish the simulation statements in the parallel LOCC
model of \eqref{eq:simulation-model}. We first derive a
Choi-state criterion under which the general simulation
bounds coincide. Combining this criterion with the
communication results of \cref{sec:capacity} gives the
unified theorem for qubit-to-qubit channels of Kraus rank
at most two and extends all its conclusions to the
pure-output class. We then evaluate the cost for amplitude damping and
construct an explicit two-outcome Kraus representation
that attains the uniform formation bound.

\subsection{Channel-simulation bounds and a Choi-state criterion}
\label{sec:simulation-criterion}
The channel-simulation theorem states
\cite[Theorem~12 and Corollary~17]{op:BBCW}
\begin{equation}\label{eq:berta-regularization}
 \Epar(\cN)=\lim_{n\to\infty}\frac1n\Ech(\cN^{\otimes n}),
 \qquad
 \EC(J_{\cN})\le\Epar(\cN)\le\Ech(\cN).
\end{equation}
For the lower bound, Alice prepares Bell pairs locally
and applies the simulator to one half of each pair.
Diamond-norm convergence implies preparation of
$J_{\cN}^{\otimes n}$ with vanishing trace-distance
error and the same entanglement resource. The upper
bound is supplied by the uniform channel-simulation
theorem: a single protocol must work for every input.

For qubit input and output, concurrence factorization
\cite{op:Konrad} gives
\begin{equation}\label{eq:concurrence-factorization}
 c((\id\otimes\cN)(\psi))=c(J_{\cN})c(\psi)
 \quad\text{for every pure two-qubit input }\psi.
\end{equation}
Here $c$ denotes two-qubit concurrence, for which
$\EF(\rho)=e(c(\rho))$.
Since $e$ is increasing and pure-state concurrence
is at most one, a Bell input maximizes channel
entanglement of formation:
\begin{equation}\label{eq:qubit-channel-eof}
 \Ech(\cN)=\EF(J_{\cN}).
\end{equation}
This identity holds for every qubit-to-qubit channel,
irrespective of Kraus rank, and appears in
\cite[Eqs.~(76)--(77)]{op:BBCW}. Thus
\begin{equation}\label{eq:qubit-sandwich}
 \EC(J_{\cN})\le\Epar(\cN)\le\EF(J_{\cN}).
\end{equation}
It remains to identify a condition under which the two
Choi-state quantities coincide.

For a qubit-to-qubit channel, the pure-output condition
is equivalent to the existence of a nonzero product vector
in $\ker J_{\cN}$. Indeed, for unit vectors $a,b$,
\begin{equation}\label{eq:choi-product}
 \bra{a,b}J_{\cN}\ket{a,b}
       =\frac12\bra b\cN(\proj{\bar a})\ket b,
\end{equation}
where conjugation is taken in the basis defining $\Phi_2$.
For a positive operator, zero expectation is equivalent
to membership in the kernel. A trace-one qubit state
has a zero eigenvalue if and only if it is pure. These
observations prove the equivalence. Allowing a mixed
input gives the same class: every positive-weight pure
component of an input yielding a pure output must yield
that output.

By \cref{cor:state-examples}, the product-kernel condition
implies $\EC(J_{\cN})=\EF(J_{\cN})$, so the bounds in
\eqref{eq:qubit-sandwich} coincide. The same corollary
also makes $J_{\cN}$ strongly additive with every
bipartite partner state. Both properties will be used
in the channel theorem below.

\subsection{Qubit channels of Kraus rank at most two}
\label{sec:two-kraus-simulation}
For Kraus rank at most two, the communication criterion
of \cref{cor:two-kraus} and the Choi-state criterion
above hold simultaneously. We collect their consequences
in the following theorem.

\begin{theorem}[Strong additivity for qubit channels of Kraus rank at most two]
\label[theorem]{thm:two-kraus-unified}
Let $\cN:\cL(\mathbb C^2)\to\cL(\mathbb C^2)$ have
Kraus rank at most two. For every finite-dimensional
partner channel $\cM$,
\begin{align}
 \chi(\cN\otimes\cM)&=\chi(\cN)+\chi(\cM),
 \label{eq:two-kraus-holevo}\\
 \Ech(\cN\otimes\cM)&=\Ech(\cN)+\Ech(\cM),
 \label{eq:two-kraus-formation}\\
 \Epar(\cN\otimes\cM)&=\Epar(\cN)+\Epar(\cM).
 \label{eq:two-kraus-cost-add}
\end{align}
Moreover,
\begin{align}
 C(\cN)&=\chi(\cN),\label{eq:two-kraus-capacity}\\
 \Epar(\cN)&=\EC(J_{\cN})=\EF(J_{\cN})=\Ech(\cN).
 \label{eq:two-kraus-cost}
\end{align}
For every integer $n\ge1$,
\begin{equation}\label{eq:two-kraus-powers}
 \Ech(\cN^{\otimes n})=\EF(J_{\cN}^{\otimes n})
       =n\EF(J_{\cN}).
\end{equation}
\end{theorem}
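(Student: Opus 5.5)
The Holevo statements \eqref{eq:two-kraus-holevo} and \eqref{eq:two-kraus-capacity} are immediate from \cref{cor:two-kraus}, which puts $\cN$ under \cref{cor:stinespring}. The substance is therefore on the simulation side. There I will apply \cref{thm:support} to states of the form $(\id_R\otimes\cN)(\psi)$ across the reference--output cut, and the support hypothesis must hold for arbitrary reference inputs, not just the Bell state.

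\emph{Step 1: a support condition for all inputs.} By \cref{cor:two-kraus}, $\cN$ admits a pure output $\cN(\proj{\psi_0})=\proj{b_0}$. Let $b_1\perp b_0$. For any pure $\psi_{RX}$, write $\psi=\ket{r_0}\ket{\psi_0}+\ket{r_1}\ket{\psi_1}$. The output $(\id\otimes\cN)(\psi)$ is then a sum of Kraus terms $\sum_k (I\otimes K_k)\psi$. Each $K_k\ket{\psi_0}$ is proportional to $b_0$, so every Kraus branch has the form $\ket{r_0'}\ket{b_0}+\ket{r_1}\otimes K_k\ket{\psi_1}$. Its component along $b_1$ is $\ket{r_1}\langle b_1|K_k|\psi_1\rangle$, which is a fixed reference vector $r_1$ times a scalar. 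Hence every branch lies in $\cS(R,B)$, with distinguished vector $r_1/\|r_1\|$ on $R$ and $b_0$ on $B$. This is exactly the kernel condition \eqref{eq:choi-product}, generalized from the Bell input to arbitrary $\psi$ (when $r_1=0$ the output is a product state, which is trivial). The same argument works with an auxiliary partner: for a joint input $\psi_{RXR'Y}$, the $RB$ marginal of $(\id\otimes\cN\otimes\id\otimes\cM)(\psi)$ is supported in such an $\cS(R,B)$, because $\supp$ of a marginal is controlled in the same branch-wise way once $R'Y$ is absorbed into the reference. I expect this to be the main obstacle. The distinguished reference vector must be defined from the reduced operator rather than from a single vector: take $r_1$ to span the image of $\bra{\psi_1}$ applied to $\psi$, viewed as a map into $RR'Y$. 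The cleaner route is to group $R$ with $R'Y$ temporarily, check support in $\cS(RR'Y, B)$, and then trace out. Because tracing out can enlarge the range, I will instead apply \cref{thm:support} directly with $A=RR'Y$ collapsed appropriately, or take $A=R$ and argue that $P_{A_1}\otimes P_{b_1}$ annihilates every branch. That annihilation holds only if the reference side has dimension $2$ ($R\cong X$), since $A_1$ is then one-dimensional and spanned by $\bar{\psi_0}$-dual directions. I would verify this with $R\cong X$ in the definition \eqref{eq:channel-formation}, taking $A_1$ to be the orthogonal complement of the vector $\langle\psi_0|$-conjugate.

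\emph{Step 2: formation additivity.} Given Step 1, \eqref{eq:strong-state} yields $\EF((\id\otimes\cN\otimes\cM)(\psi))\ge\EF(\rho_{RB})+\EF(\sigma_{R'B'})$. The first term is at most $\Ech(\cN)$ and the second at most $\Ech(\cM)$, since the marginals are channel outputs of the (mixed) reduced inputs and $\EF$ is convex. Maximizing over $\psi$ gives the inequality $\le$. Combined with subadditivity (\cref{lem:channel-subadditivity}), this proves \eqref{eq:two-kraus-formation}. Iterating gives $\Ech(\cN^{\otimes n})=n\Ech(\cN)=n\EF(J_\cN)$ by \eqref{eq:qubit-channel-eof}. \cref{cor:state-examples} gives $\EF(J_\cN^{\otimes n})=n\EF(J_\cN)$, which establishes \eqref{eq:two-kraus-powers}.

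\emph{Step 3: costs.} By \eqref{eq:qubit-sandwich} and \cref{cor:state-examples}, $\EC(J_\cN)=\EF(J_\cN)$, and this gives \eqref{eq:two-kraus-cost}. For \eqref{eq:two-kraus-cost-add}, apply \eqref{eq:berta-regularization} to $\cN\otimes\cM$. Iterating Step 2 over the $\cN$ factors gives $\Ech(\cN^{\otimes n}\otimes\cM^{\otimes n})=n\Ech(\cN)+\Ech(\cM^{\otimes n})$. Dividing by $n$ and taking the limit yields $\Epar(\cN)+\Epar(\cM)$, with no assumption on $\cM$.
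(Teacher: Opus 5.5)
Your Step~1 extension to correlated partner inputs is false, so \cref{thm:support} cannot be applied to $(\id\otimes\cN\otimes\id\otimes\cM)(\psi)$ for arbitrary $\psi$. The $RB$ marginal is $(\id_R\otimes\cN)(\omega_{RX})$ with $\omega_{RX}=\Tr_{R'Y}\psi$, and correlations with $R'Y$ can make $\omega_{RX}$ highly mixed. Concretely, take qubit $Y$ and $\psi=\Phi_{RR'}\otimes\Phi_{XY}$. Then $\omega_{RX}=I/4$. For $\cN=\cA_p$ with $0<p<1$, this gives $\rho_{RB}=\tfrac I2\otimes\cA_p(\tfrac I2)$, which is full rank. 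Hence no choice of distinguished vectors places it inside any $\cS(R,B)$.

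In your branch notation, write $\psi=\ket{\Gamma_0}\ket{\psi_0}+\ket{\Gamma_1}\ket{\psi_1}$ with $\ket{\Gamma_i}\in R\otimes R'\otimes Y$. The $b_1$-component of a branch is $\bra{b_1}K_k\ket{\psi_1}\,(I\otimes M_l)\ket{\Gamma_1}$. If $\ket{\Gamma_1}$ has Schmidt rank two across $R:R'Y$, no $e\in R$ satisfies $(\bra e\otimes I)\ket{\Gamma_1}=0$. Restricting to $R\cong X$ does not help, because the obstruction is the correlation of $R$ with $R'Y$, not the dimension of $R$.

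The error is repairable, because the lower bound on $\Ech(\cN\otimes\cM)$ needs only one input. Your remark in Step~2 that each marginal term is at most the corresponding channel formation plays no role in a lower bound. What is needed is an input at which both terms attain their channel values and the support hypothesis holds.

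Take the product input $\Phi_{RX}\otimes\psi^{\star}_{R'Y}$, with $\psi^{\star}$ optimal for $\cM$. Then $\rho_{RB}=J_\cN$, which your single-input computation places in $\cS(R,B)$; alternatively, use $\rank J_\cN\le2$ with \cref{cor:state-examples}. The output is $J_\cN\otimes\sigma$ with $\EF(\sigma)=\Ech(\cM)$. Strong state additivity and \eqref{eq:qubit-channel-eof} then give $\Ech(\cN\otimes\cM)\ge\EF(J_\cN)+\Ech(\cM)=\Ech(\cN)+\Ech(\cM)$. This is exactly the paper's argument.

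Once Step~1 is cut down to this input, the rest of your proposal matches the paper's proof. That includes the Holevo part via \cref{cor:two-kraus}, the iteration for the tensor powers, the sandwich for the cost, and the regularization for cost additivity.
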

\begin{proof}
\emph{Classical communication.}
\Cref{cor:two-kraus} gives \eqref{eq:two-kraus-holevo},
\eqref{eq:two-kraus-capacity}, and additivity of the
regularized classical capacity.

\emph{Entanglement cost.}
The Kraus-rank bound implies $\rank J_{\cN}\le2$.
By \cref{cor:state-examples}, its kernel contains a
nonzero product vector and
$\EC(J_{\cN})=\EF(J_{\cN})$. The bounds in
\eqref{eq:qubit-sandwich} therefore coincide.
Together with \eqref{eq:qubit-channel-eof}, this proves
\eqref{eq:two-kraus-cost}.

\emph{Strong formation additivity.}
\Cref{lem:channel-subadditivity} gives the upper bound
in \eqref{eq:two-kraus-formation} for arbitrary
finite-dimensional channels. For the reverse bound,
use a Bell input for $\cN$ and an independent pure
input with reference attaining $\Ech(\cM)$.
The output is $J_{\cN}\otimes\sigma$, where
$\EF(\sigma)=\Ech(\cM)$. Strong state additivity
from \cref{cor:state-examples} and
\eqref{eq:qubit-channel-eof} yield
\[
 \Ech(\cN\otimes\cM)
 \ge\EF(J_{\cN}\otimes\sigma)
 =\EF(J_{\cN})+\EF(\sigma)
 =\Ech(\cN)+\Ech(\cM).
\]
This proves \eqref{eq:two-kraus-formation}.

\emph{Strong cost additivity.}
Repeated application of \eqref{eq:two-kraus-formation}
to the $\cN$ factors gives
\[
 \Ech(\cN^{\otimes n}\otimes\cM^{\otimes n})
 =n\Ech(\cN)+\Ech(\cM^{\otimes n}).
\]
Divide by $n$ and use \eqref{eq:berta-regularization}
and \eqref{eq:two-kraus-cost}. This proves
\eqref{eq:two-kraus-cost-add}, without an additivity
assumption on $\cM$. Iterating formation additivity and
state additivity separately gives \eqref{eq:two-kraus-powers}.
\end{proof}

\begin{corollary}[Qubit channels admitting a pure output]
\label[corollary]{cor:pure-output}
Let $\cN$ be a qubit-to-qubit channel with
$\cN(\proj\psi)=\proj\phi$ for some unit vectors
$\psi,\phi$. Then all conclusions of
\cref{thm:two-kraus-unified} hold. In addition,
\[
 C(\cN\otimes\cM)=C(\cN)+C(\cM)
\]
for every finite-dimensional partner channel $\cM$.
\end{corollary}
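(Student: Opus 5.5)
The plan is to rerun the proof of \cref{thm:two-kraus-unified}, with the Kraus-rank hypothesis replaced by the two equivalent support conditions in \eqref{eq:pure-output-equivalence}. That proof used Kraus rank at most two in exactly two places. The communication part went through \cref{cor:two-kraus}, and the simulation part needed a product vector in $\ker J_{\cN}$ via \cref{cor:state-examples}. Both inputs are available for any pure-output qubit channel, so every step should carry over verbatim.

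\emph{Communication.} \Cref{lem:pure-output-stinespring} shows that a pure output gives a Stinespring isometry with $\ran V\subseteq\cS(B,E)$. \Cref{cor:stinespring} then yields \eqref{eq:two-kraus-holevo} and \eqref{eq:two-kraus-capacity}. It also yields $C(\cN\otimes\cM)=C(\cN)+C(\cM)$ for arbitrary finite-dimensional $\cM$, which is the additional claim. No assumption on $\cM$ is needed, since the regularization argument there peels off the $\cN$ factors one at a time.

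\emph{Simulation.} Let $\cN(\proj\psi)=\proj\phi$ and take $b\perp\phi$. By \eqref{eq:choi-product} with $a=\bar\psi$, the product vector $\ket{\bar\psi,b}$ satisfies $\bra{\bar\psi,b}J_{\cN}\ket{\bar\psi,b}=0$. Since $J_{\cN}\succeq0$, this vector lies in $\ker J_{\cN}$. \Cref{cor:state-examples} then gives two things: $\EC(J_{\cN})=\EF(J_{\cN})$, and strong additivity $\EF(J_{\cN}\otimes\sigma)=\EF(J_{\cN})+\EF(\sigma)$ for every bipartite $\sigma$. Combining this with \eqref{eq:qubit-channel-eof}, which holds for all qubit channels, and the sandwich \eqref{eq:qubit-sandwich} gives \eqref{eq:two-kraus-cost}.

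\emph{Formation and cost additivity.} For \eqref{eq:two-kraus-formation}, the upper bound is \cref{lem:channel-subadditivity}. For the lower bound, feed a Bell state into $\cN$ and an optimal pure input with reference into $\cM$, then apply strong state additivity. The cost additivity \eqref{eq:two-kraus-cost-add} follows by applying formation additivity repeatedly to $\cN^{\otimes n}\otimes\cM^{\otimes n}$, dividing by $n$, and using \eqref{eq:berta-regularization} on both sides. Finally, \eqref{eq:two-kraus-powers} follows by iterating the formation and state additivity identities.

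The only step with real content is confirming that the earlier proof never used the Kraus-rank hypothesis beyond these two support facts. One point to check is that \cref{cor:state-examples} requires only a nonzero product kernel vector, not rank at most two. The other is that the cost-additivity argument invokes strong formation additivity rather than any rank property. I expect both checks to go through directly, so the corollary is essentially a bookkeeping consequence of the equivalences in \eqref{eq:pure-output-equivalence}.
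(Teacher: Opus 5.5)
Your proposal is correct and follows the paper's own proof. The communication claims come from \cref{lem:pure-output-stinespring} with \cref{cor:stinespring}, and the formation and simulation parts of \cref{thm:two-kraus-unified} are rerun once \eqref{eq:choi-product} supplies a product kernel vector; your explicit vector $\ket{\bar\psi,b}$ with $b\perp\phi$ carries out that step, and your observation that the rank hypothesis enters only through this vector is exactly the paper's argument.
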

\begin{proof}
\label{par:pure-output-extension}
\Cref{lem:pure-output-stinespring} gives
\eqref{eq:two-kraus-holevo}, \eqref{eq:two-kraus-capacity},
and additivity of the regularized classical capacity.
In the formation and simulation parts of the proof of
\cref{thm:two-kraus-unified}, the rank bound is used only
to obtain a product vector in $\ker J_{\cN}$.
Equation \eqref{eq:choi-product} supplies this property
under the pure-output hypothesis, so the same proof applies.
\end{proof}

To see that the pure-output hypothesis is genuinely more general than the Kraus-rank-two assumption, suppose $a,b,c>0$ with $a+b+c=1$,
the Kraus operators
\[
 K_0=\begin{pmatrix}1&0\\0&\sqrt a\end{pmatrix},\qquad
 K_1=\sqrt b\,\ket0\bra1,\qquad
 K_2=\sqrt c\,\ket1\bra1
\]
are linearly independent and define a trace-preserving channel
$\cN$ satisfying $\cN(\proj0)=\proj0$. Thus $\cN$ has
Kraus rank three and falls within \cref{cor:pure-output}.
Its Choi state satisfies $J_{\cN}\ket{01}=0$ and
$\bra{00}J_{\cN}\ket{11}=\sqrt a/2$.
Equation \eqref{eq:qubit-kernel} therefore gives
\[
 \Epar(\cN)=\EF(J_{\cN})=e(\sqrt a)
       =h_2\!\left(\frac{1+\sqrt{1-a}}2\right).
\]

\subsection{Entanglement cost of amplitude damping}
\label{sec:ad-simulation}
Amplitude damping has Kraus rank at most two, so
\cref{thm:two-kraus-unified} reduces its entanglement
cost to the formation of its Choi state. The latter
can be evaluated explicitly.

\begin{theorem}[Entanglement cost of amplitude damping]
\label[theorem]{thm:ad-cost}
For every $p\in[0,1]$,
\begin{equation}\label{eq:main-ad-cost}
\Epar(\cA_p)=\EC(J_{\cA_p})=\EF(J_{\cA_p})=\Ech(\cA_p)=h_2\!\left(\frac{1+\sqrt p}{2}\right).
\end{equation}
For every integer $n\ge1$,
\begin{equation}\label{eq:main-ad-powers}
 \Ech(\cA_p^{\otimes n})=\EF(J_{\cA_p}^{\otimes n})
       =n h_2\!\left(\frac{1+\sqrt p}{2}\right).
\end{equation}
The arbitrary-partner additivity identities of
\cref{thm:two-kraus-unified} apply with $\cN=\cA_p$.
\end{theorem}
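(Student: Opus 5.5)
Theorem~\ref{thm:two-kraus-unified} applies directly, because $\cA_p$ has the two Kraus operators of \eqref{eq:ad-kraus} and hence Kraus rank at most two. It gives $\Epar(\cA_p)=\EC(J_{\cA_p})=\EF(J_{\cA_p})=\Ech(\cA_p)$, the tensor-power identities $\Ech(\cA_p^{\otimes n})=\EF(J_{\cA_p}^{\otimes n})=n\EF(J_{\cA_p})$, and the arbitrary-partner additivity statements. The only remaining task is the single-copy value $\EF(J_{\cA_p})=h_2((1+\sqrt p)/2)$.

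For this evaluation I will compute the Choi state explicitly and apply \eqref{eq:qubit-kernel}. With $\Phi_2$ the projector onto $(\ket{00}+\ket{11})/\sqrt2$ and the first factor as reference, the unnormalized Choi vectors are the images of the Kraus operators. The $K_0$ term contributes $\frac12(\ket{00}+\sqrt{1-p}\ket{11})$, and the $K_1$ term contributes $\frac12\sqrt p\ket{10}$, because $K_1\ket1=\sqrt p\ket0$. Therefore
\[
 J_{\cA_p}=\tfrac12\Bigl[(\ket{00}+\sqrt{1-p}\ket{11})(\bra{00}+\sqrt{1-p}\bra{11})+p\proj{10}\Bigr].
\]
This state satisfies $J_{\cA_p}\ket{01}=0$, which is a product kernel vector consistent with \eqref{eq:choi-product} and the pure output $\cA_p(\proj0)=\proj0$. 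Its off-diagonal entry is $\bra{00}J_{\cA_p}\ket{11}=\sqrt{1-p}/2$. By \eqref{eq:qubit-kernel}, $\EF(J_{\cA_p})=e(\sqrt{1-p})$, and \eqref{eq:E-function} then gives $e(\sqrt{1-p})=h_2\bigl((1+\sqrt{1-(1-p)})/2\bigr)=h_2((1+\sqrt p)/2)$. This agrees with the three-Kraus example above in the special case $c=0$, $a=1-p$, $b=p$.

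The boundary cases also check out. At $p=0$ the formula gives $h_2(1/2)=1$ for the identity channel, and at $p=1$ it gives $h_2(1)=0$ for the constant channel, whose Choi state is a product state. Both values are covered by the same formula and by the rank bound in Theorem~\ref{thm:two-kraus-unified}. The main obstacle is therefore only routine bookkeeping: keeping the basis ordering and the reference conjugation in the Choi-state computation consistent, so that the kernel vector really is $\ket{01}$, as \eqref{eq:qubit-kernel} requires, and not $\ket{10}$. If the convention produced $\ket{10}$ instead, a swap of the local basis labels on one side, which is a local unitary and leaves $\EF$ unchanged, would restore the required form.
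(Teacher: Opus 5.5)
Your proposal is correct and matches the paper's proof: you invoke \cref{thm:two-kraus-unified} through the Kraus-rank bound, write out $J_{\cA_p}$, observe that $J_{\cA_p}\ket{01}=0$ and $\bra{00}J_{\cA_p}\ket{11}=\sqrt{1-p}/2$, and evaluate $\EF(J_{\cA_p})=e(\sqrt{1-p})=h_2((1+\sqrt p)/2)$ via \eqref{eq:qubit-kernel}. One small slip: the vectors $(\id\otimes K_i)\ket{\Phi_2}$ have amplitude $1/\sqrt2$, not $1/2$, but your displayed $J_{\cA_p}$ with the overall factor $\tfrac12$ is correct.
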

\begin{proof}
In the computational basis of reference and output,
\begin{equation}\label{eq:ad-choi}
 J_{\cA_p}=\frac12
 \begin{pmatrix}
 1&0&0&\sqrt{1-p}\\
 0&0&0&0\\
 0&0&p&0\\
 \sqrt{1-p}&0&0&1-p
 \end{pmatrix}.
\end{equation}
Its kernel contains $\ket{01}$. The two-qubit identity
\eqref{eq:qubit-kernel} evaluates its formation as
\begin{equation}\label{eq:ad-choi-formation}
 \EF(J_{\cA_p})=e(\sqrt{1-p})
       =h_2\!\left(\frac{1+\sqrt p}{2}\right).
\end{equation}
Substitution into \cref{thm:two-kraus-unified} gives
\eqref{eq:main-ad-cost}, \eqref{eq:main-ad-powers},
and the arbitrary-partner identities.
\end{proof}
At $p=0,1$, the formula respectively gives the cost of
the identity channel (one ebit per use) and a constant
channel (zero). The cost equalities are asymptotic
vanishing-error statements; \eqref{eq:main-ad-powers}
is an exact finite-copy formation identity. A simulation
from a single Choi copy, or against adaptive tests, is
not asserted.

\paragraph{An explicit two-outcome Kraus representation.}
The general proof uses the uniform representation
argument in \cref{app:kraus}. For amplitude damping,
a fixed two-outcome instrument gives the formation
upper bound directly. Given a finite Kraus
representation $\mathbf L=\{L_s\}$, define
\begin{equation}\label{eq:instrument-cost}
 f_{\mathbf L}(\tau)=\sum_s t_s
 S(L_s\tau L_s^\dagger/t_s),\qquad
 t_s=\Tr(L_s\tau L_s^\dagger).
\end{equation}
Branches of zero weight are omitted. The entropy
perspective makes $f_{\mathbf L}$ continuous and
concave in $\tau$. Applied to a purification of $\tau$,
the Kraus operators give a pure-state decomposition of
the reference--output state with average entanglement
$f_{\mathbf L}(\tau)$. This quantity consequently
upper-bounds the entanglement of formation of that state.

\begin{proposition}[Uniform formation bound for amplitude damping]
\label[proposition]{prop:binary-instrument}
Set
\begin{equation}\label{eq:binary-instrument}
 L_\pm=\frac{K_0\pm K_1}{\sqrt2}
       =\frac1{\sqrt2}\begin{pmatrix}
           1&\pm\sqrt p\\0&\sqrt{1-p}
         \end{pmatrix}.
\end{equation}
Then
\begin{equation}\label{eq:binary-max}
 \max_{\tau\in\cD(\mathbb C^2)}f_{\mathbf L}(\tau)
     =f_{\mathbf L}(I/2)=h_2\!\left(\frac{1+\sqrt p}{2}\right).
\end{equation}
For every integer $n\ge1$ and every block input $\tau$,
\begin{equation}\label{eq:binary-uniform-block}
 f_{\mathbf L^{\otimes n}}(\tau)
       \le n h_2\!\left(\frac{1+\sqrt p}{2}\right).
\end{equation}
Together with Choi-state additivity, this gives an
explicit certificate for \eqref{eq:main-ad-powers}.
\end{proposition}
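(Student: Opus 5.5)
The plan has two parts: a direct single-qubit calculation, reduced to diagonal inputs by a symmetry and concavity argument, followed by a subadditivity argument for blocks that relies only on the completeness of $\mathbf L$.

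\emph{Single use.} First I check that $\mathbf L$ is a genuine Kraus representation of $\cA_p$. It is a unitary mixing of $\{K_0,K_1\}$, so $L_+^\dagger L_++L_-^\dagger L_-=K_0^\dagger K_0+K_1^\dagger K_1=I$.

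Next, with $Z=\diag(1,-1)$, we have $ZK_0Z=K_0$ and $ZK_1Z=-K_1$, hence $ZL_\pm Z=L_\mp$. Therefore $L_\pm(Z\tau Z)L_\pm^\dagger=Z\,L_\mp\tau L_\mp^\dagger\,Z$. The two branches are exchanged with their weights, and each output is conjugated by a unitary, so $f_{\mathbf L}(Z\tau Z)=f_{\mathbf L}(\tau)$.

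Concavity of $f_{\mathbf L}$ then gives $f_{\mathbf L}(\tau)\le f_{\mathbf L}\bigl((\tau+Z\tau Z)/2\bigr)$. The right-hand argument is the dephased input $\tau_q$ of \eqref{eq:tauq}, so it suffices to maximize over diagonal inputs.

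For $\tau_q$ each branch has weight $1/2$. The normalized branch state is
\[
\begin{pmatrix}1-(1-p)q&\pm\sqrt{p(1-p)}\,q\\ \pm\sqrt{p(1-p)}\,q&(1-p)q\end{pmatrix},
\]
whose determinant is $(1-p)q(1-q)$. Hence $f_{\mathbf L}(\tau_q)=e\bigl(2\sqrt{(1-p)q(1-q)}\bigr)$.

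Since $e$ is increasing and $q(1-q)\le 1/4$, this is maximized at $q=1/2$. The maximum value is $e(\sqrt{1-p})=h_2((1+\sqrt p)/2)$, exactly as in \eqref{eq:ad-choi-formation}. This proves \eqref{eq:binary-max}.

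\emph{Blocks.} The key step is subadditivity: for Kraus families $\mathbf L$ on $X$ and $\mathbf M$ on $Y$, with $\mathbf M$ complete, I will show
\[
f_{\mathbf L\otimes\mathbf M}(\tau_{XY})\le f_{\mathbf L}(\tau_X)+f_{\mathbf M}(\tau_Y).
\]
For a branch $(s,t)$ with weight $t_{st}$ and normalized output $\omega_{st}$, subadditivity of entropy gives $S(\omega_{st})\le S(\omega_{st}^{(1)})+S(\omega_{st}^{(2)})$.

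Summing the first marginals over $t$ with weights $t_{st}$ gives $\sum_t t_{st}\omega^{(1)}_{st}=L_s\tau_XL_s^\dagger$, because $\sum_tM_t^\dagger M_t=I$. Concavity of entropy then yields $\sum_t t_{st}S(\omega^{(1)}_{st})\le t_s\,S(L_s\tau_XL_s^\dagger/t_s)$. Summing over $s$ gives $f_{\mathbf L}(\tau_X)$; the second marginals are handled symmetrically, using completeness of $\mathbf L$.

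Induction on $n$ then gives $f_{\mathbf L^{\otimes n}}(\tau)\le\sum_i f_{\mathbf L}(\tau_{X_i})$. Each term is at most $h_2((1+\sqrt p)/2)$ by the single-use result, which proves \eqref{eq:binary-uniform-block}.

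\emph{Certificate.} Applying $\mathbf L^{\otimes n}$ to a purification of any block input gives a pure-state decomposition of the reference--output state with average entanglement $f_{\mathbf L^{\otimes n}}(\tau)$. Hence $\Ech(\cA_p^{\otimes n})\le n\,h_2((1+\sqrt p)/2)$. The matching lower bound follows from $J_{\cA_p}^{\otimes n}$ with Choi-state additivity (\cref{thm:ad-cost}), which yields \eqref{eq:main-ad-powers}.

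The only delicate point is the marginal bookkeeping in the subadditivity step, namely that completeness of the other Kraus family makes the branch-averaged marginal equal to $L_s\tau_XL_s^\dagger$. Zero-weight branches are simply omitted.
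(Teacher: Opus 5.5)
Your proof is correct. The block part matches the paper's: your branchwise lemma (subadditivity for each branch, then concavity over the other family's outcomes, with completeness identifying the averaged marginal as $L_s\tau_XL_s^\dagger$) is the paper's chain $S(B^n|S^n)\le\sum_i S(B_i|S_i)$, written out explicitly in the style of \cref{lem:channel-subadditivity}.

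The single-use step takes a different route. The paper computes, for a general input with off-diagonal entry $z$, the weights $t_\pm=\tfrac12\pm\sqrt p\,\Re z$ and determinants $\det(L_\pm\tau L_\pm^\dagger)=\tfrac{1-p}{4}\det\tau$. It then notes that $\tau\mapsto I-\tau$ preserves $\det\tau$ and swaps $t_+\leftrightarrow t_-$. A single concavity step with $\tfrac12[\tau+(I-\tau)]=I/2$ therefore lands directly on the maximizer, with no further optimization.

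You instead use the unitary covariance $ZL_\pm Z=L_\mp$. Averaging over it only dephases, so you need the extra evaluation $f_{\mathbf L}(\tau_q)=e\bigl(2\sqrt{(1-p)q(1-q)}\bigr)$ on the diagonal family, plus monotonicity of $e$, to reach $q=1/2$. The paper's symmetry is the antiunitary point inversion of the Bloch ball, so it verifies invariance through the explicit weight and determinant formulas. Your invariance follows structurally from a covariance of the Kraus family and needs computation only on diagonal inputs. In exchange you pay a one-parameter optimization, and you get the closed form along the diagonal as a by-product.

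One small citation point: for $\EF(J_{\cA_p}^{\otimes n})=n\,\EF(J_{\cA_p})$ you should cite \cref{cor:state-examples} together with \eqref{eq:ad-choi-formation}, not \cref{thm:ad-cost}. The latter already contains \eqref{eq:main-ad-powers} and is proved via the general uniform-Kraus argument that this proposition is meant to bypass.
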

\begin{proof}
The operators $L_+,L_-$ are a unitary mixing of
$K_0,K_1$, so they represent the same channel. Writing
$z=\tau_{01}$, their branch weights and determinants are
\begin{equation}\label{eq:binary-traces}
 t_\pm=\frac12\pm\sqrt p\,\Re z,\qquad
 \det(L_\pm\tau L_\pm^\dagger)=\frac{1-p}{4}\det\tau.
\end{equation}
For a qubit state $\tau$, the operator $I-\tau$ is
a state with the same determinant. This substitution
exchanges $t_+$ and $t_-$, and hence the normalized
branch spectra. Thus $f_{\mathbf L}(I-\tau)=f_{\mathbf L}(\tau)$.
Concavity implies
\[
 f_{\mathbf L}(\tau)
 =\tfrac12[f_{\mathbf L}(\tau)+f_{\mathbf L}(I-\tau)]
 \le f_{\mathbf L}(I/2)=h_2\!\left(\frac{1+\sqrt p}{2}\right).
\]
Continuity includes zero-weight branches and $p=0,1$.

For an arbitrary input on $n$ channel uses, apply the
product instrument and store its outcomes in classical
registers $S_1,\ldots,S_n$. Subadditivity and removal
of classical conditioning give
\begin{equation}\label{eq:binary-product}
 f_{\mathbf L^{\otimes n}}(\tau)
  =S(B^n|S^n)\le\sum_i S(B_i|S_i)
  =\sum_i f_{\mathbf L}(\tau_i)
  \le nh_2\!\left(\frac{1+\sqrt p}{2}\right).
\end{equation}
Each marginal $B_iS_i$ is the output of the local
instrument on $\tau_i$, because the remaining maps
are trace preserving. The bound therefore holds
uniformly for every purified block input. For
$\tau=(I/2)^{\otimes n}$, the instrument factors across
uses, and \eqref{eq:binary-max} gives equality in
\eqref{eq:binary-uniform-block}. The associated
reference--output state is $J_{\cA_p}^{\otimes n}$.
By \cref{cor:state-examples} and
\eqref{eq:ad-choi-formation}, its entanglement of
formation equals the same value. This establishes
the matching lower bound without using the general
uniform-representation argument.
\end{proof}
The Kraus representation \eqref{eq:binary-instrument}
corresponds to a Hadamard-basis measurement of the
dilation environment. Its symmetry under
$\tau\mapsto I-\tau$ identifies $I/2$ as a maximizer
of the average conditional entropy. The resulting
bound is explicit at every blocklength; operational
achievability still uses the channel-simulation
theorem \eqref{eq:berta-regularization}.

\section{Discussion}\label{sec:discussion}
We have established a common additivity criterion for classical communication and channel simulation.
For every qubit-to-qubit channel admitting a pure output,
the classical capacity equals the single-use Holevo
information, and the parallel entanglement cost equals
the entanglement of formation of the normalized Choi
state. The corresponding Holevo, formation, and cost
quantities are additive with arbitrary finite-dimensional
partners. For amplitude damping, these statements give
the explicit capacity and cost formulas, together with a finite-block
deficit identity and a uniform binary Kraus construction.

The underlying state theorem applies in arbitrary
finite dimensions. Excluding the joint orthogonal
sector $A_1\otimes B_1$ suffices for strong
superadditivity of entanglement of formation. The proof
uses a triangular block entropy inequality to control
a pure extension and a fixed-expectation decomposition
to bound its constrained marginal. For communication,
the theorem applies to an entire output--environment
state family; for simulation, it applies to the
reference--output Choi state. For qubit-to-qubit channels,
these two support conditions are equivalent to admitting
a pure output, as expressed in
\eqref{eq:pure-output-equivalence}. Thus the communication
and simulation conclusions hold for the same class,
including all channels of Kraus rank at most two.

The support hypothesis is sufficient and leaves room
for several extensions. In higher dimensions, a
single product vector in the kernel does not exclude
the full subspace $A_1\otimes B_1$. Generalized
amplitude damping at finite temperature does not
in general satisfy the hypothesis \cite{khatri}.
A quantitative bound in terms of the weight in the
excluded sector could extend the present argument
to such channels. A R\'enyi version of the matrix
inequality would be another direction, potentially
relevant to strong-converse bounds. Neither extension
is established here.

The operational conclusions concern vanishing-error
classical communication and parallel LOCC simulation.
They leave strong-converse thresholds, reliability
functions, and simulation against adaptive tests
\cite{op:Wilde2018} unresolved. The comparisons with
quantum and entanglement-assisted capacities in
\cref{app:operational} use different free resources.
A reversibility statement would require matching
resource conventions, while the tradeoff between classical
communication and entanglement would require a joint
optimization over protocols.

\section*{Acknowledgments}
This work was supported by the National Key R\&D Program of China (Grant No.~2024YFE0102500), the National Natural Science Foundation of China (Grant. No.~92576114, 12447107), the Guangdong Provincial Quantum Science Strategic Initiative (Grant No.~GDZX2403008, GDZX2503001), and the Guangdong Provincial Key Lab of Integrated Communication, Sensing and Computation for Ubiquitous Internet of Things (Grant No.~2023B1212010007), and by the Guangdong Basic and Applied Basic Research Foundation (Grant Nos.~2026A1515030035 and 2025A1515110223). 

\paragraph{Use of AI-assisted tools.}
Frontier Large Language Models were used for the proof of~\Cref{thm:support} and~\Cref{thm:triangle}. The authors rigorously verified the statement and proof, generalized the application of~\Cref{thm:triangle}, and prepared the manuscript with the help of ChatGPT-6 Astra in writing. The authors take full responsibility for the final text and all mathematical claims contained therein.

\paragraph{Lean formalization.}
 The Lean~4.30.0~\cite{deMouraUllrich2021} source code and manuscript-to-Lean correspondence are available in the \href{https://github.com/QuAIR/Classical-Capacity-Qubit-Amplitude-Damping}{accompanying GitHub repository}~\cite{ClassicalCapacityADLean2026}, using Mathlib~4.30.0~\cite{mathlib2020} and Lean-QIT~\cite{ZhuEtAl2026LeanQIT}.

\bibliographystyle{unsrt}
\bibliography{ref}

\appendix
\section{Proof of the triangular block entropy inequality}
\label[appendix]{app:triangle}

We prove \cref{thm:triangle} by expressing the entropy
difference as an integral of log-determinants. Convexity
in two matrix arguments, with the third fixed, permits
dephasing followed by a scalar reduction. Polar
decomposition then identifies this difference with the
triangular block expression. We first treat square
blocks and extend the result to rectangular blocks by
isometric embedding. Throughout this appendix, entropy
is expressed using natural logarithms until the final
conversion to bits.

\Needspace{11\baselineskip}
\subsection{Convexity of a log-determinant functional}

\begin{lemma}
\label[lemma]{lem:logdet}
For an integer $d\ge1$ and a fixed $J\in\mathbb C^{d\times d}$ with
$J^\dagger=-J$, the function
\begin{equation}
 L_J(G)=\ln|\det(G+J)|-\ln\det G
 \label{eq:logdet-function}
\end{equation}
is convex on the positive definite Hermitian matrices, and strictly convex if $J$ is invertible.
\end{lemma}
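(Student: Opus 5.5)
The plan is to prove convexity by computing the second derivative of $L_J$ along an arbitrary line and normalizing the base point by a congruence. First note that $L_J$ is well defined and smooth on the positive definite cone. For $G\succ0$ and $J^\dagger=-J$, the Hermitian part of $G+J$ is $G\succ0$. Hence $\Re\bra x(G+J)\ket x=\bra xG\ket x>0$ for $x\neq0$, so $G+J$ is invertible and $\ln|\det(G+J)|=\Re\ln\det(G+J)$ is real-analytic there. Convexity on an open convex set then reduces to nonnegativity of the second directional derivative $\frac{d^2}{dt^2}L_J(G+tH)\big|_{t=0}$ for every positive definite $G$ and every Hermitian $H$. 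Strict convexity follows from strict positivity of that derivative for $H\neq0$.

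Next I will compute this derivative. Set $M=(G+J)^{-1}$. The standard identities $\frac{d}{dt}\ln\det X=\Tr(X^{-1}\dot X)$ and $\frac{d}{dt}X^{-1}=-X^{-1}\dot XX^{-1}$ give
\[
\frac{d^2}{dt^2}L_J(G+tH)\Big|_{t=0}
=\Tr(G^{-1}HG^{-1}H)-\Re\Tr(MHMH).
\]

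To evaluate the sign, I will use congruence invariance. For invertible $C$, replacing $G\mapsto C^\dagger GC$ and $J\mapsto C^\dagger JC$ keeps $J$ skew-Hermitian. It shifts both log-determinants by the same $\ln|\det C|^2$, so $L_J(G)$ and its Hessian in the transformed direction $C^\dagger HC$ are preserved. Taking $C=G^{-1/2}U$, with $U$ a unitary that diagonalizes the normal matrix $G^{-1/2}JG^{-1/2}$, reduces to $G=I$ and $J=\diag(i\lambda_1,\dots,i\lambda_d)$ with $\lambda_k\in\mathbb R$. Then $M=\diag\bigl((1+i\lambda_k)^{-1}\bigr)$, and
\[
\Re\Tr(MHMH)=\sum_{j,k}|H_{jk}|^2\,\Re\frac1{(1+i\lambda_j)(1+i\lambda_k)}
\le\sum_{j,k}\frac{|H_{jk}|^2}{\sqrt{(1+\lambda_j^2)(1+\lambda_k^2)}}
\le\sum_{j,k}|H_{jk}|^2=\Tr(H^2),
\]
where we used $H_{kj}=H_{jk}^*$. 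So the second derivative is nonnegative, and convexity follows.

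For strictness, suppose $J$ is invertible, so every $\lambda_k\neq0$ after the congruence. Then each factor $\bigl((1+\lambda_j^2)(1+\lambda_k^2)\bigr)^{-1/2}$ is strictly less than $1$. Hence the second inequality is strict whenever $H\neq0$, the Hessian is positive definite, and $L_J$ is strictly convex.

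The main obstacle I expect is the bookkeeping in the congruence step. One must check that the transformed direction $C^\dagger HC$ ranges over all Hermitian matrices and that the phase of $\det C$ is harmless. The absolute value in $|\det(G+J)|$ is exactly what makes the phase drop out.
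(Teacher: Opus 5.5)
Your proof is correct and follows essentially the paper's route: the same Hessian $\Tr(G^{-1}HG^{-1}H)-\Re\Tr((G+J)^{-1}H(G+J)^{-1}H)$ and the same normalization by $G^{-1/2}$. The one difference is the last step: you unitarily diagonalize the skew-Hermitian matrix $G^{-1/2}JG^{-1/2}$ and bound entrywise via $|m_jm_k|\le1$, whereas the paper uses $\normop{(I+G^{-1/2}JG^{-1/2})^{-1}}\le1$ with the Cauchy--Schwarz bound $|\Tr X^2|\le\normF{X}^2$, which is the basis-free form of your estimate and gives strictness in the same way.
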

\begin{proof}
The matrix $G+J$ is invertible: if $(G+J)x=0$, taking
the real part of $x^\dagger(G+J)x$ gives
$x^\dagger Gx=0$, hence $x=0$. Jacobi's formula
therefore applies. For a Hermitian direction $H$,
it yields
\begin{align}
 D^2L_J(G)[H,H]
 =\Tr(G^{-1}HG^{-1}H)
 -\operatorname{Re}\Tr((G+J)^{-1}H(G+J)^{-1}H).
 \label{eq:logdet-hessian}
\end{align}
Set
\[
 E=G^{-1/2}HG^{-1/2},\qquad
 A_0=G^{-1/2}JG^{-1/2},\qquad C_0=(I+A_0)^{-1}.
\]
Here $E$ is Hermitian and $A_0$ is skew-Hermitian.
The identity
\[
 (I+A_0)^\dagger(I+A_0)=I+A_0^\dagger A_0\succeq I,
\]
implies $\normop{C_0}\le1$.
Using $(G+J)^{-1}=G^{-1/2}C_0G^{-1/2}$ and
cyclicity of trace, we bound the Hessian by
\begin{align*}
 \Tr E^2-\operatorname{Re}\Tr(C_0EC_0E)
 &\ge\normF E^2-\big|\Tr[(C_0E)^2]\big|\\
 &\ge\normF E^2-\normF{C_0E}^2\ge0.
\end{align*}
More explicitly, this gives the lower bound
\[
 D^2L_J(G)[H,H]\ge
 (1-\|C_0\|_{\mathrm{op}}^2)\|E\|_F^2.
\]
In the preceding three-step chain, the second inequality is the
Hilbert--Schmidt Cauchy--Schwarz bound $|\Tr X^2|\le\normF X^2$,
applied to $X=C_0E$. Thus, the Hessian is nonnegative
on the positive definite cone, proving convexity.
If $J$ is invertible, then $A_0$ is invertible and
$\normop{C_0}<1$. The last bound is consequently
strict for every nonzero $H$, proving strict convexity.
\end{proof}

\subsection{Convexity of the entropy difference}

For a positive semidefinite matrix $X$ of arbitrary
trace, let $\Hraw(X)=-\Tr X\ln X$, with $0\ln0=0$.
For states, $\Hraw(X)=(\ln2)S(X)$. Given positive
semidefinite $d\times d$ matrices $P,Q,R$, define
\begin{align}
 Z(P,Q,R)&=\begin{pmatrix}P+Q&\sqrt Q\sqrt R\\\sqrt R\sqrt Q&R\end{pmatrix},
 \label{eq:Z-definition}\\
 \cF(P,Q,R)&=\Hraw(Z(P,Q,R))-\Hraw(P)-\Hraw(Q)-\Hraw(R).
 \label{eq:F-definition}
\end{align}
Positivity of $Z$ follows from the factorization
\[
 Z=L^\dagger L,\qquad
 L=\begin{pmatrix}\sqrt P&0\\\sqrt Q&\sqrt R\end{pmatrix}.
\]
In particular, $\Tr Z=\Tr P+\Tr Q+\Tr R$.

For positive semidefinite $X,Y$ of the same dimension
and trace, the entropy difference has the representation
\begin{equation}
 \Hraw(X)-\Hraw(Y)=\int_0^\infty
 [\ln\det(X+sI)-\ln\det(Y+sI)]\,ds.
 \label{eq:entropy-integral}
\end{equation}
Diagonalizing both matrices reduces this identity to
\[
 \int_0^L\ln(s+x)\,ds
 =(L+x)\ln(L+x)-L-x\ln x.
\]
Equal dimensions and traces cancel the divergent
upper-endpoint terms as $L\to\infty$. The integrand
is $O(s^{-2})$ at infinity, while zero eigenvalues
produce only integrable logarithmic singularities
at the lower endpoint.

Apply \eqref{eq:entropy-integral} to $Z\oplus0_d$
and $P\oplus Q\oplus R$, which have the same trace
and dimension $3d$. This gives
\begin{equation}
 \cF(P,Q,R)=\int_0^\infty
 \ln\frac{\det(Z+sI)\,s^d}
 {\det(P+sI)\det(Q+sI)\det(R+sI)}\,ds.
 \label{eq:F-integral}
\end{equation}
For $s>0$, define
\begin{equation}
 G_s=\begin{pmatrix}P+sI&0\\0&R+sI\end{pmatrix},\qquad
 J_s=\begin{pmatrix}0&\sqrt s\sqrt Q\\-\sqrt s\sqrt Q&0\end{pmatrix}.
 \label{eq:GsJs}
\end{equation}
The Schur complement of the lower-right block yields
\begin{align}
 \det(Z+sI)
 &=\det(R+sI)
 \det\!\left(P+Q+sI-\sqrt Q\,R(R+sI)^{-1}\sqrt Q\right)\notag\\
 &=\det(R+sI)
 \det\!\left(P+sI+s\sqrt Q(R+sI)^{-1}\sqrt Q\right)\notag\\
 &=\det(G_s+J_s).
 \label{eq:schur-determinant}
\end{align}
Only $R$ and $(R+sI)^{-1}$ have been commuted.
The final equality follows from the Schur complement
of $G_s+J_s$; the opposite signs in its off-diagonal
blocks give the positive correction. In particular,
both determinants are positive real. The integrand
in \eqref{eq:F-integral} can therefore be written as
\begin{equation}
 L_{J_s}(G_s)-\ln\det(I+Q/s).
 \label{eq:integrand-convex}
\end{equation}
With $Q$ fixed, $J_s$ is fixed and skew-Hermitian,
whereas $G_s$ depends affinely on $(P,R)$.
\Cref{lem:logdet} makes the integrand convex in this
pair. Integrating over $[\varepsilon,L]$ preserves
convexity, and convergence of the integral permits
$\varepsilon\to 0$ and $L\to\infty$.
Hence
\begin{equation}
 (P,R)\longmapsto\cF(P,Q,R)\quad\text{is convex for fixed }Q.
 \label{eq:fixed-middle-convexity}
\end{equation}
The shifts by $sI$ keep $G_s$ positive definite,
including for singular $P,Q,R$. If $Q>0$, each
$J_s$ is invertible, so the integrand is strictly
convex in $(P,R)$. Integrating its strict Jensen
inequality shows that $\cF$ is strictly convex
in this pair whenever $Q>0$.

\subsection{Dephasing and scalar reduction}

For nonnegative scalars, write
$f(x,y,z)=\cF([x],[y],[z])$, where $[x]$ denotes
the one-dimensional matrix with entry $x$.
We claim that
\begin{equation}
 \cF(P,Q,R)\geq f(\Tr P,\Tr Q,\Tr R).
 \label{eq:trace-compression}
\end{equation}
First assume $Q>0$, and choose an eigenbasis in
which $Q=\diag(q_1,\ldots,q_d)$. Dephasing
$\Delta$ in this basis is the average of
conjugations by
\[
 U_\epsilon=\diag(\epsilon_1,\ldots,\epsilon_d),
 \qquad\epsilon_j\in\{-1,1\}.
\]
These unitaries fix $Q$. Since $\cF$ is invariant
under simultaneous unitary conjugation, convexity
in $(P,R)$ gives
\begin{equation}
 \cF(P,Q,R)\geq\cF(\Delta(P),Q,\Delta(R))
 =\sum_j f(p_j,q_j,r_j),
 \label{eq:F-pinching}
\end{equation}
where $p_j,r_j$ are the diagonal entries of $P,R$.
The equality follows by grouping $Z$ into scalar
$2\times2$ blocks.

Cancellation of the trace terms makes $f$ homogeneous:
$f(cx,cy,cz)=cf(x,y,z)$ for $c\ge0$. Moreover,
$\phi(x,z)=f(x,1,z)$ is convex by
\eqref{eq:fixed-middle-convexity}. Set
$p_\Sigma=\sum_jp_j$, $q_\Sigma=\sum_jq_j$, and
$r_\Sigma=\sum_jr_j$. Homogeneity and Jensen's
inequality with weights $q_j/q_\Sigma$ yield
\begin{align}
 \sum_{j}f(p_j,q_j,r_j)
 &=q_{\Sigma}\sum_j\frac{q_j}{q_{\Sigma}}
   \phi\!\left(\frac{p_j}{q_j},\frac{r_j}{q_j}\right)\notag\\
 &\geq q_{\Sigma}\phi\!\left(\frac{p_{\Sigma}}{q_{\Sigma}},\frac{r_{\Sigma}}{q_{\Sigma}}\right)
 =f(p_{\Sigma},q_{\Sigma},r_{\Sigma}).
 \label{eq:scalar-perspective}
\end{align}
Combining \eqref{eq:F-pinching} with
\eqref{eq:scalar-perspective} proves
\eqref{eq:trace-compression} for $Q>0$.

For general positive semidefinite arguments, add
$\delta I$ to each of $P,Q,R$ and let
$\delta\to 0$. Continuity of the matrix
square root and of $\Hraw$ in fixed dimension
extends the inequality to the boundary, without
division by a zero eigenvalue of $Q$.

\subsection{Polar decomposition and normalization}

Suppose first that all blocks are square and have
the same size. Choose polar decompositions
\[
 B=U_B\sqrt{B^\dagger B},\qquad
 D=U_D\sqrt{D^\dagger D},\qquad W=U_B^\dagger U_D.
\]
Extend the polar factors to unitaries on any
null spaces, and put
\[
 P=A^\dagger A,\qquad Q=B^\dagger B,\qquad
 R=WD^\dagger DW^\dagger.
\]
Then
\begin{equation}
 (I\oplus W)T^\dagger T(I\oplus W^\dagger)=Z(P,Q,R).
 \label{eq:polar-gauge}
\end{equation}
To check the off-diagonal block, note that
\[
 B^\dagger DW^\dagger
 =\sqrt Q\,W\sqrt{D^\dagger D}\,W^\dagger
 =\sqrt Q\sqrt R.
\]
The diagonal blocks follow directly, and the
other off-diagonal block is the adjoint.
The polar factors of $B$ and $D$ are independent.

The matrices $P,Q,R$ have traces
$\alpha,\beta,\gamma$. Their normalized entropies
coincide with those of $AA^\dagger/\alpha$,
$BB^\dagger/\beta$, and $DD^\dagger/\gamma$,
respectively, whenever the weights are nonzero.
For example, for $\alpha>0$,
\[
 \Hraw(P)=-\alpha\ln\alpha+\alpha(\ln2)S(P/\alpha).
\]
The scalar matrix $Z(\alpha,\beta,\gamma)$ has
trace one and determinant $\alpha\gamma$, so
its entropy in bits is
$e(2\sqrt{\alpha\gamma})$.
Substituting into \eqref{eq:trace-compression}
and cancelling the scalar weight entropies gives
\begin{align}
 &\cF(P,Q,R)-f(\alpha,\beta,\gamma)\notag\\
 &\quad=(\ln2)\bigl[S(TT^\dagger)-e(2\sqrt{\alpha\gamma})
 -\alpha S(P/\alpha)-\beta S(Q/\beta)-\gamma S(R/\gamma)\bigr]
 \geq0.
 \label{eq:normalization-cancellation}
\end{align}
This proves \eqref{eq:triangle-bound} for square
blocks with positive weights.

For rectangular blocks, choose a common dimension
at least $\max\{r_1,r_2,c_1,c_2\}$. Let $E_1,E_2$
be isometric embeddings of the row spaces and
$F_1,F_2$ embeddings of the column spaces into
that dimension. Then
\[
 \widetilde T=(E_1\oplus E_2)T(F_1\oplus F_2)^\dagger
\]
has square blocks and the same nonzero singular
values as $T$. Each embedded block also preserves
its nonzero singular values and Frobenius norm.
The square-block inequality therefore proves
the rectangular case.

If a weight vanishes, its block is zero. In fixed
dimension, $0\le S(\rho)\le\log_2d$ ensures
that the weighted entropy tends to zero with
the weight. Continuity includes these cases
and completes the proof of \cref{thm:triangle}.

\subsection{Equality conditions}
\begin{proposition}[Equality for an invertible middle block]
\label[proposition]{prop:triangle-equality}
Suppose $A,B,D$ are square blocks of the same size, $B$ is invertible,
and $\alpha,\beta,\gamma>0$. Equality in \eqref{eq:triangle-bound}
holds if and only if
\begin{equation}\label{eq:triangle-equality}
 A^\dagger A=\frac\alpha\beta B^\dagger B,
 \qquad DD^\dagger=\frac\gamma\beta BB^\dagger.
\end{equation}
Equivalently, there are block-diagonal row and column unitaries
$U,V$ and a matrix $R_0$ with $\normF{R_0}=1$ such that
$UTV=T_{\mathrm{sc}}\otimes R_0$.
\end{proposition}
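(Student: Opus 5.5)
The plan is to trace equality back through the two inequalities in the proof of \cref{thm:triangle}, using strict convexity at each step. With square blocks of equal size and positive weights, \eqref{eq:normalization-cancellation} shows that equality in \eqref{eq:triangle-bound} is exactly equality in \eqref{eq:trace-compression}. Here $P=A^\dagger A$, $Q=B^\dagger B$ and $R=WD^\dagger DW^\dagger$ are obtained from the polar gauge \eqref{eq:polar-gauge}. Since $B$ is invertible, $Q>0$, so no $\delta$-regularization is needed. Moreover, the fixed-$Q$ convexity \eqref{eq:fixed-middle-convexity} is strict in $(P,R)$. This strictness is what forces rigidity.

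\emph{Step 1: dephasing equality.} I would strengthen the dephasing argument so that it does not depend on the choice of eigenbasis of $Q$. Let $\mathcal U_Q$ be the group of unitaries commuting with $Q$, and let $\mathcal E$ be the Haar average of conjugations over $\mathcal U_Q$. Then $\mathcal E$ is the pinching onto the eigenspaces of $Q$, followed by the trace-normalized projection within each eigenspace. So $\mathcal E(P)=\sum_\lambda \frac{\Tr(\Pi_\lambda P)}{\dim\Pi_\lambda}\Pi_\lambda$.

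Invariance of $\cF$ under simultaneous conjugation, together with strict convexity, gives $\cF(P,Q,R)\ge\cF(\mathcal E(P),Q,\mathcal E(R))$. Equality holds only if $UPU^\dagger=P$ and $URU^\dagger=R$ for all $U\in\mathcal U_Q$. The Jensen step against the Haar measure is strict, because a nonconstant integrand on a set of positive measure contradicts strict convexity. This is the step I expect to need the most care. It requires passing strict convexity through an integral rather than a finite average, or, if the measure-theoretic version is awkward, using finitely many unitaries, namely the signs together with permutations of basis vectors inside each eigenspace. Equality in Step 1 therefore forces $P$ and $R$ to lie in the center of the commutant of $Q$, that is, to be functions of $Q$. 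Then $P$, $Q$ and $R$ are simultaneously diagonal, with $p_j$ and $r_j$ constant on each eigenspace of $Q$.

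\emph{Step 2: scalar Jensen equality.} Next I would show that $\phi(x,z)=f(x,1,z)$ is strictly convex, by the same strictness statement applied with $d=1$ and $Q=[1]>0$. Equality in \eqref{eq:scalar-perspective}, with weights $q_j/q_\Sigma>0$, then forces $p_j/q_j=\alpha/\beta$ and $r_j/q_j=\gamma/\beta$ for all $j$. Combining this with Step 1 gives $P=\frac\alpha\beta Q$ and $R=\frac\gamma\beta Q$.

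The first identity is $A^\dagger A=\frac\alpha\beta B^\dagger B$. For the second, from $W D^\dagger D W^\dagger=\frac\gamma\beta B^\dagger B$ and $W=U_B^\dagger U_D$,
\[
DD^\dagger=U_D D^\dagger D U_D^\dagger=U_B\Bigl(\tfrac\gamma\beta B^\dagger B\Bigr)U_B^\dagger=\tfrac\gamma\beta BB^\dagger,
\]
which is \eqref{eq:triangle-equality}.

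\emph{Step 3: converse and the tensor form.} Suppose \eqref{eq:triangle-equality} holds. Polar decomposition gives $A=\sqrt{\alpha/\beta}\,V_A|B|$ with $|B|=\sqrt{B^\dagger B}$, and $D=\sqrt{\gamma/\beta}\,|B^\dagger|V_D$ with $|B^\dagger|=\sqrt{BB^\dagger}$, for some unitaries $V_A$ and $V_D$. Write $B=U_B|B|$, so that $|B^\dagger|=U_B|B|U_B^\dagger$. Set $R_0=|B|/\sqrt\beta$, and take block-diagonal unitaries $U=V_A^\dagger\oplus U_B^\dagger$ on the rows and $V=I\oplus V_D^\dagger U_B$ on the columns. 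Then $UTV=T_{\mathrm{sc}}\otimes R_0$. Conversely, any $T$ of this form attains equality, as already observed after \cref{thm:triangle}, because block-diagonal unitaries preserve $S(TT^\dagger)$ and every block entropy.
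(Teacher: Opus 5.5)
Your proposal is correct and follows the paper's proof: strict convexity in $(P,R)$ at fixed $Q>0$ makes the dephasing step rigid, strict convexity of $\phi(x,z)=f(x,1,z)$ with all weights $q_j/q_\Sigma>0$ forces $p_j/q_j=\alpha/\beta$ and $r_j/q_j=\gamma/\beta$, and the converse is the same polar-decomposition computation. The Haar-over-commutant strengthening in Step 1 is not needed. The paper dephases only by sign unitaries in one fixed eigenbasis of $Q$, which already forces $P,R$ to be diagonal there. Step 2 then pins down every diagonal entry, so $P=(\alpha/\beta)Q$ and $R=(\gamma/\beta)Q$ even when $Q$ has degenerate eigenvalues. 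You can therefore drop the continuous-measure Jensen argument entirely; your finite fallback is also valid, just unnecessary.
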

\begin{proof}
Since $Q=B^\dagger B>0$, the dephasing inequality
\eqref{eq:F-pinching} uses strict convexity.
Equality holds only if all pairs
$(U_\epsilon P U_\epsilon^\dagger,
U_\epsilon R U_\epsilon^\dagger)$ coincide.
Thus $P$ and $R$ must be diagonal in the chosen
eigenbasis of $Q$.

Strict convexity of $\phi(x,z)=f(x,1,z)$ makes
equality in \eqref{eq:scalar-perspective}
equivalent to $p_j/q_j=\alpha/\beta$ and
$r_j/q_j=\gamma/\beta$ for every $j$.
Hence $P=(\alpha/\beta)Q$ and
$R=(\gamma/\beta)Q$. Recalling
$R=WD^\dagger DW^\dagger$, these are precisely
the relations in \eqref{eq:triangle-equality}.

Conversely, these proportionality relations saturate
both inequalities. To see the tensor-product form directly,
write $B=U_B\sqrt Q$. The first relation gives
$A=\sqrt{\alpha/\beta}\,U_A\sqrt Q$; the second gives
$D=\sqrt{\gamma/\beta}\,U_B\sqrt Q\,V_D$ for suitable
unitaries. Multiplying the block rows by $U_A^\dagger,U_B^\dagger$
and the second block column by $V_D^\dagger$ yields
$T_{\mathrm{sc}}\otimes\sqrt{Q/\beta}$, as claimed.
\end{proof}
For singular $B$, the tensor-product form remains
sufficient for equality. The proposition asserts
necessity only for invertible $B$.

\section{Scalar convexity and two-qubit concurrence}
\label[appendix]{app:concurrence}

\subsection{Convexity of the binary-spectrum entropy}
\label{app:scalar}

Let $0<c<1$ and set $r=\sqrt{1-c^2}$.
Differentiating \eqref{eq:E-function} gives
\[
 e'(c)=\frac{c}{2r\ln2}\ln\frac{1+r}{1-r}>0,
 \qquad
 e''(c)=\frac{\ln((1+r)/(1-r))-2r}{2r^3\ln2}\ge0.
\]
The sign of the second derivative follows from
\[
 \frac12\ln\frac{1+r}{1-r}
 =\int_0^r\frac{du}{1-u^2}\ge r.
\]
By continuity, $e$ is increasing and convex
on $[0,1]$. Since $g_p(q)=e(2\sqrt{p(1-p)}q)$,
it is convex in $q$ for every $p\in[0,1]$,
and vanishes identically at $p=0,1$.

\subsection{Concurrence for a product vector in the kernel}\label{app:qubit}

A two-qubit state satisfying $\rho\ket{01}=0$
has the following form in the computational basis:
\begin{equation}
 \rho=\begin{pmatrix}
 a&0&x&z\\0&0&0&0\\x^*&0&b&y\\z^*&0&y^*&d
 \end{pmatrix},\qquad a+b+d=1.
 \label{eq:kernel-state-matrix}
\end{equation}
Positivity implies $|z|\le\sqrt{ad}$; the
entries $x,y$ can be nonzero. Define the spin flip by
\[
 \widetilde\rho=(\sigma_y\otimes\sigma_y)\rho^*
                  (\sigma_y\otimes\sigma_y),\qquad
 \sigma_y=\begin{pmatrix}0&-i\\i&0\end{pmatrix}.
\]
Wootters' formula states \cite{wootters}
\begin{equation}
 \EF(\rho)=e(c(\rho)),\qquad
 c(\rho)=\max\{0,s_1-s_2-s_3-s_4\},
 \label{eq:wootters}
\end{equation}
where $s_1\ge s_2\ge s_3\ge s_4\ge0$ are the
square roots of the eigenvalues of
$\rho\widetilde\rho$.

The $01$ row and the $10$ column of
$\rho\widetilde\rho$ vanish. Expanding its
characteristic polynomial along them leaves two
zero eigenvalues and the two eigenvalues of
\[
 \begin{pmatrix}
 ad+|z|^2&2az\\2dz^*&ad+|z|^2
 \end{pmatrix}.
\]
These are $(\sqrt{ad}+|z|)^2$ and
$(\sqrt{ad}-|z|)^2$.
Using $|z|\le\sqrt{ad}$ gives
\[
 (s_1,s_2,s_3,s_4)
 =(\sqrt{ad}+|z|,\sqrt{ad}-|z|,0,0),
 \qquad c(\rho)=2|z|.
\]
Equation \eqref{eq:wootters} now gives
\eqref{eq:qubit-kernel}. The calculation includes
singular states and arbitrary allowed $x,y$.
\section{Uniform Kraus representations}\label[appendix]{app:kraus}
We establish the uniform formation bound used
to prove subadditivity of channel entanglement
of formation. The argument follows the minimax
construction of \cite[Lemma~14]{op:BBCW}, using
compactness to reduce to finitely many Kraus
representations. For amplitude damping,
\cref{prop:binary-instrument} gives an explicit
representation attaining the bound.

Let $\mathbf K=\{K_i\}$ be a finite Kraus
representation of $\cN$, with $f_{\mathbf K}$
defined by \eqref{eq:instrument-cost}. For a
purification $\psi^\tau$ of $\tau$, Kraus freedom
implies
\begin{equation}\label{eq:kraus-inf}
 \inf_{\mathbf K}f_{\mathbf K}(\tau)
          =\EF((\id\otimes\cN)(\psi^\tau)).
\end{equation}
Each Kraus representation gives a pure-state
ensemble of the reference--output state. Conversely,
the dilation environment purifies this state,
so every pure-state ensemble is induced by a
rank-one measurement on the environment. The
resulting linear combinations of Kraus operators
represent $\cN$; unused environmental directions
can be completed without changing the ensemble
for the chosen input.

\begin{lemma}[Uniform Kraus representation]\label[lemma]{lem:uniform-kraus}
For every finite-dimensional channel,
\begin{equation}\label{eq:kraus-minimax}
 \Ech(\cN)=\inf_{\mathbf K}\max_{\tau\in\cD(X)}f_{\mathbf K}(\tau).
\end{equation}
Equivalently, for every $\delta>0$, a single finite representation
satisfies $f_{\mathbf K}(\tau)\le\Ech(\cN)+\delta$ for all inputs.
\end{lemma}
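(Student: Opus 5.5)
The plan is to prove the two inequalities in \eqref{eq:kraus-minimax} separately and then realize the infimum through a minimax argument, following \cite[Lemma~14]{op:BBCW}.

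\emph{Easy direction.} For every finite representation $\mathbf K$ and every input $\tau$ with purification $\psi^\tau$, \eqref{eq:kraus-inf} gives $f_{\mathbf K}(\tau)\ge\EF((\id\otimes\cN)(\psi^\tau))$. Every pure $\psi_{RX}$ is a purification of its marginal $\tau=\psi_X$ up to a local isometry on $R$, and $\EF$ is invariant under such isometries. Taking the maximum over $\tau$ therefore gives $\max_\tau f_{\mathbf K}(\tau)\ge\Ech(\cN)$ for each $\mathbf K$, and hence the same bound for the infimum.

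\emph{Hard direction.} Fix $\delta>0$; the goal is a single finite $\mathbf K$ with $f_{\mathbf K}(\tau)\le\Ech(\cN)+\delta$ for all $\tau$. The steps are as follows.

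\begin{enumerate}
\item By \eqref{eq:kraus-inf}, each $\tau$ has a representation $\mathbf K_\tau$ with $f_{\mathbf K_\tau}(\tau)<\Ech(\cN)+\delta/2$. Each $f_{\mathbf K}$ is continuous, as noted after \eqref{eq:instrument-cost}, so this strict inequality persists on an open neighbourhood of $\tau$. Compactness of $\cD(X)$ then yields finitely many representations $\mathbf K^{(1)},\dots,\mathbf K^{(m)}$ such that every $\tau$ satisfies $\min_k f_{\mathbf K^{(k)}}(\tau)<\Ech(\cN)+\delta/2$.

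\item Replace the pointwise minimum by a mixture. For a probability vector $\lambda$, define the combined representation $\mathbf K^\lambda=\{\sqrt{\lambda_k}K^{(k)}_i\}_{k,i}$. It represents $\cN$, and because each branch is only rescaled, its normalized branch outputs are unchanged. Its branch weights are $\lambda_k t^{(k)}_i$, so
\[
 f_{\mathbf K^\lambda}(\tau)=\sum_k\lambda_k f_{\mathbf K^{(k)}}(\tau),
\]
which is linear in $\lambda$.

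\item Apply a minimax theorem (von Neumann/Sion) to $(\lambda,\tau)\mapsto\sum_k\lambda_kf_{\mathbf K^{(k)}}(\tau)$. This function is linear in $\lambda$ on the simplex and concave and continuous in $\tau$ on the compact convex set $\cD(X)$. Hence
\[
 \min_\lambda\max_\tau\sum_k\lambda_kf_{\mathbf K^{(k)}}(\tau)
 =\max_\tau\min_\lambda\sum_k\lambda_kf_{\mathbf K^{(k)}}(\tau)
 =\max_\tau\min_kf_{\mathbf K^{(k)}}(\tau)\le\Ech(\cN)+\delta/2.
\]
The optimal $\lambda$ therefore gives one finite representation satisfying the uniform bound, which proves the equivalent formulation. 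Letting $\delta\to0$ gives \eqref{eq:kraus-minimax}.
\end{enumerate}

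The main obstacle is justifying the minimax exchange. It requires concavity of each $f_{\mathbf K}$ in $\tau$, which I would take from the entropy-perspective remark after \eqref{eq:instrument-cost} but double-check. Each term $t_sS(L_s\tau L_s^\dagger/t_s)$ equals $\Hraw(L_s\tau L_s^\dagger)/\ln2+h$-type corrections. Summing these terms gives the conditional entropy $S(B|S)$ of a classical–quantum state that is affine in $\tau$, and conditional entropy is concave, which settles it.

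A secondary check is that \eqref{eq:kraus-inf} attains values within $\delta/2$ using finite representations. This holds because $\EF$ in fixed dimension is attained by finitely many (at most $d^2$) pure components, and each such ensemble comes from a finite rank-one environment measurement.
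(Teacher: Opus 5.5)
Your proposal is correct and follows the paper's proof essentially step for step. The steps are the lower bound from \eqref{eq:kraus-inf} (which the paper phrases as the trivial minimax inequality applied to $\max_\tau\inf_{\mathbf K}f_{\mathbf K}(\tau)=\Ech(\cN)$), then a finite subcover from continuity and compactness, then the rescaled union $\{\sqrt{\lambda_k}K^{(k)}_i\}$ whose $f$ is linear in $\lambda$, and finally a finite-dimensional minimax exchange using concavity of $f_{\mathbf K}$ in $\tau$. Your additional checks are details the paper leaves implicit, and they are sound: purifications with $R\cong X$ are related by local isometries on $R$, concavity follows because $f_{\mathbf K}$ is the conditional entropy $S(B|S)$ of a cq state affine in $\tau$, and optimal formation ensembles are finite.
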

\begin{proof}
Write $E=\max_\tau\inf_{\mathbf K}f_{\mathbf K}(\tau)$.
By \eqref{eq:kraus-inf}, $E=\Ech(\cN)$.
The minimax inequality gives
$E\le\inf_{\mathbf K}\max_\tau f_{\mathbf K}(\tau)$.

For the converse, fix $\delta>0$. At each
$\tau$, choose a finite representation with
$f_{\mathbf K}(\tau)<E+\delta$.
Continuity preserves this bound in a neighborhood
of $\tau$, and compactness of $\cD(X)$ gives
a finite subcover. Its representations
$\mathbf K^{(1)},\ldots,\mathbf K^{(m)}$ satisfy
\[
 \min_{1\le j\le m} f_{\mathbf K^{(j)}}(\tau)<E+\delta
 \quad\text{for every }\tau.
\]
The objective below is continuous and concave
in $\tau$ and affine in the probability vector
$\boldsymbol w$. The finite-dimensional minimax
theorem therefore gives
\begin{equation}\label{eq:finite-minimax}
 \min_{\boldsymbol w\in\Delta_m}\max_\tau
       \sum_jw_jf_{\mathbf K^{(j)}}(\tau)
 =\max_\tau\min_j f_{\mathbf K^{(j)}}(\tau)
 \le E+\delta.
\end{equation}
For a minimizing $\boldsymbol w$, the operators
$\{\sqrt{w_j}K_i^{(j)}\}_{j,i}$ form a finite
Kraus representation of $\cN$. Its conditional
entropy is the weighted sum in
\eqref{eq:finite-minimax}. Taking
$\delta\to 0$ proves \eqref{eq:kraus-minimax}.
\end{proof}

\begin{lemma}[Subadditivity of channel entanglement of formation]
\label[lemma]{lem:channel-subadditivity}
For any finite-dimensional channels $\cN,\cM$,
\begin{equation}\label{eq:channel-subadditivity}
 \Ech(\cN\otimes\cM)\le\Ech(\cN)+\Ech(\cM).
\end{equation}
\end{lemma}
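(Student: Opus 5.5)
The plan is to combine \cref{lem:uniform-kraus} with a product Kraus representation and the conditional-entropy argument already used in the proof of \cref{prop:binary-instrument}. Fix $\delta>0$. By \cref{lem:uniform-kraus}, choose finite Kraus representations $\mathbf K=\{K_i\}$ of $\cN$ and $\mathbf L=\{L_j\}$ of $\cM$ such that
\[
 f_{\mathbf K}(\tau_X)\le\Ech(\cN)+\delta
 \quad\text{and}\quad
 f_{\mathbf L}(\tau_Y)\le\Ech(\cM)+\delta
\]
for all inputs $\tau_X\in\cD(X)$ and $\tau_Y\in\cD(Y)$. The product family $\mathbf K\otimes\mathbf L=\{K_i\otimes L_j\}$ is a finite Kraus representation of $\cN\otimes\cM$.

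Next I would show that for every joint input $\tau_{XY}$,
\[
 f_{\mathbf K\otimes\mathbf L}(\tau_{XY})\le f_{\mathbf K}(\tau_X)+f_{\mathbf L}(\tau_Y).
\]
Apply the product instrument and record the outcomes $(i,j)$ in classical registers $S_1S_2$. Then $f_{\mathbf K\otimes\mathbf L}(\tau_{XY})=S(B B'|S_1S_2)$. Subadditivity of conditional entropy gives
\[
 S(BB'|S_1S_2)\le S(B|S_1S_2)+S(B'|S_1S_2).
\]
Removing conditioning gives $S(B|S_1S_2)\le S(B|S_1)$ and $S(B'|S_1S_2)\le S(B'|S_2)$. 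The marginal $BS_1$ is exactly the output of the local instrument $\mathbf K$ on $\tau_X$, because the instrument on $Y$, summed over its outcomes, is trace preserving. Hence $S(B|S_1)=f_{\mathbf K}(\tau_X)$, and similarly $S(B'|S_2)=f_{\mathbf L}(\tau_Y)$. This is the same step as \eqref{eq:binary-product}. The only care needed is to handle zero-weight branches consistently, and the convention in \eqref{eq:instrument-cost} already does this.

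Finally, let $\psi_{RXY}$ be any pure input with $R\cong XY$. By \eqref{eq:kraus-inf}, or directly because the Kraus operators induce a pure-state decomposition of the reference--output state, we have
\[
 \EF\bigl((\id_R\otimes\cN\otimes\cM)(\psi)\bigr)\le f_{\mathbf K\otimes\mathbf L}(\tau_{XY})\le\Ech(\cN)+\Ech(\cM)+2\delta.
\]
Maximizing over $\psi$ and letting $\delta\to0$ proves \eqref{eq:channel-subadditivity}.

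The main obstacle is not the entropy inequality but \cref{lem:uniform-kraus}. A single Kraus representation must be nearly optimal uniformly over all inputs, since the representation achieving the infimum in \eqref{eq:kraus-inf} generally depends on $\tau$. That uniformity is what lets the product representation control arbitrarily correlated joint inputs. With the lemma assumed, the remaining steps are routine.
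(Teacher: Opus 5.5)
Your proposal is correct and follows essentially the same route as the paper. Both proofs take near-optimal uniform Kraus representations from \cref{lem:uniform-kraus}, apply the product instrument with classical outcome registers, and bound $S(BB'|S_1S_2)$ via subadditivity and removal of conditioning, identifying the local marginals by trace preservation. Both then use the induced pure-state decomposition of the reference--output state to bound $\EF$, maximize over inputs, and let $\delta\to0$.
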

\begin{proof}
Choose representations $\mathbf K,\mathbf L$
whose uniform bounds are within $\delta$ of
$\Ech(\cN),\Ech(\cM)$, respectively. For a
joint input $\tau_{XY}$, apply the product
instrument and store its outcomes in classical
registers $I,J$. Subadditivity and strong
subadditivity of entropy give
\begin{align}
 f_{\mathbf K\otimes\mathbf L}(\tau_{XY})
   &=S(BB'|IJ)\nonumber\\
   &\le S(B|IJ)+S(B'|IJ)\nonumber\\
   &\le S(B|I)+S(B'|J)
     =f_{\mathbf K}(\tau_X)+f_{\mathbf L}(\tau_Y).
       \label{eq:product-kraus-bound}
\end{align}
Trace preservation of the other channel ensures
that $BI$ and $B'J$ are the corresponding local
instrument outputs. Applied to a purification
of $\tau_{XY}$, the product Kraus operators
give a pure-state decomposition of the joint
reference--output state. Its entanglement of
formation is thus at most
$f_{\mathbf K\otimes\mathbf L}(\tau_{XY})$.
Maximizing over $\tau$ and letting
$\delta\to 0$ proves the claim.
\end{proof}
The matching lower bound for the channels considered
in this work is supplied by strong Choi-state additivity
in \cref{thm:two-kraus-unified,cor:pure-output}, covering
all qubit-to-qubit channels admitting a pure output.

\section{Established operational rates and asymptotic comparisons}\label[appendix]{app:operational}
This appendix records the rates plotted in
\cref{fig:ad-rates} and their high-damping limits.
The quantum and entanglement-assisted capacity
formulas, as well as the reverse Shannon theorem,
are established results. The additivity theorems
in the main text identify the single-use Holevo
information and the Choi-state entanglement of
formation with the corresponding unassisted
capacity and state and parallel channel costs.


\subsection{Quantum and entanglement-assisted capacities}
The unassisted quantum capacity $Q(\cN)$ is
the supremum of entanglement-transmission rates
achievable with vanishing error. An encoder,
$n$ channel uses, and a decoder must preserve
a maximally entangled state with an inaccessible
reference. The rate is measured in qubits per
use, without preshared entanglement or auxiliary
classical communication. The entanglement-assisted
classical capacity $C_E(\cN)$ measures classical
bits per use when shared entanglement is free.

The resource convention also matters for
simulation. The cost $\Epar$ counts ebits with
classical communication free. If forward
classical communication is counted and
unrestricted shared entanglement is free,
write the parallel simulation cost as
$C_{\mathrm{sim}}^{\mathrm{cl,EA}}(\cN)$.
The universal quantum reverse Shannon theorem
identifies it with
\begin{equation}\label{eq:qrst}
 C_{\mathrm{sim}}^{\mathrm{cl,EA}}(\cN)=C_E(\cN)
       =\max_{\rho_X}I(R;B)_{(\id\otimes\cN)(\psi^\rho)},
\end{equation}
where $\psi^\rho$ purifies $\rho$
\cite{op:BSST,op:BCR,op:BDHSW}. The free shared
resource here includes the entanglement spread
needed for uniform simulation and is not restricted
to a fixed tensor power of ebits. These two
simulation costs optimize different resources;
a protocol minimizing one need not minimize the other.

For the diagonal input $\tau_q$ in \eqref{eq:tauq}, denote
its coherent information and reference--output mutual
information by
\begin{equation}\label{eq:IQ-G}
 \begin{split}
 I_p(q)&=h_2((1-p)q)-h_2(pq),\\
 G_p(q)&=h_2(q)+h_2((1-p)q)-h_2(pq).
 \end{split}
\end{equation}
The known capacity formulas are
\begin{align}
 Q(\cA_p)&=
 \begin{cases}
 \displaystyle\max_{0\le q\le1}I_p(q),&0\le p\le\tfrac12,\\
 0,&\tfrac12\le p\le1,
 \end{cases}\label{eq:AD-Q}\\
 C_E(\cA_p)=C_{\mathrm{sim}}^{\mathrm{cl,EA}}(\cA_p)
   &=\max_{0\le q\le1}G_p(q).\label{eq:AD-CE}
\end{align}
The quantum-capacity formula follows from
degradability and antidegradability
\cite{giovannetti-fazio,khatri,op:DS}; the
entanglement-assisted formula was evaluated in
\cite[Sec.~III.B]{op:BSST}.

To see these reductions, the complement in
\eqref{eq:dilation} is $\cA_p^c=\cA_{1-p}$,
and amplitude-damping channels compose as
$\cA_s\circ\cA_p=\cA_{p+s-ps}$.
For $p\le1/2$, the channel
$\cA_{(1-2p)/(1-p)}$ degrades the output to
the environment; for $p\ge1/2$,
$\cA_{(2p-1)/p}$ antidegrades the environment
to the output. Degradability gives the
single-letter coherent-information formula,
whereas antidegradability forces $Q=0$.

For degradable channels, coherent information
is concave in the input. Together with phase
covariance, this permits restriction to diagonal
inputs. Channel mutual information is concave
for every channel, giving the same reduction
for $C_E$. At the diagonal input $\tau_q$,
the purified dilation satisfies
\[
 S(R)=h_2(q),\qquad S(B)=h_2((1-p)q),\qquad S(E)=h_2(pq).
\]
Hence $I_p(q)=S(B)-S(E)$ and $G_p(q)=I(R;B)$,
which give \eqref{eq:AD-Q} and \eqref{eq:AD-CE}.
The simulation equality follows from
\eqref{eq:qrst}.

\subsection{Relations at a fixed input excitation}
We compare the classical capacity in \eqref{eq:main-capacity}
and the entanglement cost in \eqref{eq:main-ad-cost}
with the established rates above. The single-use Holevo
evaluation, the Choi-state formation formula, and the
channel-simulation upper bound come from
\cite{giovannetti-fazio,wootters,op:BBCW}. The additivity
results identify those expressions with the corresponding
asymptotic capacity and cost.

At zero damping,
\[
 \bigl(C(\cA_0),Q(\cA_0),\Epar(\cA_0),C_E(\cA_0)\bigr)
       =(1,1,1,2);
\]
at $p=1$, all four rates vanish. At $p=1/2$, scalar
optimization gives
\begin{equation}\label{eq:midpoint-values}
 \bigl(C(\cA_{1/2}),Q(\cA_{1/2}),
       \Epar(\cA_{1/2}),C_E(\cA_{1/2})\bigr)
       \simeq(0.47173,0,0.60088,1),
\end{equation}
where $C_E=1$ is exact because
$G_{1/2}(q)=h_2(q)$. The maximizing excitation
need not be the same for different rates.

At a common excitation $q$, the opposite-phase
binary ensemble has environment Holevo information
$F_{1-p}(q)$, since $g_{1-p}=g_p$. Thus
\begin{align}
 F_p(q)-F_{1-p}(q)&=I_p(q),\label{eq:fixed-input-diff}\\
 G_p(q)-F_p(q)&=h_2(q)-F_{1-p}(q)\ge0,\label{eq:EA-diff}\\
 G_p(q)+G_{1-p}(q)&=2h_2(q).\label{eq:EA-complement}
\end{align}
Data processing bounds the environment Holevo
information by the input ensemble's Holevo
information, giving \eqref{eq:EA-diff}.
Together with $F_{1-p}(q)\ge0$, the identities
imply $Q(\cA_p)\le C(\cA_p)\le C_E(\cA_p)$.
The equalities themselves hold at a common
input excitation; they do not generally survive
separate optimization of their terms.

\subsection{Asymptotic expansion in the high-damping limit}
Set $\varepsilon=1-p\to 0$ and
$L=\log_2(1/\varepsilon)$. Then
\begin{equation}\label{eq:high-damping}
 \begin{split}
 C(\cA_{1-\varepsilon})&=\tfrac14\varepsilon L+O(\varepsilon),\\
 \Epar(\cA_{1-\varepsilon})&=\tfrac14\varepsilon L+O(\varepsilon),\\
 C_E(\cA_{1-\varepsilon})&=\varepsilon L(1+o(1)).
 \end{split}
\end{equation}
Indeed, the smaller eigenvalue defining
$g_{1-\varepsilon}(q)$ is
$\varepsilon q^2+O(\varepsilon^2)$ uniformly
in $q\in[0,1]$. Using
$h_2(x)=x\log_2(1/x)+O(x)$ gives
\[
 F_{1-\varepsilon}(q)=\varepsilon q(1-q)L+O(\varepsilon)
\]
uniformly in $q$, since $q\log(1/q)$ and
$q^2\log(1/q)$ are bounded. Maximizing
$q(1-q)$ gives the first line of
\eqref{eq:high-damping}. The cost expansion
follows from
$(1-\sqrt{1-\varepsilon})/2
=\varepsilon/4+O(\varepsilon^2)$.

For the entanglement-assisted capacity, first
fix $q\in(0,1)$. Then
$G_{1-\varepsilon}(q)/(\varepsilon L)\to q$.
Taking $q\to 1$ after this limit gives
the lower bound on the optimized leading
coefficient. For the upper bound, uniformly
in $q$,
\[
 h_2(q)-h_2((1-\varepsilon)q)
 \le\varepsilon q\log_2\frac1{(1-\varepsilon)q}=O(\varepsilon),
\]
and $h_2(\varepsilon q)\le h_2(\varepsilon)$
for sufficiently small $\varepsilon$.
This proves the last line of
\eqref{eq:high-damping} and yields
\begin{equation}\label{eq:high-damping-ratios}
 \lim_{p\to 1}\frac{C_E(\cA_p)}{C(\cA_p)}=4,
 \qquad
 \lim_{p\to 1}\frac{\Epar(\cA_p)}{C(\cA_p)}=1.
\end{equation}
The factor four for $C_E/\chi$ was obtained in
\cite[Sec.~III.B and Fig.~7]{op:BSST}; the
capacity theorem identifies its denominator
with $C$. The second limit also uses the
entanglement-cost equality. Both are limits
of numerical rate ratios in different resource
models; the ratios are undefined at $p=1$.

For $1/2\le p<1$, the quantum capacity vanishes
while the parallel entanglement cost remains
positive. Since the transmission task is
unassisted and simulation allows LOCC, this
comparison alone does not imply irreversibility
under a common class of free operations.
A two-way-assisted distillation rate would
provide the corresponding transmission
benchmark. That rate, adaptive simulation,
and the joint classical-bit--ebit tradeoff
are not determined here.

\end{document}